\documentclass[11pt]{llncs}
\pagestyle{plain}

\usepackage{packages}
\usepackage{macros}
\usepackage{parskip}

\usepackage{todonotes}

\setcounter{tocdepth}{4}

\title{MIRA: a Digital Signature Scheme based on the MinRank problem and the MPC-in-the-Head paradigm}
\author{Nicolas Aragon \inst{1} \and Loïc Bidoux\inst{2} \and Jesús-Javier Chi-Domínguez\inst{2} \and Thibauld Feneuil\inst{3,4} \and\\
Philippe Gaborit\inst{5} \and Romaric Neveu\inst{5} \and Matthieu Rivain\inst{4}}
\institute{
  Naquidis Center, Talence, France \and
  Technology Innovation Institute, UAE \and
  Sorbonne Université, CNRS, INRIA, Institut de Mathématiques \\ de Jussieu-Paris Rive Gauche, Ouragan, Paris, France \and
  CryptoExperts, Paris, France \and
  University of Limoges, France
}
\begin{document}
\let\oldaddcontentsline\addcontentsline
\def\addcontentsline#1#2#3{}
\maketitle
\def\addcontentsline#1#2#3{\oldaddcontentsline{#1}{#2}{#3}}

\begin{abstract}
    We exploit the idea of \cite{F22} which proposes to build an efficient signature scheme based on a zero-knowledge proof of knowledge of a solution of a MinRank instance. The scheme uses the MPCitH paradigm, which is an efficient way to build ZK proofs. We combine this idea with another idea, the hypercube technique introduced in \cite{AGHHJY22}, which leads to more efficient MPCitH-based scheme. This new approach is more efficient than classical MPCitH, as it allows to reduce the number of party computation. This gives us a first scheme called MIRA-Additive. We then present an other scheme, based on low-threshold secret sharings, called MIRA-Threshold, which is a faster scheme, at the price of larger signatures. The construction of MPCitH using threshold secret sharing is detailed in \cite{FR22}. These two constructions allows us to be faster than classical MPCitH, with a size of signature around $5.6$kB with MIRA-Additive, and $8.3$kB with MIRA-Threshold. We detail here the constructions and optimizations of the schemes, as well as their security proofs. 
\end{abstract}
\tableofcontents
\newpage
\section{Introduction}
MIRA is a signature scheme designed to be secure against attacks from a quantum computer. The scheme is based on the MPC-in-the-Head paradigm and its security relies on the hardness to solve the MinRank problem. The underlying proof of knowledge uses symmetric functions as ingredients, such as hash functions and commitment schemes.

In section \ref{preliminaries}, we remind the reader some notations and definitions. We remind as well the MPC protocol we will use. In section \ref{specif}, we describe the two variants of the MPCitH scheme, one using additive secret sharing, the other using threshold secret sharing. Then, section \ref{parameters} explains the choice of parameters, and the obtained theoretical sizes. Section \ref{sec_proofs} deals with the security proofs of both schemes. Finally, section \ref{atk} is dedicated to the security of the scheme regarding to the Fiat-Shamir transform and the MinRank problem. 

In our schemes, the MPC protocol we use is the linearized-polynomial protocol on MinRank described in \cite{F22}. 
The hypercube MPCitH idea comes from \cite{AGHHJY22}, while the threshold MPCitH one from \cite{FR22}.

\section{Preliminaries} \label{preliminaries}
\subsection{Notations and Conventions}

Let $A$ a randomized algorithm. We write $y\leftarrow A(x)$ the output of the algorithm for the output $x$. If $S$ is a set, we write $x\sampler S$ the uniform sampling of a random element $x$ in $S$. We write $x\samples{s} S$ the pseudo-random sampling in $S$ with seed $s$.

We denote the set of integers between $1$ and $N$ by $\oneto{N}$, which can be shortened in $[N]$.

We denote by $\mathbb{F}_q$ the finite field of order $q$. We use bold letters to denote vectors or matrices (for example, $\bm{u}\in\mathbb{F}_q^n$ and $u\in\mathbb{F}_q$).

A function $\mu : \mathbb{N} \rightarrow \mathbb{R}$ is said negligible if, for every positive polynomial $p(\cdot)$, there exists an integer $N_p > 0$ such that for every $\lambda > N_p$, we have $|\mu(\lambda)| < 1/p(\lambda)$. When not made explicit, a negligible function in $\lambda$ is denoted $negl(\lambda)$ while a polynomial function in $\lambda$ is denoted $poly(\lambda)$. We further use the notation
poly($\lambda_1$, $\lambda_2$, ...) for a polynomial function in several variables.

Two distributions $\{D_\lambda\}_\lambda$ and $\{E_\lambda\}_\lambda$ indexed by a security parameter $\lambda$ are $(t, \epsilon)$-indistinguishable (where $t$ and $\epsilon$ are $\mathbb{N} \leftarrow \mathbb{R}$ functions) if, for any algorithm $\mathcal{A}$ running in time at most $t(\lambda)$ we have
$$\prb[\mathcal{A}^{D_\lambda} () = 1] - \prb[\mathcal{A}^{E_\lambda}() = 1] \le \epsilon(\lambda) $$
with $\mathcal{A}^{Dist}$ meaning that $\mathcal{A}$ has access to a sampling oracle of distribution $Dist$. 

The two distributions are said \begin{itemize}
\item computationally indistinguishable if $\epsilon \in \mathsf{negl}(\lambda)$ for every $t\in \mathsf{poly}(\lambda)$;
\item statistically indistinguishable if $\epsilon \in \mathsf{negl}(\lambda)$ for every unbounded $t$;
\item perfectly indistinguishable if $\epsilon=0$ for every unbounded $t$.
\end{itemize}
\subsubsection{Pseudorandom Generators}
\begin{definition}[Pseudorandom Generators]
Let $G : \{0,1\}^* \rightarrow \{ 0,1\}^*$, $\ell$ a polynomial such that $G(s) \in \{ 0,1\}^{\ell(\lambda)}$. G is a $(t,\epsilon)$-secure pseudorandom generator if: \begin{itemize}
    \item $\ell(\lambda) > \lambda$,
    \item the distributions $\{ G(s), s\leftarrow \{ 0,1\}^\lambda\}$ and $\{r,r \sampler \{ 0,1\}^{\ell(\lambda)}\}$ are indistinguishable.
\end{itemize}
\end{definition}
In some protocols, we are going to use TreePRG, which is a pseudorandom generator, which uses a root seed to generate $N$ other seeds in a structured way. This can be illustrated quite easily with the following figure: 

\begin{figure}[H]
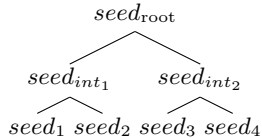

\Tree[.$seed_{\text{root}}$ [.$seed_{int_1}$ [.$seed_{1}$ ]
               [.$seed_{2}$ ]]
          [.$seed_{int_2}$ [.$seed_{3}$ ]
               [.$seed_{4}$ ]]]
               \captionof{figure}{Example of TreePRG}
               \label{TreePRG}
\end{figure}

Now, imagine one is looking to reveal $seed_{1}$, $seed_{3}$ , and $seed_{4}$, and hide $seed_{2}$. Then, all he has to do is reveal $seed_{int_2}$ and $seed_{1}$. It is impossible to retrieve $seed_{2}$, as we don't know the previous seed, but it is possible to retrieve the others. In this small example, we took $N=4$. This is especially interesting when, as in the additive-based MPCitH transformation we will see later, we reveal $N-1$ leaves. This means that we can do this operation by revealing only $\log_2(N)$ leaves instead of $N-1$. In general, for a TreePRG with $N$ final seeds and given a subset $I \subset \oneto{N}$, it is possible to reveal all the leaves but the ones in $I$ by revealing at most $|I|\log_2(\frac{N}{|I|})$ tree nodes.

We thus have three functions in order to deal with a TreePRG:
\begin{itemize}
    \item the root seed expansion, which generates $N$ seeds from a root seed;
    \item the sibling path derivation, which generates a sibling path from a leaf seed;
    \item the leaf seeds retrieval from a sibling path.
\end{itemize}

\subsubsection{Collision-Resistant Hash Functions}
\hfill\\
We are going to use hash functions in the zero-knowledge protocols. We define below what is a collision resistant hash function.

\begin{definition}[Collision-Resistant Hash Function]
Let $h : \{0,1\}^* \rightarrow \{0,1\}^\lambda$. It is a collision-resistant hash function if $h$ can be computed in a polynomial time, and, for any polynomial algorithm $\mathcal{A}$, $$\prb[(x_1,x_2) \leftarrow \mathcal{A}(\lambda), x_1 \ne x_2, h(x_1) = h(x_2)] < \epsilon(\lambda)$$ where $\epsilon(\lambda)$ is negligeable.
\end{definition}

\subsubsection{Commitments Schemes} 
\hfill\\

The security of the proof of knowledge relies on commitments, in order to avoid an attacker to forge a valid transcript. The commitment scheme should satisfy two properties: the commitment should reveal no information about the data we committed (hiding property), and there should be only one way to open the commitment (binding property).

\begin{definition}[Commitment Scheme]
    A commitment scheme is defined by the function $\mathsf{Com}$, which takes as inputs ${m},\rho$ with $m$, and $\rho$, and outputs ${c}$, for some message space, randomness space, and commitment space.
        
\end{definition}

\begin{definition}[Hiding]
    A commitment scheme $Com$ is computationally (resp. statistically, resp. perfectly) hiding if, for every $m_0,m_1$, the distributions of $$\{ \mathsf{Com}(m_0,\rho), \rho \sample\}\text{ and }\{ \mathsf{Com}(m_1,\rho), \rho \sample\}$$ are computationally (resp. statistically, resp. perfectly) indistinguishable.
\end{definition}

\begin{definition}[Binding]
    A commitment scheme $Com$ is binding if, for every PPT algorithm $\mathcal{A}$, we have $$ \prb[\mathsf{Com}(m,\rho) = \mathsf{Com}(m',\rho') \cap m \ne m', (m,\rho,m',\rho') \leftarrow \mathcal{A}]<\mu(\lambda)$$ where $\mu$ is a negligible function. If we restrict $\mathcal{A}$ to a PPT, it is computationally binding. If the computation time is unbounded, it is statistically binding. 
\end{definition}
\subsubsection{Merkle Trees} 
\hfill\\
A collision-resistant hash function (that we will note $\hash_M$) can be used to build a \textit{Merkle Tree}. Given inputs $v_1 \dots v_N$, we define $\mathsf{Merkle}(v_1 \dots v_N)$ as:  \begin{equation}
    \mathsf{Merkle}(v_1 \dots v_N) =
    \begin{cases}
    \hash_M\Big( {\mathsf{Merkle}(v_1 \dots v_{\frac{N}{2}}) } \| {\mathsf{Merkle}(v_{\frac{N}{2}+1} \dots v_N)}\Big) \text{ if } N>1 \\
    \hash_M(v_1) \text{ if } N=1 \\
    \end{cases}
\end{equation}

Thanks to Merkle Trees, similarly to the TreePRG described above, it is possible to verify that some given data is correct in an easy way. Given $I \subset \oneto{N}$, it is possible to verify that the $(v_i)_{i\in I}$ were indeed used to build the Merkle Tree only by revealing at most $|I|\log_2\Big(\frac{N}{|I|}\Big)$ hash values, by revealing the sibling paths of the $(v_i)_{i\in I}$. This path is called the authentification path and is denoted $\mathsf{auth}((v_1 \dots v_N),I)$.

\subsubsection{Secret Sharing Schemes}
\hfill\\

The interactive proof relies on a prover simulating a MultiParty Computation (MPC) protocol, where each party has a share of the witness $\bm{x}$. We detail here the formalism from \cite{FR22}. The sharing of a secret $s$ among $N$ parties is denoted $(\share{s}_1, \dots, \share{s}_N)$ where $\share{s}_i$ is the share of the $i$\textsuperscript{th} party. Given $J \subset \oneto{N}$, $\share{s}_J$ is the subset of shares $\{\share{s}_j\}_{j\in J}$.

\begin{definition}[Threshold Linear Secret Sharing]
Let $\mathbb{F}$ a finite field. Let $1<t\le N$. A $(t,N)$-threshold linear secret sharing scheme (TLSSS) is a scheme to share a secret, $s\in\mathbb{F}$, in a sharing $\share{s} = (\share{s}_1 \dots \share{s}_N) \in \mathbb{F}^N$, where only $t$ coordinates of $\share{s}$ need to be known in order to reconstruct the secret $s$, while the knowledge of $t-1$ coordinates of $\share{s}$ reveals no information.

We can write this the following way:
\begin{equation}
    \begin{cases}
    \mathsf{Share} : \mathbb{F} \times R \rightarrow \mathbb{F}^N \\
    \mathsf{Reconstruct}_J : \mathbb{F}^t \rightarrow \mathbb{F} \\
    \end{cases}
\end{equation}
$R$ corresponds to the randomness space used to build the shares. $J$ is a subset of $\{1 \dots N\}$, with $|J| = t$.
The two algorithms of a TLSSS must satisfy some properties: \begin{itemize}
\item \textbf{Correctness.} For every $s\in \mathbb{F},r\in R, J \subset \oneto{N}$ such that $|J|=t$ and for $\share{s} \leftarrow \mathsf{Share}(s;r) $, we have: 
$$\mathsf{Reconstruct}_J(\share{s}_J)=s$$
\item \textbf{Perfect ($t-1$)-privacy.} For every $s_0,s_1 \in \mathbb{F}, I\subset \oneto{N}$ with $|I|=t-1$, the two distributions: \begin{align*}\left\{ \share{s_0}_I \text{ }\Big\vert\text{ } \genfrac{}{}{0pt}{0}{r \sampler R}{\text{ } \share{s_0}_{\oneto{N}} \longleftarrow \mathsf{Share}(s_0,r)}\right\} \text{ and } \left\{ \share{s_1}_I \text{ }\Big\vert \text{ }\genfrac{}{}{0pt}{0}{r \sampler R}{\text{ } \share{s_1}_{\oneto{N}} \longleftarrow \mathsf{Share}(s_1,r)}\right\}\end{align*} are perfectly indistinguishable.
\item \textbf{Linearity.} For every $v_0,v_1 \in \mathbb{F}^t,\alpha \in \mathbb{F}, J \subset \oneto{N}$ with $|J| = t$, $$\reconstruct_J(\alpha\cdot v_0 + v_1) = \alpha \cdot \reconstruct_J(v_0)+\reconstruct(v_1).$$
\end{itemize}
\end{definition}

We recall below two among the most used secret sharing schemes. \\

\begin{definition}[Additive Secret Sharing]
Let $\mathbb{F}$ a field and $s \in \mathbb{F}$ a secret.
An additive secret sharing with $N$ parties is a $(N,N)$-threshold sharing scheme such that:
\begin{itemize}
\item $\share{s}_i = r_i$ for $i \in \oneto{N-1}$, where $r_i \sampler \mathbb{F}$; 
\item $\share{s}_N = s-\sum_{i=1}^{N-1}\share{s}_i$.
\end{itemize}
The $\mathsf{Reconstruct}_{\oneto{N}}$ algorithm takes as inputs all the shares, and outputs the sum of all the shares.
\end{definition}

\begin{definition}[Shamir's Secret Sharing]\label{def:shamir-sharing}
Let $\mathbb{F}$ a field and $s \in \mathbb{F}$ a secret.
A Shamir's secret sharing is the following $(\ell+1,N)$-threshold sharing scheme: \begin{itemize}
\item Sample $(r_1, \ldots, r_\ell) \sampler \mathbb{F}^\ell$;
\item Compute $P(X) = s + \sum_{i=1}^\ell r_iX^i$;
\item Compute $\share{s}_i = P(e_i)$ where the $(e_i)_{i\in \{1, \dots, N\}}$ are distinct and non-zero public values.
\end{itemize}
For $J$ a subset of $\oneto{N}$ with $|J| = \ell+1$, the $\mathsf{Reconstruct}_J$ algorithm corresponds to the interpolation of the polynomial P, when taking in inputs $\share{s}_i$ for $i\in J$, and outputs the constant term, $s$. \\
\end{definition}

\begin{proposition}
    Let an $(\ell+1,N)$-threshold LSSS. For each $v\in\mathbb{F}^{\ell+1}$ and each subset $J\subset\oneto{N}$ of $\ell+1$ elements, there exists a unique sharing $\share{x}_{\oneto{N}}\in\mathbb{F}^{N}$ such that $\share{x}_J=v$, and such that for all $\mathcal{J}\subset\oneto{N}$ of $\ell+1$ elements: $$\reconstruct_\mathcal{J}(\share{x}_\mathcal{J})=\reconstruct_J(v)$$
\end{proposition}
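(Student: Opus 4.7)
Set $s:=\reconstruct_J(v)$. The approach is to invoke the distribution-matrix representation of any linear $(\ell+1,N)$-threshold scheme: there exists an $N\times(\ell+1)$ matrix $M$ such that every honest sharing has the form $\share{x}_{\oneto{N}}=Mw$ for some $w\in\mathbb{F}^{\ell+1}$ (with the secret placed in the first coordinate of $w$), and such that the submatrix $M_J$ indexed by $J$ is invertible for every size-$(\ell+1)$ subset $J$ (this invertibility is equivalent to $(\ell+1)$-reconstructibility combined with the linearity of $\reconstruct_J$). Granted this, \emph{existence} is immediate: take $w:=M_J^{-1}v$ and $\share{x}_{\oneto{N}}:=Mw$. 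Then $\share{x}_J=M_Jw=v$, the first coordinate of $w$ equals $\reconstruct_J(v)=s$, so $\share{x}_{\oneto{N}}$ is an honest sharing of $s$, and correctness yields $\reconstruct_\mathcal{J}(\share{x}_\mathcal{J})=s=\reconstruct_J(v)$ for every size-$(\ell+1)$ subset $\mathcal{J}$.

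\textbf{Uniqueness.} Let $\share{y}\in\mathbb{F}^N$ satisfy both conditions and set $\share{z}:=\share{y}-\share{x}$. By linearity of $\reconstruct$, we have $\share{z}_J=0$ and $\reconstruct_\mathcal{J}(\share{z}_\mathcal{J})=0$ for every size-$(\ell+1)$ subset $\mathcal{J}$. For each $i\notin J$, pick any $j\in J$ and apply this to $\mathcal{J}:=(J\setminus\{j\})\cup\{i\}$: writing $\reconstruct_\mathcal{J}$ as a linear combination $\sum_{k\in\mathcal{J}}\alpha_k\share{z}_k$, every term with $k\in J\setminus\{j\}$ vanishes (since $\share{z}_k=0$ for $k\in J$), leaving the single equation $\alpha_i\share{z}_i=0$. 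The coefficient $\alpha_i$ must be nonzero: otherwise $\reconstruct_\mathcal{J}$ would depend on at most $\ell$ of its inputs and would recover the secret from those $\ell$ shares, contradicting perfect $\ell$-privacy. Hence $\share{z}_i=0$ for every $i\notin J$, and combined with $\share{z}_J=0$ this gives $\share{z}=0$, so $\share{y}=\share{x}$.

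\textbf{Main obstacle.} The whole argument hinges on the nonvanishing of the coefficients $\alpha_i$, equivalently invertibility of every submatrix $M_\mathcal{J}$; this is the only step that is not pure bookkeeping, and is where the privacy axiom really enters. Invoking the distribution-matrix viewpoint makes it transparent, but a fully self-contained derivation from the three listed axioms would require the standard lemma that perfect $(t-1)$-privacy together with correctness forces the sharing subspace in $\mathbb{F}^N$ to have dimension exactly $t$ with every $t$-coordinate projection bijective. This mild bookkeeping step is the only real hurdle in the proof.
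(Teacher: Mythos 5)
The paper offers no proof of its own here — it simply defers to \cite{FR22} — so I am judging your argument on its own terms. Your uniqueness half is complete and correct: the linearity axiom forces $\reconstruct_\mathcal{J}(0)=0$ and hence lets you write $\reconstruct_\mathcal{J}$ as a linear form $\sum_{k\in\mathcal{J}}\alpha_k\share{z}_k$, and ruling out $\alpha_i=0$ by combining correctness (the form would then compute $s$ from only $\ell$ shares) with perfect $\ell$-privacy is exactly the right mechanism. That part needs no repair.

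The existence half is where I would push back on your "mild bookkeeping" framing. The distribution-matrix representation you invoke presupposes that $\mathsf{Share}$ is a linear map of $(s,r)$, but the definition in this paper only imposes linearity on $\reconstruct$; the set of honest sharings is therefore not guaranteed to be a subspace of $\mathbb{F}^N$, and the matrix $M$ with all square submatrices invertible is not literally available from the three listed axioms. Fortunately you need much less: it suffices that the map $(s,r)\mapsto\mathsf{Share}(s;r)_J$ be surjective onto $\mathbb{F}^{\ell+1}$, for then you may take $\share{x}$ to be any honest sharing with $\share{x}_J=v$ and conclude by correctness applied to every $\mathcal{J}$. This surjectivity does follow from the axioms by the very coefficient argument you already proved: fix $j\in J$ and set $I=J\setminus\{j\}$; privacy makes the support $D_I$ of $\share{s}_I$ independent of $s$, and for each $u\in D_I$ correctness forces $\share{s}_j=\alpha_j^{-1}\bigl(s-\sum_{k\in I}\alpha_k u_k\bigr)$, which sweeps all of $\mathbb{F}$ as $s$ varies; hence the achievable set $V_J\subset\mathbb{F}^{J}$ is a full cylinder in every coordinate direction of $J$, and a nonempty set with that property is all of $\mathbb{F}^{\ell+1}$. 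With that substitution your proof becomes self-contained; as written, the existence step rests on a lemma that is standard for genuinely linear $\mathsf{Share}$ maps but is not implied verbatim by the definition given in this paper.
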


\begin{proof}
    See \cite{FR22}.
\end{proof}

One deduces there exists an algorithm $\expand_J$ which returns the unique sharing from a subset $J$ of the shares. For example, in the Shamir's secret sharing, $\expand_J$ builds the Lagrange polynomial from the known evaluations and outputs the image of each party's point.

\subsection{Proof of Knowledge and Digital Signature Schemes}
\subsubsection{Zero-Knowledge Proof of Knowledge}
\hfill \\
We define here the concept of proof of knowledge. Let $R\subset \{ 0,1\}^*\times  \{ 0,1\}^*$ an NP-relation. $(x,\omega)\in R$ is a statement-witness pair where $x$ is the statement and $\omega$ an associated witness. The set of valid witnesses for a statement $x$ is $R(x)=\{\omega: (x,\omega)\in R\}$. A prover $\mathcal{P}$ wants to use a proof of knowledge to convince a verifier $\mathcal{V}$ that he knows a witness $\omega$ for a statement $x$.

\begin{definition}[Proof of knowledge]
    A proof of knowledge for a relation $R$ with soundness $\epsilon$ is a two-party protocol between a prover $\mathcal{P}$ and a verifier $\mathcal{V}$ with a public statement $x$, where $\mathcal{P}$ want to convince $\mathcal{V}$ that he knows $\omega$ such that $(x,\omega)\in R$. We denote $\lbrace \mathcal{P}(x,\omega),\mathcal{V}(x)\rbrace$ the transcript between $\mathcal{P}$ and $\mathcal{V}$. A proof of knowledge has the following properties: 
    \begin{itemize}
        \item \textbf{Perfect completeness: } If $(x,\omega) \in \mathcal{R}$, then a prover $\mathcal{P}$ who knows a witness $\omega$ for $x$ succeeds in convincing the verifier $\mathcal{V}$ of his knowledge. This means that the prover convinces the verifier with probability 1, i.e, $$\prb[\langle \mathcal{P}(x,\omega),\mathcal{V}(x)\rangle \rightarrow \text{ACCEPT}] = 1.$$
        \item \textbf{Soundness: } If there exists a PPT prover $\Tilde{\mathcal{P}}$ such that $$\Tilde{\epsilon} = \prb[\langle\Tilde{\mathcal{P}}(x),\mathcal{V}(x)\rangle \rightarrow \text{ACCEPT}] > \epsilon,$$ then there exists an algorithm which, given rewindable black-box access to $\Tilde{\mathcal{P}}$, outputs a witness $\omega'$ for x in time $\mathsf{poly}(\lambda,(\Tilde{\epsilon}-\epsilon)^{-1})$ with probability at least $\frac{1}{2}$.
    \end{itemize}
\end{definition}

To be zero-knowledge while the prover interacts with a honest verifier (i.e., a verifier sending his messages according to the definition of the protocol), the PoK must verify the following property :
\begin{definition}[Honest-Verifier Zero-Knowledge]
    A PoK satisfies the Honest-Verifier Zero-Knowledge (HZVK) property if there exists a polynomial-time simulator $\mathsf{Sim}$ that given as input a statement $x$ and random challenges $(\ch_1,...,\ch_n)$, outputs a transcript $\lbrace \mathsf{Sim}(x,\ch_1,...,\ch_n),\mathcal{V}(x)\rbrace$ which is computationally indistinguishable from the probability distribution of transcripts of honest executions between a prover $\mathcal{P}(x,w)$ and a verifier $\mathcal{V}(x)$.
\end{definition}


\begin{definition}[Digital Signature Scheme]
    A digital signature scheme $\mathsf{DSS}$ with security parameter $\lambda$ is a triplet of polynomial time algorithms $(\mathsf{KeyGen}, \mathsf{Sign}, \mathsf{Verif})$ such that:\begin{itemize}
        \item The key-generation algorithm $\mathsf{KeyGen}$ is a probabilistic algorithm which outputs a pair of keys $(\mathsf{pk},\mathsf{sk})$.
        \item The signing algorithm $\mathsf{Sign}$, eventually probabilistic, which takes as inputs a message $m$ to sign and the secret key $\mathsf{sk}$, and outputs a signature $\sigma$.
        \item The verification algorithm $\mathsf{Verif}$ which takes as inputs the public key $\mathsf{pk}$, the message $m$ and its signature $\sigma$, and outputs a bit $b$. The output $1$ means that the signature is considered as valid.
    \end{itemize}
\end{definition}

A correct signature scheme satisfies the following property: if $(\pk,\sk)\leftarrow\mathsf{KeyGen}$, for all messages $m$ signed by $\sigma\leftarrow\mathsf{Sign}(\pk,m)$, we have $1\leftarrow\mathsf{Verif}(\sk,m,\sigma)$. This means, if a signature is correctly generated, then it is always accepted.

The standard security notion for digital signature schemes is existential unforgeability under adaptive chosen message attacks (EUF-CMA) is defined as follows: 
\begin{definition}[EUF-CMA]
    We can define the following game $G_{\textsc{euf-cma}}$ where ${\mathsf{Sign}(sk,\cdot)}$ is an oracle that sign messages:
    \begin{align*}
        &\textsl{(\pk,\sk)} \leftarrow \mathsf{KeyGen}() \\
        &(m^*,\sigma^*) \leftarrow \mathcal{A}^{\mathsf{Sign}(\sk,\cdot)}(\textsf{pk}) \\
        &(m,\sigma) \leftarrow \mathcal{A}
    \end{align*}
    The game returns $1$ if $\mathsf{Verif}(m,\sigma,\textsf{pk}) = 1$ and $m$ was not queried to ${\mathsf{Sign}(\sk,\cdot)}$.
    The signature scheme is EUF-CMA secure if, for every polynomial adversary $\mathcal{A}$, $\prb[G_{\textsc{euf-cma}}(\mathcal{A}) =1]$ is negligible.
\end{definition}

\subsubsection{Fiat-Shamir Transformation}
\hfill \\

The Fiat-Shamir (FS) transformation is a generic process allowing to convert an interactive identification scheme into a signature. The main adaptation lies in the removal of the interactions in the protocol: one needs to pull the challenges in a deterministic way to sign a message without the assistance of a verifier. Note that the protocol must be repeated several times to achieve the desired level of security. We note $\tau$ the number of repetitions. One will see in section~\ref{atk_fs} that there is an effective attack if $\tau$ is too small.

Let us describe the FS tranformation on a 5-round zero-knowledge proof of knowledge.
Concretely, the prover begins as in the zero-knowledge proof by committing all the auxiliary information. At the end of the first step, the prover computes: $$h_1=\hash_1(\salt, m, (h^{(e)}_0)_{e \in \oneto{\tau}})$$
where $\hash_1$ is an hash function, $h^{(e)}_0$ is the first step commitment of the $e^{\text{th}}$ execution of the protocol, and $\salt$ a random value in $\{0, 1\}^{2\lambda}$. The prover obtains the first challenge (one challenge per execution $e \in [1, \tau]$) from $h_1$ by using a XOF (Extendable Output Function).

The second challenge is generated in a similar way: the prover computes an element $h_2$ thanks to an other hash function $\hash_2$ and the other information computed during the step 3. The prover obtains the second challenge (one challenge per execution $e \in [1, \tau]$) from $h_2$ by using a XOF as well.

The signature $\sigma$ therefore consists of sending:
\begin{itemize}
    \item the salt $\salt$;
    \item $h_1$ as commitment of the initial values;
    \item $h_2$ as hash of all responses of the first challenge;
    \item each response $\rsp$ of the second challenge.
\end{itemize}
The signature is:
$$\sigma = (\salt,h_1, h_2, (\rsp^{(e)})_{e \in \oneto{\tau}}).$$

\subsubsection{Useful Lemmas}
\hfill \\
\begin{lemma}[Splitting Lemma] \label{splitting_lemma}
Let $A \subset X\times Y$ such that $\prb[(x,y)\in A] \ge \epsilon$.
For any $\alpha<\epsilon$, let us define $$B = \Big\{  (x,y)\in X\times Y \mid \prb_{y'\in Y}[(x,y')\in A] \ge \epsilon-\alpha  \Big\} \text{ and } \Bar{B} = (X\times Y)\setminus B.$$ Then we have 
\begin{itemize}
    \item $\prb[B]\ge \alpha$,
    \item $\forall (x,y) \in B, \prb_{y' \in Y}[(x,y') \in A] \ge \epsilon - \alpha$,
    \item $\prb[B | A] \ge \frac{\alpha}{\epsilon}$.
\end{itemize}
\end{lemma}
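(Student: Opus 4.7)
The plan is to exploit the fact that membership in $B$ depends only on the first coordinate $x$, so we can write $B = X_B \times Y$ with $X_B = \{x \in X : \Pr_{y'}[(x,y') \in A] \geq \epsilon - \alpha\}$. Once this observation is in place, the second bullet is literally the defining condition on $B$, so the real work reduces to establishing the two probability bounds (i) $\Pr[B] \geq \alpha$ and (iii) $\Pr[B \mid A] \geq \alpha/\epsilon$.

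For the first bound, I would start from the decomposition
\[
\Pr[A] = \Pr[A \cap B] + \Pr[A \cap \bar{B}],
\]
and control the second summand by conditioning on $x$. Assuming the standard product/uniform measure on $X \times Y$, for every $x \in \bar{X}_B$ one has $\Pr_{y'}[(x,y') \in A] < \epsilon - \alpha$ by definition of $\bar{B}$, so integrating over $\bar{X}_B$ yields
\[
\Pr[A \cap \bar{B}] \leq (\epsilon - \alpha)\,\Pr[\bar{B}] \leq \epsilon - \alpha.
\]
Combined with $\Pr[A] \geq \epsilon$, this gives $\Pr[A \cap B] \geq \alpha$, and since $\Pr[A \cap B] \leq \Pr[B]$, the first inequality follows.

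The third bound then falls out of the very same inequality: we have shown $\Pr[A \cap B] \geq \Pr[A] - (\epsilon - \alpha)$, hence
\[
\Pr[B \mid A] = \frac{\Pr[A \cap B]}{\Pr[A]} \geq 1 - \frac{\epsilon - \alpha}{\Pr[A]} \geq 1 - \frac{\epsilon - \alpha}{\epsilon} = \frac{\alpha}{\epsilon},
\]
where the last inequality uses $\Pr[A] \geq \epsilon$ together with $\epsilon - \alpha > 0$.

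There is essentially no obstacle to this argument; the only real subtlety is recognising that $B$ is a cylinder over $X_B$, which turns its definition into a genuine statement about the marginal in $x$ and makes the bound on $\Pr[A \cap \bar{B}]$ immediate. Everything else is elementary manipulation of the decomposition of $\Pr[A]$.
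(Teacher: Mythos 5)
Your proof is correct. The paper does not actually prove this lemma --- it simply refers the reader to \cite{PS00} --- and your argument is the standard one from that reference (there phrased as a proof by contradiction, whereas you derive $\prb[A\cap B]\ge\alpha$ directly from the decomposition $\prb[A]=\prb[A\cap B]+\prb[A\cap\bar B]$ and the bound $\prb[A\cap\bar B]\le\epsilon-\alpha$; the two are equivalent). Your explicit remarks that $B$ is a cylinder over $X_B$ and that the product/uniform measure is being assumed are exactly the hypotheses under which the lemma is used, so there is no gap.
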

\begin{proof}
    An interested reader can refer to \cite{PS00}.
\end{proof}
\subsubsection{q-Polynomials and Rank Metric definition}\hfill \\

One of the main tools we are going to use are q-polynomials, as they allow us to characterize linear subspaces.
\begin{definition}[q-polynomial]
A q-polynomial of q-degree $r$ is a polynomial in $\Fqm[X]$ of the form:
$$P(X) = X^{q^r}+\sum_{i=0}^{r-1}p_i \cdot X^{q^i} \qquad \text{with } p_i \in \Fqm.$$
\end{definition}

\begin{proposition}
Let P a q-polynomial, $\alpha,\beta \in \Fq$, $x,y\in \Fqm$. We then have:
$$P(\alpha x + \beta y )=\alpha P(x) + \beta P(y)$$
\end{proposition}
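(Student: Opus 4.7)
The plan is to reduce the claim to the $\mathbb{F}_q$-linearity of the Frobenius endomorphism, and then extend it to each monomial $X^{q^i}$ appearing in $P$.

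First, I would establish that the Frobenius map $\phi: z \mapsto z^q$ on $\Fqm$ is $\mathbb{F}_q$-linear. Multiplicativity of exponentiation gives $(\alpha x)^q = \alpha^q x^q$, and since $q$ is a power of the characteristic of $\Fqm$, the binomial theorem collapses to $(x+y)^q = x^q + y^q$ (the so-called freshman's dream). Combined with Fermat's little theorem, which tells us that $\alpha^q = \alpha$ for every $\alpha \in \Fq$, this yields
$$(\alpha x + \beta y)^q = (\alpha x)^q + (\beta y)^q = \alpha^q x^q + \beta^q y^q = \alpha x^q + \beta y^q.$$

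Next, an easy induction on $i$ extends this to every iterate of the Frobenius: assuming $(\alpha x + \beta y)^{q^i} = \alpha x^{q^i} + \beta y^{q^i}$, applying the previous step to $u = x^{q^i}$ and $v = y^{q^i}$ (which still lie in $\Fqm$) gives $(\alpha x + \beta y)^{q^{i+1}} = \alpha x^{q^{i+1}} + \beta y^{q^{i+1}}$. Hence the map $z \mapsto z^{q^i}$ is $\mathbb{F}_q$-linear on $\Fqm$ for every $i \ge 0$.

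Finally, writing $P(X) = \sum_{i=0}^{r} p_i X^{q^i}$ with $p_r = 1$ and $p_i \in \Fqm$, I would conclude by linearity of the sum:
$$P(\alpha x + \beta y) = \sum_{i=0}^{r} p_i (\alpha x + \beta y)^{q^i} = \sum_{i=0}^{r} p_i \bigl(\alpha x^{q^i} + \beta y^{q^i}\bigr) = \alpha \sum_{i=0}^{r} p_i x^{q^i} + \beta \sum_{i=0}^{r} p_i y^{q^i},$$
which is exactly $\alpha P(x) + \beta P(y)$. Note that it is essential that $\alpha, \beta \in \Fq$ rather than in $\Fqm$: we used $\alpha^{q^i} = \alpha$ and $\beta^{q^i} = \beta$ at each step, which only holds on the prime subfield $\Fq$. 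There is no real obstacle in this argument; the only subtlety worth flagging is this scalar-restriction point, which is precisely why q-polynomials define $\mathbb{F}_q$-linear (and not $\Fqm$-linear) maps on $\Fqm$.
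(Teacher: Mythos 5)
Your proof is correct and follows the same route as the paper, which simply invokes the $\mathbb{F}_q$-linearity of the Frobenius endomorphism $x \mapsto x^q$; you have merely written out that argument in full (freshman's dream, $\alpha^q = \alpha$ on $\Fq$, induction to $q^i$, linearity of the sum). One tiny terminological slip: since $q$ may itself be a prime power, $\Fq$ need not be the \emph{prime} subfield of $\Fqm$, but the identity $\alpha^{q^i} = \alpha$ for $\alpha \in \Fq$ that you actually use is still valid, so nothing in the argument is affected.
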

\begin{proof}
This comes directly from the fact that the Frobenius endomorphism: $x \xmapsto[]{} x^q$ is linear over $\Fq$
\end{proof}
We can see q-polynomials as $\mathbb{F}_q$-linear applications from $\mathbb{F}_{q}^m$ to $\Fqm$. It is then possible to define a linear subspace of $\Fqm$ from a q-polynomial.
\begin{proposition}
    The set of roots of a non-zero q-polynomial of q-degree r forms a linear subspace of dimension lower than or equal to $r$.
\end{proposition}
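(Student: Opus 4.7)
The plan is to prove the two assertions separately: first that the root set is an $\mathbb{F}_q$-linear subspace of $\mathbb{F}_{q^m}$, and second that its $\mathbb{F}_q$-dimension is at most $r$.

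For the subspace structure, I would invoke the previous proposition, which states that a $q$-polynomial $P$ acts as an $\mathbb{F}_q$-linear map from $\mathbb{F}_{q^m}$ to $\mathbb{F}_{q^m}$ (in the sense that $P(\alpha x + \beta y) = \alpha P(x) + \beta P(y)$ for $\alpha, \beta \in \mathbb{F}_q$ and $x, y \in \mathbb{F}_{q^m}$). The set of roots of $P$ is precisely the kernel of this linear map, and the kernel of any $\mathbb{F}_q$-linear map is an $\mathbb{F}_q$-linear subspace. This handles the first part immediately.

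For the dimension bound, I would pass through a simple counting argument. Viewed as an ordinary polynomial in $\mathbb{F}_{q^m}[X]$, a non-zero $q$-polynomial of $q$-degree $r$ has degree exactly $q^r$ (its leading term is $X^{q^r}$). By the standard fact that a non-zero polynomial over a field has at most as many roots as its degree, $P$ has at most $q^r$ roots in $\mathbb{F}_{q^m}$. On the other hand, an $\mathbb{F}_q$-subspace of dimension $d$ has cardinality exactly $q^d$. Combining the two, if $V$ denotes the root space and $d = \dim_{\mathbb{F}_q} V$, then $q^d = |V| \le q^r$, so $d \le r$.

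There is no real obstacle here: the only subtle point is making sure the leading term in the $q$-polynomial definition indeed produces a polynomial of ordinary degree $q^r$, which is immediate from the normalization chosen in the definition (monic leading term $X^{q^r}$). One could note as a remark that the bound is tight, achieved for instance by $P(X) = X^{q^r} - X$ when $r \le m$, whose roots form the subfield $\mathbb{F}_{q^r} \subseteq \mathbb{F}_{q^m}$, but this is not needed for the statement.
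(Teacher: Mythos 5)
Your proof is correct and follows essentially the same route as the paper's: the root set is the kernel of the $\mathbb{F}_q$-linear map induced by $P$, and the bound $\dim \le r$ follows from comparing $|V| = q^{\dim V}$ with the at most $q^r$ roots of a degree-$q^r$ polynomial. You simply spell out the final counting step more explicitly than the paper does.
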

\begin{proof}
    Let $P$ a q-polynomial of degree $r$. One can see $P$ as a linear application from $\Fq^m$ to $\Fq^m$. As an endomorphism kernel, the set of zeros forms a linear space. Since it is a polynomial of degree $q^r$, $P$ has at most $q^r$ roots, giving the upper bound on the subspace dimension.
\end{proof}

\begin{proposition}[\cite{ore}]
Let $E$ a linear subspace of $\Fqm$ of dimension $r\leq m$. Then there exists a unique monic q-polynomial of q-degree $r$ such that every element in $E$ is a root of $P$.\\
$P$ is called the annihilator polynomial of $E$.
\end{proposition}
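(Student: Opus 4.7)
The plan is to prove existence and uniqueness separately. For existence, I would use induction on the dimension $r$; for uniqueness, the bound on the number of roots given by the previous proposition together with an argument about the difference of two candidates.

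\textbf{Existence (by induction on $r$).} For $r=0$, take $P(X) = X$, which is the unique monic q-polynomial of q-degree $0$ and vanishes on $\{0\}$. Suppose the result holds for every subspace of dimension $r-1$, and let $E\subset\Fqm$ be of dimension $r$. Choose a hyperplane $E'\subset E$ of dimension $r-1$ and a vector $v\in E\setminus E'$, so that $E=E'\oplus \Fq\cdot v$. By the induction hypothesis, there exists a monic q-polynomial $P_{r-1}$ of q-degree $r-1$ vanishing on $E'$. Set $\alpha=P_{r-1}(v)$; note $\alpha\neq 0$, since otherwise $P_{r-1}$ would have at least $q^{r-1}+1$ roots in $E$, contradicting the previous proposition. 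Define
$$P_r(X) = P_{r-1}(X)^q - \alpha^{q-1}\,P_{r-1}(X).$$
Since $\Fqm$ has characteristic dividing $q$, the Frobenius $x\mapsto x^q$ is additive, so $P_{r-1}(X)^q = \sum_i p_i^q X^{q^{i+1}}$ is itself a monic q-polynomial of q-degree $r$; subtracting a q-polynomial of q-degree $r-1$ preserves this form. Thus $P_r$ is a monic q-polynomial of q-degree $r$. For any $e\in E'$ and $c\in\Fq$, the previously stated $\Fq$-linearity of q-polynomials gives $P_{r-1}(e+cv) = c\alpha$, hence
$$P_r(e+cv) = (c\alpha)^q - \alpha^{q-1}(c\alpha) = c^q\alpha^q - c\alpha^q = 0,$$
using $c^q=c$ for $c\in\Fq$. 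Since every element of $E$ has this form, $P_r$ vanishes on $E$.

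\textbf{Uniqueness.} Suppose $P$ and $Q$ are two monic q-polynomials of q-degree $r$ that both vanish on $E$. Their difference $R=P-Q$ is a linearized polynomial whose leading $X^{q^r}$ terms cancel, so $R$ is of the form $\sum_{i=0}^{r-1}c_i X^{q^i}$ for some $c_i\in\Fqm$. If $R\neq 0$, let $s<r$ be the largest index with $c_s\neq 0$; then $R$ has ordinary degree $q^s$, hence at most $q^s$ roots. But $R$ vanishes on $E$, which contains $q^r>q^s$ elements, a contradiction. Therefore $R=0$, i.e., $P=Q$.

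The only delicate step is the inductive construction: it is what forces the choice of the cofactor $\alpha^{q-1}$ in front of $P_{r-1}(X)$, so that the new polynomial kills the whole coset $E'+cv$ for every $c\in\Fq$ (and not just the line $\Fq v$). Everything else, including uniqueness, follows cleanly from the root-count bound already established in the preceding proposition.
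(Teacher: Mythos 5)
Your proof is correct. The paper itself gives no argument for this proposition -- it simply defers to the citation of Ore -- so there is no in-paper proof to compare against; your write-up supplies exactly the standard argument that the reference contains. Each step checks out: the base case $P(X)=X$ for $r=0$; the non-vanishing of $\alpha=P_{r-1}(v)$ via the root-count bound from the preceding proposition; the fact that $P_{r-1}(X)^q$ is again a monic q-polynomial (freshman's dream in characteristic $p$ with $p\mid q$); the computation $P_r(e+cv)=c^q\alpha^q-c\alpha^q=0$ using $c^q=c$ for $c\in\Fq$; and the uniqueness via degree counting on the difference. One remark tying this back to the paper: elsewhere the authors define the annihilator directly as $L_U(X)=\prod_{u\in U}(X-u)$ and assert without proof that it is a q-polynomial. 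Your inductive construction is the standard way to justify that assertion -- indeed $P_r$ and $\prod_{u\in E}(X-u)$ are both monic of ordinary degree $q^r$ and vanish on all $q^r$ elements of $E$, so by your own uniqueness argument (applied to monic degree-$q^r$ polynomials with $q^r$ prescribed roots) they coincide. So your proof not only establishes the proposition but also validates the explicit product formula the protocols rely on.
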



Let us define the usual notions of the rank metric:
\begin{definition}
    Let $\bm{E} \in \mathbb{F}_{q}^{m \times n} = \Big(e_{i,j}\Big)$ with $e_{i,j} \in \Fq$ for $(i,j)\in \oneto{m} \times \oneto{n}$, and let $\mathcal{B} = ( b_1 \dots b_m )$ an $\Fq$-basis of $\Fqm$. It is then possible to associate each column of $\bm{E}$ to an element of $\Fqm$ using $$e_j = \sum_{i=1}^{m}b_ie_{i,j}$$ for each $j \in \oneto{n}$.
    By defining $\bm{e} = (e_1 \dots e_n)$, we can say that $\bm{e}$ is the vector associated to the matrix $\bm{E}$.

    The \emph{rank weight} is defined as $\operatorname{W}_R(\bm{e}) = \rank(\bm{E})$.
    The \emph{distance} between two vectors $\bm{x}$ and $\bm{y}$ is then $d(\bm{x},\bm{y}) = \operatorname{W}_R(\bm{x}-\bm{y})$.
    The \emph{support} of $\bm{e} = (e_1 \dots e_n)$ is the linear subspace of $\Fqm$ generated by its coordinates: $\supp(\bm{e}) = \langle e_1 \dots e_n \rangle$.
\end{definition}

\begin{remark}
    The choice of the basis $\mathcal{B}$ does not change anything to the weight of $\bm{e}$ or the rank of $\bm{E}$. The rank of $\bm{E}$ is obviously equal to the dimension of $\supp(\bm{e})$. Moreover, in the above definition, we are working on the columns of $\bm{E}$. We stress that it is possible to work with the rows instead, as it may be more efficient, depending on the values of $m$ and $n$. We will also abuse notations, and sometimes note $\supp(\bm{e})$ as $\supp(\bm{E})$.
\end{remark}

The number of supports possible of dimension $r$ when working in $\Fqm$ is $$\CG{m}{r} = \prod_{i=0}^{r-1}\frac{q^m-q^i}{q^r-q^i} \approx q^{r(m-r)}.$$

\subsection{MPC-in-the-Head and Proof of Knowledge} \label{mpc_pok}

The following explanation of the MPCitH paradigm comes from \cite[Section 2.1]{F22}.

The MPC-in-the-Head (MPCitH) paradigm introduced in \cite{IKO} offers a way to build zero-knowledge proofs from secure multi-party computation (MPC) protocols. Let us assume we have an MPC protocol in which $N$ parties $\party_1, \ldots, \party_N$ securely and correctly evaluate a function $f$ on a secret input $w$ with the following properties:
\begin{itemize}
    \item the secret witness $w$ is encoded as a sharing $\share{w}$ and each $\party_i$ takes a share $\share{w}_i$ as input;
    \item the function $f$ outputs $\textsc{Accept}$ or $\textsc{Reject}$;
    \item the views of $t$ parties leak no information about the secret $w$, where $t+1$ is the threshold of the secret sharing.
\end{itemize}
We can use this MPC protocol to build a zero-knowledge proof of knowledge of a witness $w$ for which $f(w)$ evaluates to \textsc{Accept}. The prover proceeds as follows:
\begin{itemize}
    \item she builds a random sharing $\share{w}$ of $w$;
    \item she simulates locally (``in her head'') all the parties of the MPC protocol;
    \item she sends commitments to each party's view, i.e, party's input share, secret random tape and sent and received messages, to the verifier;
    \item she sends the output shares $\share{f(w)}$ of the parties, which should correspond to \textsc{Accept}.
\end{itemize}
Then the verifier randomly chooses $t$ parties and asks the prover to reveal their views. After receiving them, the verifier checks that they are consistent with an honest execution of the MPC protocol and with the commitments. Since only $t$ parties are opened, revealed views leak no information about the secret $w$, while the random choice of the opened parties makes the cheating probability upper bounded by $(N-t)/N$, thus ensuring the soundness of the zero-knowledge proof.\footnote{We implicitly assume here that the communication between parties is broadcast.}%
\bigskip

In our case, the parties take as input a linear sharing $\share{w}$ of the secret $w$ (one share per party) and they compute one or several rounds in which they perform three types of actions:
\begin{description}
    \item[Receiving randomness:] the parties receive a random value $\epsilon$ from a randomness oracle $\oracle_R$. When calling this oracle, all the parties get the same random value $\epsilon$.
    \item[Receiving hint:] the parties can receive a sharing $\share{\beta}$ (one share per party) from a hint oracle $\oracle_H$. The hint $\beta$ can depend on the witness $w$ and the previous random values sampled from $\oracle_R$.
    \item[Computing \& broadcasting:] the parties can locally compute $\share{\alpha} := \share{\varphi(v)}$ from a sharing $\share{v}$ where $\varphi$ is an $\mathbb{F}$-linear function, then broadcast all the shares $\share{\alpha}_1$, \ldots, $\share{\alpha}_N$ to publicly reconstruct $\alpha := \varphi(v)$. The function $\varphi$ can depend on the previous random values $\{\epsilon^i\}_i$ from $\oracle_R$ and on the previous broadcasted values. One should note that in the case of additive sharing, only one party needs to compute the addition by a constant.
\end{description}

We restrain here to threshold linear sharings, i.e, additive secret sharing and low-threshold linear secret sharing, which is the framework in which \cite{FR22} and \cite{F22} are set.  
\bigskip

In \cite{AGHHJY22}, another way to verify the parties' computation (in the case of additive sharing) has been introduced which is more efficient. This is the hypercube technique, which we will detail here.

The idea of the hypercube construction is the following:
\begin{itemize}
    \item Generate $N = 2^D$ parties with an additive sharing. Those parties can be indexed either by an integer in $\oneto{N}$ or by a vector in $\oneto{2}^D$. For example, the $i$-th leaf can be written as the leaf $i$, or as the leaf $(i_1,\dots,i_D)$;
    \item For each dimension of the hypercube, compute the main shares by summing up the $2^{D-1}$ shares of a slice of the hypercube. We have $2$ main parties per dimension, i.e, $2\cdot D$ main parties in total. We can also establish a mapping between the main parties and the leaves of the hypercube, by writing a main party $p = (k,j)$ $\in \oneto{D}\times \oneto{2}$. Concretely, for a witness $w$, the share of the main party $p = (k,j)$ can be built (and written) $\share{w}_{(k,j)}=\sum_{i: i_{k}=j}\share{w}_i$. These main shares will be written either as  $\share{w}_{(k,j)}$ or as $\share{w}_{(p_1,p_2)}$, depending on the indices we use.\\
    \item Execute the MPC protocol for all the dimension, i.e, execute $D$ times the MPC protocol, each time with the $2$ main parties of the dimension.
\end{itemize}
Since we built the hypercube with an additive sharing, we know that the sum of all the leaves gives us the secret. This means that we can sum them up in every way we want (the way in which we sum them depends on the dimension). This is an improvement compared to the standard MPC in the head, as instead of simulating one protocol with $256$ parties, we can simulate $8$ MPC protocols, with only 2 parties each. Furthermore, since the used secret is the same for every dimension and since the broadcasted plaintext values are the same for every protocol, it is possible to:
\begin{itemize}
    \item Execute the MPC protocol for the two parties for one dimension;
    \item For $D-1$ dimensions: 
    \begin{itemize}
        \item Execute the MPC protocol for one party $\mathcal{P}_1$;
        \item Subtract the share broadcasted by $\mathcal{P}_1$ to the broadcast plaintext value;
        \item Set the resulting share as the broadcasted share of the second party $\mathcal{P}_2$.
    \end{itemize}
\end{itemize}
Even though this doesn't reduce the size of the communication cost, this is an improvement as the computations are faster (we avoid the computation for $D-1$ parties). Moreover, we can observe that taking a hypercube with edge size larger than $2$ brings no advantage over a power of $2$ (see \cite{AGHHJY22}). This is why $N$ is always a power of $2$ in the case of the hypercube scheme.

\subsection{The MinRank Problem}

\begin{definition}[MinRank]
Let $\mathbb{F}_q$ be the finite field of size $q$, and $m,n,k,r \in \mathbb{N^*} $. The computational MinRank Problem with parameters $(q,m,n,k,r)$ is the following problem:\\
Let $\bm{M}_1, \dots, \bm{M}_k, \bm{E}   \in \mathbb{F}_{q}^{m\times n}$ and $\bm{x} \in \mathbb{F}_q^k$ be uniformly sampled such that
$$\operatorname{W}_R(\bm{E}) \leq r ~~~\text{with}~~~  \bm{M}_0 := \bm{E} - \sum_{i=1}^{k} x_i \bm{M}_i.$$
Given $\bm{M}_0 ,\dots, \bm{M}_k$, retrieve the vector $\bm{x}$.
\end{definition}

The MinRank problem was proven to be NP-Complete by \cite{BFS99}, and plays a central role in cryptography. It is used in the attacks on HFE, and appears in syndrome decoding in rank metric. It was also used in some signatures, such as Courtois' signature \cite{courtois}, MR-DSS \cite{BESV22}, or MinRank in the Head \cite{ARV22} for instance. 

\subsubsection{Key Generation} \hfill \\

In practice, to generate a MinRank instance, one has to: 
\begin{itemize}
    \item Sample $\bm{x}$ uniformly in $\Fqk$;
    \item for all $i \in \oneto{k}$, sample $\bm{M}_i \in \Fq^{m\times n}$;
    \item Sample $\bm{E} \in \Fq^{m\times n}$ such that $\operatorname{W}_R(\bm{E})\le r$;
    \item Define $\bm{M}_0$ as $\bm{E}-\sum_{i=1}^{k}\bm{M}_ix_i$.
\end{itemize}

From this, we can quickly explicit a $\mathsf{KeyGen}$ algorithm:
\begin{figure}[H]
\pcb[codesize=\scriptsize, minlineheight=0.75\baselineskip, mode=text, width=0.90\textwidth] {
    \begin{itemize}
        \item $\bm{x} \sampler \Fqk$
        \item For $i \in \oneto{k}$, $\bm{M}_i \sampler \Fq^{m\times n}$
        \item $\bm{E} \sampler  \Fq^{m\times n}$ such that $\operatorname{W}_R({\bm{E}}) \le r$
        \item $\bm{M}_0 = \bm{E}-\sum_{i=1}^{k}\bm{M}_ix_i$
        \item Set $\mathsf{pk} = (\bm{M}_0,\dots,\bm{M}_k)$ and $\mathsf{sk} = \bm{x}$
        \item Return $(\mathsf{pk},\mathsf{sk})$.
    \end{itemize}
}
\vspace{-\baselineskip}
\captionof{figure}{Key generation -- Simple version}
\end{figure}
The size of the public key will be $m\cdot n \cdot \log_2 q+\lambda$ bits ($\bm{M}_0$ and the seed used for the random matrices), and the size of the secret key will be $\lambda$ bits (if we sample $\bm{x}$ from a seed).

However, it is possible to optimize this key generation. First, we need to define the systematic form of a MinRank problem. Let $\bm{L}_1$ be the $(k+1) \times mn$ matrix, composed of $\bm{M}_0, \dots, \bm{M}_k$ written in lines, following the row order. Then, if the first $k$ columns and $k$ rows of $\bm{L}_1$ form a full rank matrix, we can obtain a matrix $\bm{L} = \begin{bmatrix}
    \begin{matrix}
    \bm{I}_k \\
    0 \dots 0
    \end{matrix} & \bm{L'}_1
\end{bmatrix}$ where $\bm{I}_k$ is the $k\times k$ identity matrix, using row operations. Row operations on the matrix $\bm{L}_1$ correspond to a linear combination of $\bm{M}_0, \dots, \bm{M}_k$, meaning they don't change the number of solutions of the instance.

For a generic MinRank instance (i.e, with high probability, as noted in \cite{BESV22} and used in \cite{BBBGT22}), this transformation will be possible. This means we can generate the instance directly in this systematic form, without changing the difficulty of the problem. To proceed, we will use the procedure described in~\cite{BESV22}:
\begin{itemize}
    \item Sample $\bm{L} \in \Fq^{k\times mn}$ uniformly such that $\bm{L} = \begin{bmatrix}
        \bm{I}_k & \bigg| L'
    \end{bmatrix}$
    where $\bm{I}_k$ is the $k\times k$ identity matrix and $\bm{L'}$ is an $k \times (mn-k)$ matrix;
    \item Set $\bm{M}_i$ as the i-th line of $\bm{L}$, written in a matrix form (in row order, i.e, the first row of the matrix is the first $n$ entries of the lines, the second row is the next $n$ entries and so on);
    \item Sample $\bm{E} \sampler \Fq^{m\times n}$ uniformly such that $\operatorname{W}_R(\bm{E}) \le r $ and $\bm{\beta} \sampler \Fqk$;
    \item Compute $\bm{F} = \bm{E}-\sum_{i=1}^{k}\beta_i\bm{M}_i$;
    \item Compute $\bm{M}_0 = \bm{F} - \sum_{i=1}^{k}f_i\bm{M}_i$ where $\bm{f} = (f_1 \dots f_k)$ are the first $k$ entries, in row order, of $\bm{F}$;
    \item Set $\bm{x} = \bm{\beta} + \bm{f}$.
\end{itemize}
It is clear that we have $\bm{M}_0 = \bm{E} - \sum_{i=1}^{k} x_i\bm{M}_i$ and that the first $k$ entries of $\bm{M}_0$ are zeros. Since $\bm{L}$ comes from a random seed, our secret key $\bm{x}$ can be retrieved from the used seed, and is thus of size $\lambda$ bits. For our public key, we have $\bm{M}_1, \dots, \bm{M}_k$ of size $\lambda$ bits, plus $\bm{M}_0$, of size $(mn-k)\cdot \log_2 q$ bits.
\begin{figure}[H]
\pcb[codesize=\scriptsize, minlineheight=0.75\baselineskip, mode=text, width=0.90\textwidth] { 
    \begin{itemize}
        \item Sample $\bm{L} \in \Fq^{k\times mn}$ uniformly such that $\bm{L}$ is of the form described above
        \item Set $\bm{M}_i$ as the i-th line of $\bm{L}$, written in a matrix form (in row order)
        \item Sample $\bm{E} \sampler \Fq^{m\times n}$ uniformly such that $\operatorname{W}_R(\bm{E}) \le r $, and $\bm{\beta} \sampler \Fqk$
        \item Compute $\bm{F} = \bm{E}-\sum_{i=1}^{k}\beta_i\bm{M}_i$
        \item Compute $\bm{M}_0 = \bm{F} - \sum_{i=1}^{k}f_i\bm{M}_i$ where $\bm{f} = (f_1 \dots f_k)$ are the first $k$ entries, in row order, of $\bm{F}$
        \item Set $\bm{x} = \bm{\beta} + \bm{f}$
        \item Set $\mathsf{pk} = (\bm{M}_0,\dots,\bm{M}_k)$ and $\mathsf{sk} = \bm{x}$
        \item Return $(\mathsf{pk},\mathsf{sk})$.
    \end{itemize}
}
\vspace{-\baselineskip}
\captionof{figure}{Key generation -- Optimized version}
\end{figure}

\subsection{Rank Checking MPC protocol}

It is necessary to build a protocol allowing us to verify that the space generated by a list of $n \geq r$ elements is at most $r$.
We will build here a MPCitH ZK-proof based on Feneuil's  protocol using q-polynomials \cite{F22}, that we will remind here.
We want to check that a list of elements $(e_j)_{j\in \oneto{n}}$, where $e_j\in\Fqm$, leads to a $\mathbb{F}_q$-linear subspace $U$ of dimension at most $r$.

Given $U$, let $L_U$ be the polynomial: $$L_U=\prod_{u\in U} (X-u)\in\Fqm[X].$$
It is possible to show that all the $e_j$ are roots of $L_U$, which is a q-polynomial. It means that $L_U$ is of the form: $$L_U(X)=\sum_{i=0}^{r-1}\beta_iX^{q^i}+X^{q^r}$$
Rather than checking separately that each $e_j$ is a root of $L_U$, we will batch all these verifications by uniformly sampling $\gamma_1,...,\gamma_n$ in an extension $\mathbb{F}_{q^{m\eta}}$ of $\Fqm$ and check that: $$\sum_{j=1}^n\gamma_jL_U(e_j)=0$$
If one or more of the $e_j$ is not a root of $L_U$, the above equation is satisfied with probability at most $\frac{1}{q^{m\eta}}$. Then, \begin{align*}
\sum_{j=1}^n\gamma_jL_U(e_j) &= \sum_{j=1}^n\gamma_j \left(\sum_{i=0}^{r-1}\beta_ie_j^{q^i}+e_j^{q^r}\right)\\
&= \sum_{j=1}^n\gamma_je_j^{q^r} +\sum_{i=0}^{r-1}\beta_i \sum_{j=1}^n\gamma_je_j^{q^i}
\end{align*}
Defining $z=-\sum_{j=1}^n\gamma_je_j^{q^r}$ and $w_i=\sum_{j=1}^n\gamma_je_j^{q^i}$, proving the equation is equivalent to prove that $$z=\langle\bm{\beta},\bm{w}\rangle.$$ The latter equation will be checked thanks to the multiplication protocol from \cite{BN20} (adapted in the matrix setting, see~\cite{F22}).
In the case of MinRank, we want to apply this protocol to $\bm{E}$, which we will have to compute. In what follows, we work on the columns of $\bm{E}$ and in $\Fqm$. Depending on the parameters, it might also be easier to work on the rows of $\bm{E}$.
This gives us the following protocol from \cite{F22}: 
\begin{figure}[H]
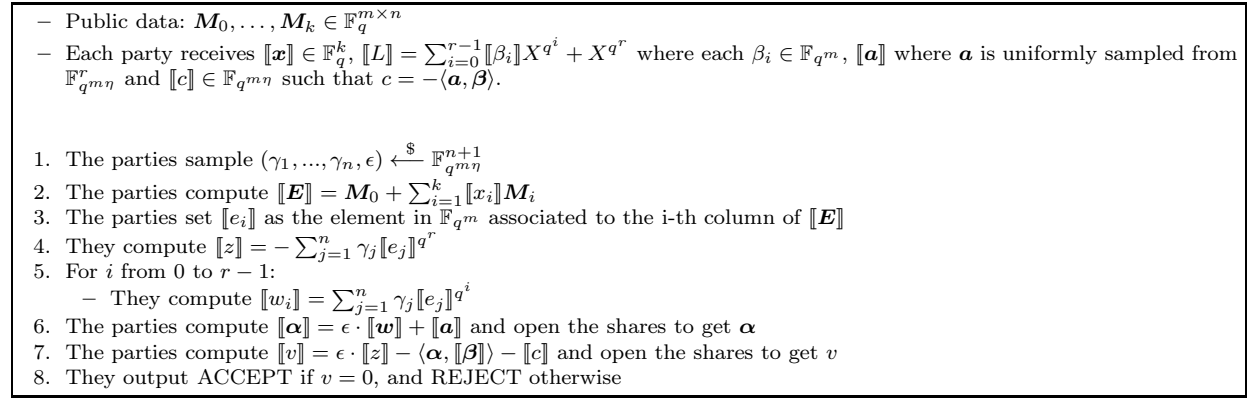

\pcb[codesize=\scriptsize, minlineheight=0.75\baselineskip, mode=text, width=0.98\textwidth] {
  \begin{itemize}
    \item Public data: ${\bm{M}_0},\dots,{\bm{M}_k} \in \Fq^{m\times n}$
    \item Each party receives $[\![\bm{x}]\!] \in \mathbb{F}_q^k$, $[\![L]\!]=\sum_{i=0}^{r-1}[\![\beta_i]\!]X^{q^i}+X^{q^r}$ where each $\beta_i\in\Fqm$, $[\![\bm{a}]\!]$ where $\bm{a}$ is uniformly sampled from $\mathbb{F}_{q^{m\eta}}^r$ and $[\![c]\!]\in\mathbb{F}_{q^{m\eta}}$ such that $c=-\langle \bm{a},\bm{\beta}\rangle$.
  \end{itemize}
  \begin{enumerate}
    \item The parties sample $(\gamma_1,...,\gamma_n,\epsilon) \sampler \mathbb{F}_{q^{m\eta}}^{n+1}$
    \item The parties compute $\share{\bm{E}} = \bm{M}_0 + \sum_{i=1}^{k}[\![x_i]\!]\bm{M}_i$
    \item The parties set  $[\![e_i]\!]$ as the element in $\Fqm$ associated to the i-th column of $[\![\bm{E}]\!]$
    \item They compute $[\![z]\!] = -\sum_{j=1}^n \gamma_j[\![e_j]\!]^{q^r}$
    \item For $i$ from $0$ to $r-1$:
    \begin{itemize}
      \item They compute $[\![w_i]\!]=\sum_{j=1}^n \gamma_j[\![e_j]\!]^{q^i}$
    \end{itemize}
    \item The parties compute $[\![\bm{\alpha}]\!] = \epsilon \cdot [\![\bm{w}]\!]+[\![\bm{a}]\!]$ and open the shares to get $\bm{\alpha}$
    \item The parties compute $[\![v]\!]=\epsilon \cdot [\![z]\!]-\langle\bm{\alpha},[\![\bm{\beta}]\!]\rangle-[\![c]\!]$ and open the shares to get $v$
    \item They output ACCEPT if $v=0$, and REJECT otherwise
  \end{enumerate}
}
\vspace{-\baselineskip}
\captionof{figure}{Protocol $\Pi^{\eta}$ to check that an input is solution of an instance of MinRank}
\label{pi_eta}
\end{figure}
One should stress that we can locally compute shares to the power $q^i$. Since we are in $\Fq$, this operation is linear, and thus this operation is basically ``free'' in multiparty computation.

\begin{proposition}[\cite{F22}] If $\operatorname{W}_R \:(\bm{E})\leq r$, then the protocol $\Pi^\eta$ always accepts. If $\operatorname{W}_R \:(\bm{E})> r$, the protocol accepts with  probability at most: $p_{\Pi,\eta}=\frac{2}{q^{m\eta}}-\frac{1}{q^{2m\eta}}$ (we call $p_{\Pi,\eta}$ the false positive rate of $\Pi^{\eta}$).
\end{proposition}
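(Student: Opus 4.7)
The plan is to treat the two regimes separately and check, in each case, under what randomness the final value $v$ vanishes.

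\textbf{Completeness.} Assume $\operatorname{W}_R(\bm{E})\le r$, so the columns $(e_j)_{j\in\oneto{n}}$ span a subspace $U\subset\Fqm$ of dimension at most $r$. By the result of Ore recalled earlier, there exists a monic q-polynomial $L_U$ of q-degree exactly $r$ vanishing on $U$ (pad with a trivial factor if $\dim U<r$). If the prover takes $L=L_U$, then $L(e_j)=0$ for every $j$, hence $z=-\sum_j\gamma_j e_j^{q^r}$ and $w_i=\sum_j\gamma_j e_j^{q^i}$ satisfy $z=\langle\bm{\beta},\bm{w}\rangle$. Using $\bm{\alpha}=\epsilon\bm{w}+\bm{a}$ and $c=-\langle\bm{a},\bm{\beta}\rangle$, I would then just expand
\[
v=\epsilon z-\langle\bm{\alpha},\bm{\beta}\rangle-c=\epsilon\bigl(z-\langle\bm{w},\bm{\beta}\rangle\bigr)=0,
\]
which shows the protocol always accepts. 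This part is essentially a bookkeeping check.

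\textbf{Soundness.} Now assume $\operatorname{W}_R(\bm{E})>r$. The core observation is that the roots of any non-zero q-polynomial of q-degree $r$ form an $\Fq$-subspace of dimension at most $r$ (second proposition on q-polynomials), so for every admissible $\bm{\beta}$ chosen by the cheating prover there exists at least one index $j_0$ with $u_{j_0}:=L(e_{j_0})\ne 0$. I would then introduce the event
\[
A\ :\ \sum_{j=1}^n\gamma_j L(e_j)=0,
\]
and bound $\Pr[A]\le q^{-m\eta}$ by fixing all $\gamma_j$ for $j\ne j_0$: exactly one value of $\gamma_{j_0}\in\mathbb{F}_{q^{m\eta}}$ satisfies the equation.

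\textbf{Combining the two randomness sources.} Conditioning on $A$ and $\neg A$ separately:
\begin{itemize}
\item on $A$, we have $z=\langle\bm{\beta},\bm{w}\rangle$, and the same expansion as in completeness gives $v=0$ unconditionally;
\item on $\neg A$, $z\ne\langle\bm{w},\bm{\beta}\rangle$, so $v=\epsilon(z-\langle\bm{w},\bm{\beta}\rangle)$ vanishes iff $\epsilon=0$, which happens with probability $q^{-m\eta}$ over the independent uniform sampling of $\epsilon$.
\end{itemize}
Therefore
\[
\Pr[\text{accept}]=\Pr[A]+\Pr[\neg A]\cdot q^{-m\eta}\le q^{-m\eta}+(1-q^{-m\eta})q^{-m\eta}=\frac{2}{q^{m\eta}}-\frac{1}{q^{2m\eta}},
\]
which is the stated bound. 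The only delicate point is to argue that $\gamma$ and $\epsilon$ are truly independent uniforms in the soundness analysis (they are sampled jointly at step 1), so the conditional computation is legitimate; the rest reduces to the two one-line probability bounds above.
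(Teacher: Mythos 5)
Your proof is correct and is exactly the standard argument that the paper defers to by citation to \cite[Section 5.1]{F22}: completeness by direct expansion of $v$ using the annihilator polynomial, and the false-positive bound by conditioning on the event $\sum_j\gamma_j L(e_j)=0$ (probability $q^{-m\eta}$ since some $L(e_{j_0})\neq 0$ when the span has dimension $>r$) and then on $\epsilon=0$ in the complementary case, using the coordinate-wise independence of the jointly uniform challenge. The only cosmetic imprecision is the phrase ``pad with a trivial factor'' when $\dim U<r$ --- the correct mechanism is to take the annihilator polynomial of any $r$-dimensional subspace containing $U$ (q-polynomials compose rather than multiply) --- but this does not affect the validity of the argument.
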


\begin{proof}
The proof is the same as the one in \cite[Section 5.1]{F22}
\end{proof}

Thanks to this MPC protocol, we can build a zero-knowledge proof of knowledge protocol, using either additive secret sharing (Fig.\ref{hypercubempcith}) or threshold secret sharing (Fig.\ref{thmpcith}).

\section{Description of the Protocols} \label{specif}

\subsection{Description of MIRA-Additive}

\subsubsection{Proof of Knowledge with Additive Secret Sharing Scheme} \hfill

We describe here the idea of the hypercube technique introduced by \cite{AGHHJY22}.
\begin{itemize}
    \item The prover generates $N = 2^D$ shares of the inputs of the $\Pi^\eta$ protocol (Fig.\ref{pi_eta}), using an additive secret sharing scheme;
    \item He then computes the shares associated to the ``main parties'', i.e, the parties associated to each dimension, by summing up the $2^{D-1}$ shares that have the same index on the dimension (in the same way as explained in section \ref{mpc_pok});
    \item There are $D$ dimensions, each of them having $2$ main parties. The prover sends the commitments of all the $N$ shares; 
    \item He then receives the first challenge, which is some random values for the computation of the $\Pi^\eta$ protocol;
    \item Then, for $D+1$ main parties, he executes the protocol $\Pi^\eta$. For the $D-1$ other parties, he will use the broadcasted values of $\bm{\alpha}$ and $v$ in order to compute their shares of $\bm{\alpha}$ and of $v$. He will hash these values;

     \item He sends to the verifier the hash of these values (it is possible to hash them twice, as is done in the description of Fig.\ref{exec_pi_hypercube}, or not); 
    \item He receives the second challenge, which is a single leaf, $i^*$;
    \item He reveals every commitment except the $i^*$\textsuperscript{th} one, and sends the share $\share{\bm{\alpha}}_{i^*}$ broadcast by the party $i^*$;
    \item The verifier can then reconstruct the broadcast shares of the main parties, i.e the $2$ shares for each dimension, and then check that everything is correct. Similarly to the prover, he can avoid the computation of the protocol for $D-1$ parties.
\end{itemize}

One detail to notice is that in Fig.\ref{exec_pi_hypercube}, we use an hash function to output the broadcast shares. We stress that this hash function is not mandatory. The security of the protocol with or without it is the same. Depending on the implementation, using it can be more convenient (for example, to save memory when the used implementation of the hash functions is not incremental). In our security proofs, we will consider that such hash function is used. 

Finally, if $i^* \ne N$, the prover has to send $\share{\bm{x}}_N, \share{\bm{\beta}}_N,\share{c}_N$ in addition to the sibling path. This comes from the fact that these shares are not completely random.

This gives us the protocol provided in Fig.\ref{hypercubempcith}.
We insist on the fact that, as mentioned in section \ref{mpc_pok}, a prover doesn't need to simulate $2D$ parties in step 10, but only $D+1$, as the values of $\bm{\alpha}$ and $v$ are the same, no matter the dimension. In fact, for $D-1$ dimensions in step 10, it is possible to retrieve $\share{\bm{\alpha}}_{(k,2)}$ by computing $\bm{\alpha}-\share{\bm{\alpha}}_{(k,1)}$ instead of simulating the MPC protocol. We emphasize on this optimisation, as this is a crucial advantage of the hypercube structure.


\begin{figure}[ht]
\pcb[codesize=\scriptsize, minlineheight=0.75\baselineskip, mode=text, width=0.98\textwidth] { 
  \pcind - Public data $\bm{M}_0,\dots,\bm{M}_k \in \Fq^{m\times n}$\\ 
  The prover wants to convince the verifier that he knows a solution $\bm{x}\in \Fqk$ of the MinRank instance, i.e  such as $\bm{E} = \bm{M}_0 + \sum_{i=1}^k(\bm{M}_ix_i) $ and $\operatorname{W}_R(\bm{E}) \le r$ \\[\baselineskip]
  \textbf{Step 1: Commitment} \\
 1. The prover sets $U = \langle e_1 \dots e_n \rangle$, and computes $L(X) = \prod_{u\in U} (X-u) = X^{q^r}+\sum_{i=0}^{r-1} \beta_i \cdot X^{q^i}$ for some $\beta_i \in \Fqm^{r}$\\
  2. The prover samples a root seed: $\seed\sampler\{0, 1\}^\lambda$\\
  3. The prover expands the root seed recursively using TreePRG to obtain $N$ leaves, from which he derives $N$ seeds and $N$ commitment random tapes $(\seed_{i'},\rho_{i'})$ \\
  4. For each $i\in\oneto{N-1}$: \\
  \pcind \pcind - Sample $(\share{\bm{x}}_i,\share{\bm{\beta}}_i,\share{\bm{a}}_i\share{c}_i)\samples{\seed_i} PRG$ where PRG is a pseudo-random generator\\
  \pcind \pcind - $\state_i=\seed_i$\\
  5. For the share $N$: \\
  \pcind \pcind - Sample $\share{\bm{a}}_{N}\samples{\seed_{N}} PRG$\\
  \pcind \pcind - Compute $\share{\bm{x}}_{N}=\bm{x}-\sum_{i=1}^{N-1}\share{\bm{x}}_i$, $\share{\bm{\beta}}_{N}=\bm{\beta}-\sum_{i=1}^{N-1}\share{\bm{\beta}}_i$ and $\share{c}_{N}=-\dotp{\boldsymbol{a}, \bm{\beta}}-\sum_{i=1}^{N-1}\share{c}_i$\\
  \pcind \pcind - $\state_{N}=(\seed_{N},\share{\bm{x}}_{N},\share{\bm{\beta}}_{N},\share{c}_{N})$\\
  6. The prover computes the commitments: $\cmt_i=\commit{\state_i,\rho_i}$, for each $i\in\oneto{N}$.\\
  7. The prover computes and sends $h_0 = \hash(\cmt_1, \cdots, \cmt_{N})$ to the verifier.\\
  8. The prover computes the input shares of the main parties by summing all the associated leaves. We index each party by its coordinates on the hypercube: $i=(i_1,...,i_D)$ where $i_k\in\oneto{2}$. For all main party index $p\in\lbrace (1,1),...,(D,2)\rbrace$: $\share{\bm{x}}_{(p_1,p_2)}=\sum_{i: i_{p_1}=p_2}\share{\bm{x}}_i$, $\share{\bm{\beta}}_{(p_1,p_2)}=\sum_{i: i_{p_1}=p_2}\share{\bm{\beta}}_i$, $\share{\bm{a}}_{(p_1,p_2)}=\sum_{i: i_{p_1}=p_2}\share{\bm{a}}_i$ and $\share{c}_{(p_1,p_2)}=\sum_{i: i_{p_1}=p_2}\share{c}_i$.\\
  [\baselineskip]
  \textbf{Step 2: First Challenge} \\
  9. The verifier sends $\big( (\gamma_j)_{j \in \oneto{n}}, \epsilon \big) \in \Fqme^n \times \Fqme$ to the prover.\\[\baselineskip]
  \textbf{Step 3: First Response} \\
  10. For each dimension $k\in\oneto{D}$, the prover executes the algorithm in Fig. \ref{exec_pi_hypercube} on the set of the main parties of the dimension. This set is denoted $P_k$. He computes $\big((\share{\bm{\alpha}}_{(k,i)},\share{v}_{(k,i)})_{i \in \oneto{2}},H_k\big)\longleftarrow \text{Algorithm Fig.\ref{exec_pi_hypercube}}(\big( P_k,(\gamma_j)_{j \in \oneto{n}}, \epsilon \big)$. He only needs to execute the algorithm $D+1$ times, as explained above.\\
11. The prover commits the executions: $h_1=\mathsf{H}(H_1,...,H_D)$.\\[\baselineskip]
  \textbf{Step 4: Second Challenge}\\
  12. The verifier gets a random leaf $i^*\in\oneto{N}$ and sends it to the prover.\\[\baselineskip]
  \textbf{Step 5: Second Response and Verification}\\
  13. The prover sends to the verifier: $\left(\cmt_{i^*},\share{\bm{\alpha}}_{i^*},(\state_j,\rho_j)_{j\neq i^*}\right)$, where $(\state_j,\rho_j)_{j\neq i^*}$ corresponds to the sibling path to this commitment. If $i^* \ne N$, he also has to send $\share{\bm{x}}_{N},\share{\bm{\beta}}_{N}$ and $\share{c}_{N}$.\\
  14. The verifier can deduce all the leaves (with the exception of index $i^*$), and recover $h_0$ using the sibling path and $\cmt_{i^*}$.\\
  15. For all dimension $k\in\oneto{D}$, the verifier runs the algorithm in Fig. \ref{check_pi_hypercube} to get $(\share{\bm{\alpha}}_{(k,i)},\share{v}_{(k,i)})_{i \in \oneto{2}}$ and $H_k$. Then he checks that\\
  \pcind \pcind - $v=0$ for all dimensions.\\
  \pcind \pcind - $\bm{\alpha}$ is the same for all dimensions.\\
  \pcind \pcind -  $h_1=\mathsf{H}(H_1,...,H_D)$
  }
\vspace{-\baselineskip}
\caption{\footnotesize{MinRank Proof of Knowledge with additive-sharing MPCitH}}\label{hypercubempcith}

\end{figure}

\begin{figure}[ht]
\pcb[codesize=\scriptsize, minlineheight=0.75\baselineskip, mode=text, width=0.98\textwidth] { 
  \textbf{Inputs:} A set of main shares for the dimension $k$: $\big((\share{\bm{x}}_{(k,i)},\share{\bm{\beta}}_{(k,i)},\share{\bm{a}}_{(k,i)}, \share{c}_{(k,i)})\big)_{i \in \oneto{2}}$ and a protocol challenge $\big( (\gamma_j)_{j \in \oneto{n}}, \epsilon \big)$\\
   \textbf{Outputs:} A set of main shares $\big( \bm{\alpha}_{(k,i)},\share{v}_{(k,i)}\big)_{i \in \oneto{2}}$ and a commitment $H_k$ of the execution\\[\baselineskip]
  \textbf{For} each main party $i\in\{1,2\}$: \\
\pcind - Compute $\share{\bm{E}}_{(k,i)} = \bm{M}_0 + \sum_{j=1}^{k}[\![x_j]\!]_{(k,i)}\bm{M}_j$ \\ 
  \pcind - Set  $[\![e_j]\!]_{(k,i)}$ as the element in $\Fqm$ associated to the j-th column of $[\![\bm{E}]\!]_{(k,i)}$ \\
  \pcind - Compute $[\![z]\!]_{(k,i)} = -\sum_{j=1}^n \gamma_j\cdot[\![e_j]\!]_{(k,i)}^{q^r}$ \\
 \pcind \textbf{For} $l$ from $0$ to $r-1$: \\
\pcind  \pcind $\diamond$ Compute $[\![w_l]\!]_{(k,i)}=\sum_{j=1}^n \gamma_j\cdot[\![e_j]\!]_{(k,i)}^{q^l}$ \\
  \pcind - Compute $[\![\bm{\alpha}]\!]_{(k,i)} = \epsilon \cdot [\![\bm{w}]\!]_{(k,i)}+[\![\bm{a}]\!]_{(k,i)}$ and reveal $\bm{\alpha}$ \\
  \pcind - Compute $[\![v]\!]_{(k,i)}=\epsilon \cdot [\![z]\!]_{(k,i)}-\langle\bm{\alpha},[\![\bm{\beta}]\!]_{(k,i)}\rangle-[\![c]\!]_{(k,i)}$ \\
 Compute $H_k=\hash\big((\share{\bm{\alpha}}_{(k,i)},\share{v}_{(k,i)})_{i \in \oneto{2}}\big)$
  }
\vspace{-\baselineskip}
\captionof{figure}{\footnotesize{Execution of the MPC protocol $\Pi^\eta$ on a set of main shares}}
\label{exec_pi_hypercube}
 \end{figure}
\begin{figure}[ht]
\pcb[codesize=\scriptsize, minlineheight=0.75\baselineskip, mode=text, width=0.98\textwidth] {
  \textbf{Inputs:} A leaf $i^*$ that one does not reveal. $\Tilde{i}$ is the index depending on the leaf $i^*$. The share $\share{\bm{\alpha}}_{i^*}$, and all the main parties shares $\big((\share{\bm{x}}_{(k,i)},\share{\bm{\beta}}_{(k,i)},\share{\bm{a}}_{(k,i)},\share{c}_{(k,i)})\big)_{i \in \oneto{2}}$. For $i = \Tilde{i}$, the main shares correspond to the usual main party shares, except the share $i^*$ is missing. \\
   \textbf{Outputs:}  $(\share{\bm{\alpha}}_{(k,i)},\share{v}_{(k,i)})_{i \in \oneto{2}}$ and a commitment $H_k$ of the execution\\[\baselineskip]
  \textbf{For} $i \in \oneto{2}$:\\
  \pcind \pcind \pcind \textbf{If} $i=\Tilde{i}$:\\
  \pcind \pcind \pcind \pcind Set $\share{v}_{i^*}$ such that $v=0$ and compute $\share{v}_{(k,\Tilde{i})}$\\
  \pcind \pcind \pcind \pcind Compute $\share{\bm{\alpha}}_{(k,\Tilde{i})} =\sum_{j:j_k = \Tilde{i},j \ne i^*}\share{\bm{\alpha}}_{j} +\share{\bm{\alpha}}_{i^*}$ where $\sum_{j:j_k = \Tilde{i},j \ne i^*}\share{\bm{\alpha}}_{j}$ is obtained by applying $\Pi^\eta$ to the partially-aggregated input shares.\\
  \pcind \pcind \textbf{Else}:\\
  \pcind \pcind \pcind \pcind Do the same computations as in $\Pi^\eta$ to obtain the correct shares $\share{\bm{\alpha}}_{(k,i)}$ and $\share{v}_{(k,i)}$\\
  Compute $H_k=\hash\big((\share{\bm{\alpha}}_{(k,i)},\share{v}_{(k,i)})_{i \in \oneto{2}}\big)$
  }
\vspace{-\baselineskip}
\captionof{figure}{\footnotesize{Check the executions of the MPC protocol $\Pi^\eta$ on the main parties}}\label{check_pi_hypercube}
\end{figure}

We prove that our scheme satisfies the necessary security properties in the proof of theorem \ref{Theoremhypercube} in section \ref{sec_proofs_pok}.

\subsubsection{MIRA-Additive} \hfill

From the zero-knowledge protocol, we can simply deduce a signature scheme, using the Fiat-Shamir transform. The protocol is now non-interactive, and we deterministically sample the challenge thanks to a hash function. We must repeat the zero-knowledge protocol several times, in order to reach a certain level of security. Note that we use a value $\salt$, as in \cite{FJR22}, in order to increase the security of the scheme as it reduces the probability to have seeds collisions. Moreover, simarly than in the proof of knowledge, the hash function $\hash_3$ is not mandatory and depends only of implementation choices. The signature protocol obtained is described in Fig.\ref{add_sig} and Fig.\ref{add_sig_verify}.
As aforementioned in the proof of knowledge, the signature process avoid the computation of the MPC protocol for $D-1$ parties in step 8. The verification process avoids it in step 6 as well.

\begin{figure}[ht]
\pcb[codesize=\scriptsize, minlineheight=0.75\baselineskip, mode=text, width=0.98\textwidth] { 
\textbf{Inputs}\\
  \pcind - Secret key $\sk = (\bm{x}) \in \mathbb{F}_q^k$, $(\bm{E}) \in \Fq^{m\times n}$ such that $\operatorname{W}_R(\bm{E}) \leq r$\\
  \pcind - Public data $\bm{M}_0,\dots,\bm{M}_k \in \Fq^{m\times n}$ with $\bm{E} = \bm{M}_0 + \sum_{i=1}^k\bm{M}_ix_i $\\
  \pcind - Message $m \in \{0, 1\}^*$ \\[\baselineskip]
  \textbf{Step 1: Commitment} \\
  1. Sample a random salt value $\salt \sampler \{0, 1\}^{2\lambda}$ \\
  2. Set $U = \langle e_1, \dots, e_n \rangle $ and compute $\bm{\beta} = (\beta_k)_{k \in [0, r - 1]} \in \Fqm^r$ the coefficients of the annihilator q-polynomial $L(X)$ associated to $\bm{{E}}$ such that $L(X) = \prod_{u \in U} (X-u) = X^{q^r} + \sum_{k = 0}^{r - 1} \beta_k \cdot X^{q^k}$ and $\forall j \in \oneto{n}, L({{e}}_j) = 0$ where each $e_j$ is the element in $\Fqm$ associated to the j-th column of $\bm{E}$  \\
  3. For each iteration $e \in \oneto{\tau}$: \\
  \pcind - Sample a root seed: $\seed^{(e)}\leftarrow \lbrace 0,1\rbrace^\lambda$ \\
  \pcind - Expand root seed recursively with TreePRG to obtain $N$ seeds and randomness $(\seed^{(e)}_{i'},\rho_{i'})$ \\
  \pcind \textbf{For} each $i\in\oneto{N}$: \\
  \pcind \pcind \pcind \textbf{If} $i\neq N$: \\
  \pcind \pcind \pcind \pcind \pcind $\diamond$ Sample $(\share{\bm{x}}^{(e)}_i,\share{\bm{\beta}}^{(e)}_i,\share{\bm{a}}^{(e)}_i\share{c}^{(e)}_i)\samples{\seed_i^{(e)}} PRG$ where PRG is a pseudo-random generator \\
  \pcind \pcind \pcind \pcind \pcind $\diamond$ $\state^{(e)}_i=\seed^{(e)}_i$\\
  \pcind \pcind \pcind \textbf{Else}: \\
  \pcind \pcind \pcind \pcind \pcind $\diamond$ Sample $\share{\bm{a}}^{(e)}_{N}\samples{\seed^{(e)}_{N}} PRG$\\
  \pcind \pcind \pcind \pcind \pcind $\diamond$ Compute $\share{\bm{x}}^{(e)}_{N}=\bm{x}-\sum_{i=1}^{N-1}\share{\bm{x}}^{(e)}_i$, $\share{\bm{\beta}}^{(e)}_{N}=\bm{\beta}-\sum_{i=1}^{N-1}\share{\bm{\beta}}^{(e)}_i$ and $\share{c}^{(e)}_{N}=-\dotp{\boldsymbol{a}, \bm{\beta}}-\sum_{i=1}^{N-1}\share{c}^{(e)}_i$\\
  \pcind \pcind \pcind \pcind \pcind $\diamond$ $\state_{N}^{(e)}=(\seed_{N}^{(e)},\share{\bm{x}}^{(e)}_{N},\share{\bm{\beta}}^{(e)}_{N},\share{c}^{(e)}_{N})$\\
  \pcind \pcind \pcind - Compute $\cmt_i^{(e)}=\mathsf{H}_0(\salt,e,i,\state_i^{(e)})$\\
  \pcind 4. For all main party index $p=(p_1,p_2) \in\lbrace (1,1),...(D,2)\rbrace$: $\share{\bm{x}}_{(p_1,p_2)}=\sum_{i: i_{p_1}=p_2}\share{\bm{x}}_i$, $\share{\bm{\beta}}_{(p_1,p_2)}=\sum_{i: i_{p_1}=p_2}\share{\bm{\beta}}_i$, $\share{\bm{a}}_{(p_1,p_2)}=\sum_{i: i_{p_1}=p_2}\share{\bm{a}}_i$ and $\share{c}_{(p_1,p_2)}=\sum_{i: i_{p_1}=p_2}\share{c}_i$.\\
  \pcind 5. Compute and commit $h_0^{(e)} =\hash_1(\salt,e,\cmt_1^{(e)},...,\cmt_{N}^{(e)})$\\
  \pcind 6. Compute $h_1=\mathsf{H}_2\left(\salt,m,h_0^{(1)},...,h_0^{(\tau)}\right)$\\
  [\baselineskip]
  \textbf{Step 2: First Challenge} \\
  7. Sample $\big( (\gamma^{(e)}_j)_{j \in \oneto{n}}, \epsilon^{(e)} \big)_{e \in \oneto{\tau}} \sampler PRG(h_1)$ where $ \big( (\gamma^{(e)}_j)_{j \in \oneto{n}}, \epsilon^{(e)} \big)_{e \in \oneto{\tau}} \in (\Fqme^n \times \Fqme)^{\tau}$ \\
  [\baselineskip]
  \textbf{Step 3: First Response} \\
  8. For each iteration $e \in \oneto{\tau}$: \\
  \pcind \textbf{For} each each dimension $k\in\oneto{D}$: \\
  \pcind \pcind - Execute the algorithm in Fig.\ref{exec_pi_hypercube} to obtain $\big(\share{\bm{\alpha}^{(e)}}_{(k,i)}, \share{v^{(e)}}_{(k,i)}\big)_{i \in \oneto{2}}$ and $H_k^{(e)}$, with the hash function $\hash_3$ \\
  9. Compute $h_2 = \hash_4(m, \pk, \salt, h_1, (H_1^{(e)} \dots H_D^{(e)})_{e \in \oneto{\tau}})$\\
  [\baselineskip]
  \textbf{Step 4: Second Challenge} \\
  10. Sample $i^{*(e)}\sampler PRG(h_2) $ where $ i^{*(e)} \in [1,N]^{\tau}$ \\
  [\baselineskip]
  \textbf{Step 5: Second Response} \\
  11.For each iteration $e \in \oneto{\tau}$: \\
  \pcind - Compute $\rsp^{(e)}=\left( (\state_j)^{(e)}_{j\neq i^*},\cmt_{i^{*(e)}},\share{\bm{\alpha}^{(e)}}_{i^{*(e)}}\right)$ where $(\state_j)^{(e)}_{j\neq i^*}$ corresponds to the sibling path. If $i^* \ne N$, he also has to send $\share{\bm{x}}_{N},\share{\bm{\beta}}_{N}$ and $\share{c}_{N}$. \\
  12. Output $\sigma = \left(\salt, h_1, h_2, (\rsp^{(e)})_{e \in \oneto{\tau}}\right)$
}
\vspace{-\baselineskip}
\captionof{figure}{\footnotesize{MIRA Signature Scheme based on additive sharing MPCitH - Signature Algorithm}}
\label{add_sig}
\end{figure}

\begin{figure}[ht]
\pcb[codesize=\scriptsize, minlineheight=0.75\baselineskip, mode=text, width=0.98\textwidth] { 
\textbf{Inputs}\\
  \pcind - Public data $\bm{M}_0,\dots,\bm{M}_k \in \Fq^{m\times n}$ such that there is $\bm{x} \in \Fq^k$ such that $\bm{E} = \bm{M}_0 + \sum_{i=1}^k(\bm{M}_ix_i) $ and $\operatorname{W}_R(\bm{E}) \leq r$\\
  \pcind - Message $m \in \{0, 1\}^*$ \\
  \pcind - Signature $\sigma = (\salt, h_1, h_2, (\bar{\rsp}^{(e)})_{e \in \oneto{\tau}})$ \\[\baselineskip]
  \textbf{Step 1: Parse signature} \\
  1. Sample $\big( (\gamma^{(e)}_j)_{j \in \oneto{n}}, \epsilon^{(e)} \big)_{e \in \oneto{\tau}} \sampler PRG(h_1)$ where $ \big( (\gamma^{(e)}_j)_{j \in \oneto{n}}, \epsilon^{(e)} \big)_{e \in \oneto{\tau}} \in (\Fqme^n \times \Fqme)^{\tau}$ \\
  2. Sample $i^{*(e)}\sampler PRG(h_2) $ where $ i^{*(e)}  \in [1,N]$\\
  3. Parse $\left( (\state_j)^{(e)}_{j\neq i^*},\cmt_{i^{*(e)}},\share{\bm{\alpha}^{(e)}}_{i^{*(e)}}\right) \gets \rsp^{(e)}$\\
  [\baselineskip]
  \textbf{Step 2: Recompute $h_1$}\\
  4. \textbf{For} each iteration $e \in \oneto{\tau}$: \\
  \pcind - \textbf{For} each $i\neq i^{*(e)}$:\\
  \pcind \pcind $\diamond$ $\cmt_{i}^{(e)}=\mathsf{H}_0(\salt,e,i,\state_{i}^{(e)})$\\
  \pcind - Compute $h_0^{(e)} = \hash_1\left(\salt,e,\cmt_1^{(e)},...,\cmt_{N}^{(e)}\right)$\\
  5. Compute $\bar{h}_1=\hash_2\left(\salt,m,h_0^{(1)},...,h_0^{(\tau)}\right)$\\
  [\baselineskip]
  \textbf{Step 3: Recompute $h_2$}\\
  6. \textbf{For} each iteration $e \in \oneto{\tau}$: \\
  \pcind - Simulate MPC protocol $\Pi^\eta$ on main parties. \\
  \pcind - \textbf{For} each dimension $k\in\oneto{D}$:\\
  \pcind \pcind $\diamond$ Run the algorithm in Fig. \ref{check_pi_hypercube} to get $\bar{H_k^{(e)}}$, in which $\share{\bm{\alpha}^{(e)}}_{i^{*(e)}}$ is used to compute the share of $\bm{\alpha}$ of the main party relying on $i^*$.\\
  7. Compute $\bar{h}_2 = \hash_4\left(m, \pk, \salt, \bar{h}_1, \left(\bar{H_1^{(e)}},...,\bar{H_D^{(e)}}\right)_{e \in \oneto{\tau}}\right)$  \\[\baselineskip]
  \textbf{Step 4: Verify signature} \\
  8. Return $(\bar{h}_1 = h_1) \wedge (\bar{h}_2 = h_2)$
  \\
}
\vspace{-\baselineskip}
\captionof{figure}{\footnotesize{MIRA Signature Scheme based on additive MPCitH - Verification Algorithm}}
\label{add_sig_verify}
\end{figure}

We prove that the scheme is EUF-CMA secure in the proof of theorem \ref{secu_sig_hypercube} in section \ref{sec_proofs_sig}.

\clearpage

\subsection{Description of MIRA-Threshold}

\subsubsection{Proof of Knowledge with Threshold Secret Sharing} \hfill \\

Once we have this zero-knowledge protocol (and signature scheme) with additive sharing, one may wonder if it is possible to use other secret sharing schemes to build MPCitH-based signature schemes. The answer is positive and given in \cite{FR22}: we can use any threshold linear secret sharing scheme. (Ligero~\cite{ligero} already considered Shamir's secret sharing but this scheme is only interesting for ``medium-size circuits'', which is not the case of circuits involved to build signature schemes.) When dealing with threshold secret sharing, we will consider (except when told otherwise) that we are referring to Shamir's secret sharing scheme. As reminded earlier (see Definition~\ref{def:shamir-sharing}), in order to build a sharing of a secret in this setting, one has to build the random polynomial with the secret as the constant term, and then evaluate this polynomial in the evaluation points. Since most of our secrets in our case are vectors, this has to be done for every coordinates, and so the notation can be pretty heavy. For the sake of simplicity, this process is not made explicit in the description of our protocols.

The protocol with threshold secret sharing works in the following way:
\begin{itemize}
    \item Using a $(\ell+1,N)$-threshold secret sharing scheme, the prover computes the shares of the inputs of the protocol $\Pi^\eta$, computes the commitments, and then computes the Merkle tree root of the commitments;
    \item He then receives the first challenge, i.e, the randomness for the protocol $\Pi^\eta$;
    \item He then chooses a set $S$ of $\ell+1$ parties, for which he executes the protocol $\Pi^\eta$, and computes the hash value of the executions;
    \item He then receives the second challenge. This challenge is a set $I$ of $\ell$ parties (instead of just $1$ as in the additive protocol). For these $\ell$ parties, he sends the authentication path of the commitments, and the elements to rebuild their shares. He then chooses one party, $i^*$, in $S \setminus I$, and reveal the value $\share{\bm{\alpha}}_{i^*}$ computed for this party;
    \item Thanks to this value, the verifier can recover everything he needs, and can then authenticate, or not, the prover.
\end{itemize}
Note that the knowledge of $\ell$ shares provides no information on the secret, since we use a $(\ell+1,N)$-threshold secret sharing. Moreover, this method permits to reduce the number of operations, for the cost a larger signature size (we refer to \cite{FR22} for more details).

However, one limitation arises: when using Shamir's secret sharings (or any low-threshold LSSS), we need to have $N \leq q$ to share values of $\Fq$. 
Finally, this gives us the proof of knowledge of Fig.\ref{thmpcith}.

\begin{figure}[ht]
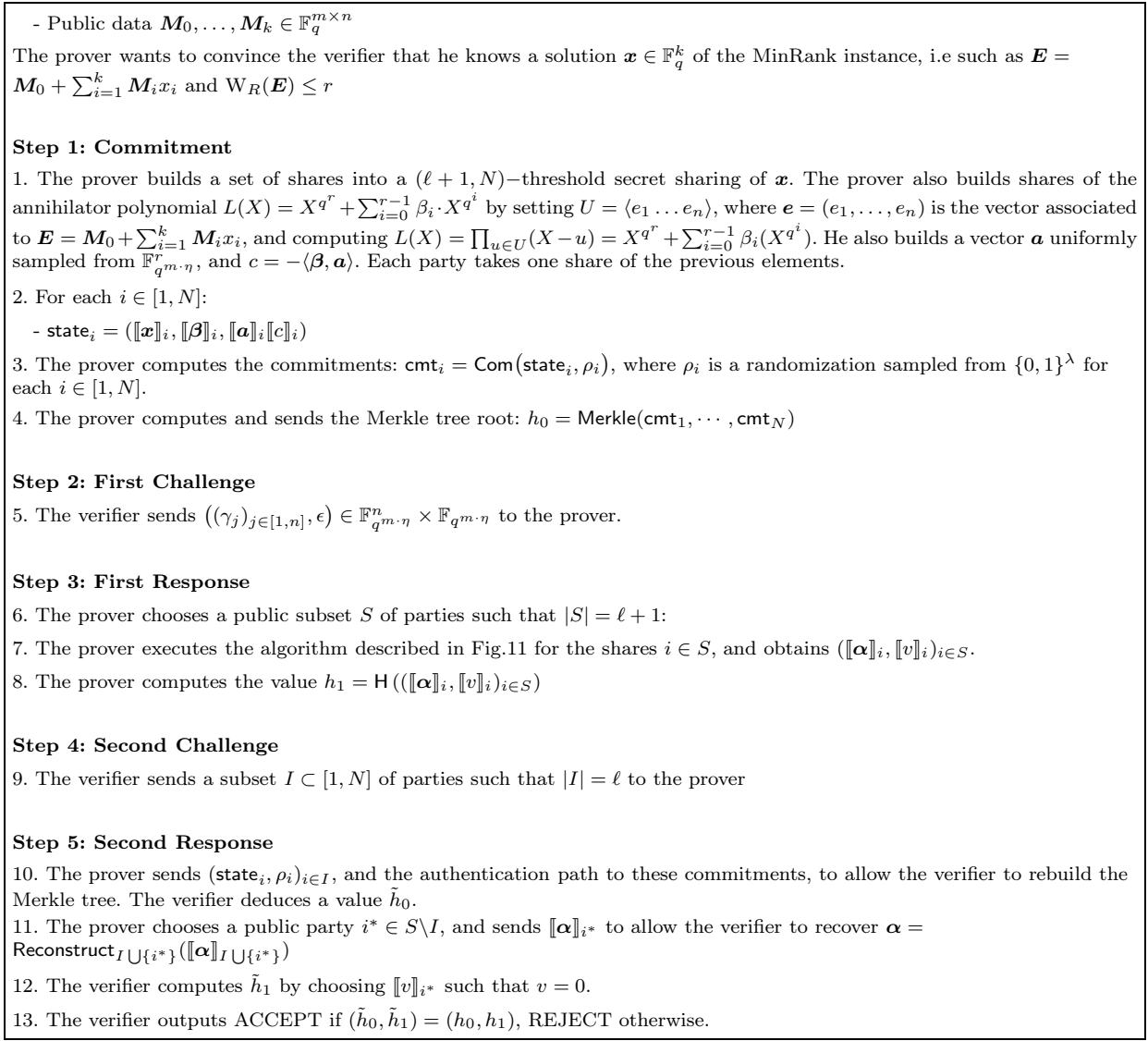

\pcb[codesize=\scriptsize, minlineheight=0.75\baselineskip, mode=text, width=0.98\textwidth] { 
  \pcind - Public data $\bm{M}_0,\dots,\bm{M}_k \in \Fq^{m\times n}$\\ 
  The prover wants to convince the verifier that he knows a solution $\bm{x}\in \Fqk$ of the MinRank instance, i.e  such as $\bm{E} = \bm{M}_0 + \sum_{i=1}^k\bm{M}_ix_i $ and $\operatorname{W}_R(\bm{E}) \le r$ \\[\baselineskip]
  \textbf{Step 1: Commitment} \\
  1. The prover builds a set of shares into a $(\ell+1,N)-$threshold secret sharing of $\bm{x}$. The prover also builds shares of the annihilator polynomial $L(X)=X^{q^r}+\sum_{i=0}^{r-1}\beta_i\cdot X^{q^i}$ by setting $U = \langle e_1 \dots e_n \rangle$, where $\bm{e} = (e_1, \dots, e_n)$ is the vector associated to $\bm{E} = \bm{M}_0 + \sum_{i=1}^k\bm{M}_ix_i $, and computing $L(X) = \prod_{u\in U} (X-u) = X^{q^r}+\sum_{i=0}^{r-1} \beta_i(X^{q^i})$. He also builds a vector $\bm{a}$ uniformly sampled from $\Fqme^{r}$, and $c=-\langle\bm{\beta},\bm{a}\rangle$. Each party takes one share of the previous elements.\\
  2. For each $i\in\oneto{N}$: \\
  \pcind - $\state_i=(\share{\bm{x}}_i,\share{\bm{\beta}}_i,\share{\bm{a}}_i\share{c}_i)$\\
  3. The prover computes the commitments: $\cmt_i=\commit{\state_i,\rho_i}$, where $\rho_i$ is a randomization sampled from $\{0, 1\}^\lambda$ for each $i\in\oneto{N}$.\\
  4. The prover computes and sends the Merkle tree root: $h_0 = \textsf{Merkle}(\cmt_1, \cdots, \cmt_N)$ \\[\baselineskip]
  \textbf{Step 2: First Challenge} \\
  5. The verifier sends $\big( (\gamma_j)_{j \in \oneto{n}}, \epsilon \big) \in \Fqme^n \times \Fqme$ to the prover.\\[\baselineskip]
  \textbf{Step 3: First Response} \\
 6. The prover chooses a public subset $S$ of parties such that $|S| = \ell + 1$: \\
 7. The prover executes the algorithm described in Fig.\ref{exec_pi_th} for the shares $i \in S$, and obtains $(\share{\bm{\alpha}}_i,\share{v}_i)_{i\in S}$.\\
  8. The prover computes the value  $h_1=\hash \left((\share{\boldsymbol{\alpha}}_i,\share{v}_i)_{i\in S}\right)$ \\[\baselineskip]
  \textbf{Step 4: Second Challenge} \\
 9. The verifier sends a subset $ I \subset \oneto{N} $ of parties such that $ |I| = \ell $ to the prover \\[\baselineskip]
 \textbf{Step 5: Second Response}\\
 10. The prover sends $(\state_i,\rho_i)_{i\in I}$, and the authentication path to these commitments, to allow the verifier to rebuild the Merkle tree. The verifier deduces a value $\Tilde{h}_0$.\\
 11. The prover chooses a public party $i^*\in S\backslash I$, and sends $\share{\boldsymbol{\alpha}}_{i^*}$ to allow the verifier to recover $\boldsymbol{\alpha}=\reconstruct_{I\bigcup \{i^*\}}(\share{\boldsymbol{\alpha}}_{I\bigcup \{i^*\}})$\\
 12. The verifier computes $\Tilde{h}_1$ by choosing $\share{v}_{i^*}$ such that $v=0$.\\
 13. The verifier outputs ACCEPT if $(\Tilde{h}_0,\Tilde{h}_1)=(h_0,h_1)$, REJECT otherwise.
}
\vspace{-\baselineskip}
\captionof{figure}{\footnotesize{MinRank Proof of Knowledge based on Threshold MPCitH}} \label{thmpcith}
\end{figure}

\begin{figure}[ht]
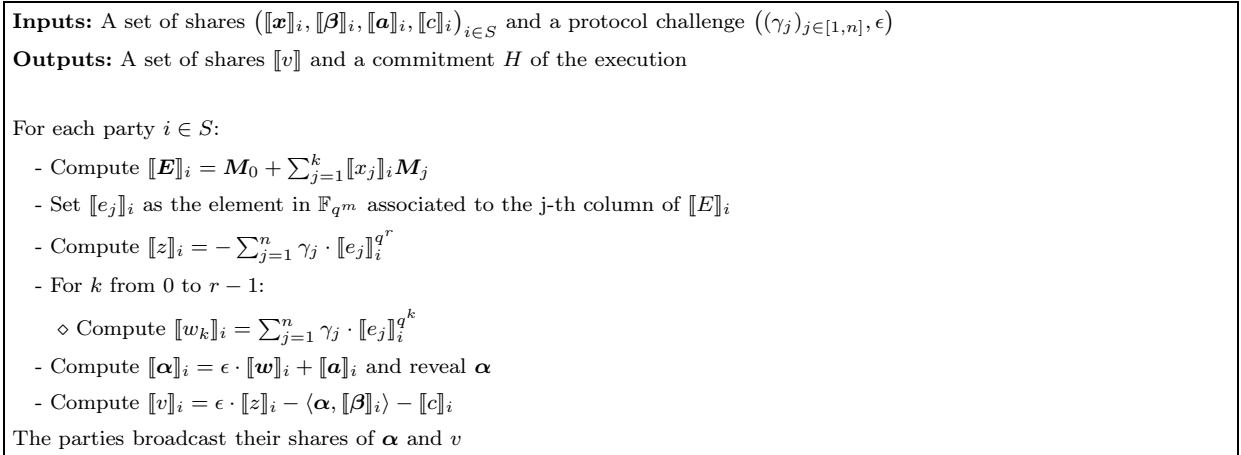

\pcb[codesize=\scriptsize, minlineheight=0.75\baselineskip, mode=text, width=0.98\textwidth] { 
  \textbf{Inputs:} A set of shares $\big(\share{\bm{x}}_i,\share{\bm{\beta}}_i,\share{\bm{a}}_i,\share{c}_i\big)_{i \in S}$ and a protocol challenge $\big( (\gamma_j)_{j \in \oneto{n}}, \epsilon \big)$\\
   \textbf{Outputs:} A set of shares $\share{v}$ and a commitment $H$ of the execution\\[\baselineskip]
  For each party $i\in S$: \\
\pcind - Compute $\share{\bm{E}}_i = \bm{M}_0 + \sum_{j=1}^{k}[\![x_j]\!]_i\bm{M}_j$ \\ 
  \pcind - Set  $[\![e_j]\!]_i$ as the element in $\Fqm$ associated to the j-th column of $[\![E]\!]_i$ \\
  \pcind - Compute $[\![z]\!]_i = -\sum_{j=1}^n \gamma_j \cdot [\![e_j]\!]_i^{q^r}$ \\
 \pcind - For $k$ from $0$ to $r-1$: \\
\pcind  \pcind $\diamond$ Compute $[\![w_k]\!]_i=\sum_{j=1}^n \gamma_j\cdot [\![e_j]\!]_i^{q^k}$ \\
  \pcind  - Compute $[\![\bm{\alpha}]\!]_i = \epsilon \cdot [\![\bm{w}]\!]_i+[\![\bm{a}]\!]_i$ and reveal $\bm{\alpha}$ \\
  \pcind  - Compute $[\![v]\!]_i=\epsilon \cdot [\![z]\!]_i-\langle\bm{\alpha},[\![\bm{\beta}]\!]_i\rangle-[\![c]\!]_i$ \\
 The parties broadcast their shares of $\bm{\alpha}$ and $v$
  }
\vspace{-\baselineskip}
\caption{\footnotesize{Execution of the MPC protocol $\Pi^\eta$ on a set of shares}} \label{exec_pi_th}
\end{figure}

Let us look at the soundness of the protocol.
If the probability of a false positive of the MPC protocol was equal to $0$ (i.e, the probability of a false positive of $\Pi^\eta$ was $0$), then a malicious prover (who doesn't know a solution of the MinRank instance) would need to cheat for exactly $N-\ell$ parties. Indeed, if he cheated for less parties, then at least $\ell+1$ shares would be consistent. Since we suppose the probability of false-positive is $0$, this means he would know a good witness. If he cheated for more than $N-\ell$ parties, the verifier would always ask for the opening of a corrupted party, hence he would always discover the cheat. This means that the malicious prover needs to cheat on exactly $N-\ell$ parties, and the only way that the verifier is convinced is when the verifier asks the opening of the exact set of the $\ell$ honest parties. This means that the probability that the verifier does not detect the cheat is $\frac{1}{\binom{N}{\ell}}$. \\
However, $p$ is not $0$ in our case. We can make the same reasoning, and furthermore, consider this false positive rate. We would expect to get a probability for a malicious prover to convince a verifier of $\frac{1}{\binom{N}{\ell}}+(1-\frac{1}{\binom{N}{\ell}})\cdot p$ (We remind the reader that $p =\frac{2}{q^{m\eta}}-\frac{1}{q^{2m\eta}} $). \\
Unfortunately, it is just a lower bound of the soundness of the protocol. As remarked in \cite{FR22}, there is a more effective way for a malicious prover to cheat at the protocol, as we will see in the proof of the theorem \ref{Theoremth } in section \ref{sec_proofs_pok}. \\[0.3cm]

\subsubsection{MIRA-Threshold} \hfill \\

From the zero-knowledge protocol, we can deduce a signature scheme thanks to the Fiat-Shamir transformation. The resulting scheme is non-interactive and we deterministically sample the challenge using a hash function. We must repeat the zero-knowledge protocol several times ($\tau$ times), in order to achieve a certain level of security. The obtained signature scheme is described in Fig.\ref{th_sig} and Fig.\ref{th_sig_verify}.
\begin{figure}[ht]
\pcb[codesize=\scriptsize, minlineheight=0.75\baselineskip, mode=text, width=0.98\textwidth] { 
\textbf{Inputs}\\
  \pcind - Public data $\bm{M}_0,\dots,\bm{M}_k \in \Fq^{m\times n}$\\
  \pcind - Secret key $\sk = (\bm{x}) \in \mathbb{F}_q^k$, $(\bm{E}) \in \Fq^{m\times n}$ such that $\operatorname{W}_R(\bm{E}) \leq r$ and $\bm{E} = \bm{M}_0 + \sum_{i=1}^{k}\bm{M}_ix_i$\\
  \pcind - Message $m \in \{0, 1\}^*$ \\[\baselineskip]
  \textbf{Step 1: Commitment} \\
  1. Sample a random salt value $\salt \sampler \{0, 1\}^{2\lambda}$ \\
  2. Compute $\bm{\beta} = (\beta_k)_{k \in [0, r - 1]} \in \Fqm^{r}$ the coefficients of the annihilator q-polynomial $L(X)$ associated to $\bm{{E}}$ such that $L(X) = X^{q^r} + \sum_{k = 0}^{r - 1} \beta_k (X^{q^k})$ and $\forall j \in \oneto{n}, L(\bm{{e}}_j) = 0$ where each $\bm{{e}_j}$ is the element in $\Fqm$ associated to the j-th column of $E$  \\
  3. For each iteration $e \in \oneto{\tau}$: \\
  \pcind - Sample $\bm{a}^{(e)} \sampler \Fqme^{r}$ and compute $c^{(e)}$ such that $c^{(e)} \in \Fqme$ and $c^{(e)} = - \dotp{\bm{\beta}, \bm{a}^{(e)}}$ \\
  \pcind - For each party $i \in \oneto{N}$: \\
  \pcind \pcind $\diamond$ Compute the $(\ell + 1, N)$-threshold sharings $\share{\bm{x}^{(e)}}, \share{\bm{\beta}^{(e)}}, \share{\bm{a}^{(e)}}, \share{c^{(e)}}$ of $\bm{x}, \bm{\beta}, \bm{a}^{(e)}$ and $c^{(e)}$ \\
  \pcind \pcind $\diamond$ Compute $\state^{(e)}_i = (\share{\bm{x}^{(e)}}_i, \share{\bm{\beta}^{(e)}}_i, \share{\bm{a}^{(e)}}_i, \share{c^{(e)}}_i)$ and $\cmt^{(e)}_i = \hash_0(\salt, e, i, \state^{(e)}_i)$ \\
  \pcind - Compute the Merkle tree root $h^{(e)}_0 = \textsf{Merkle}(\cmt^{(e)}_1, \cdots, \cmt^{(e)}_N)$ \\
  4. Compute $h_1 = \hash_1(m, \pk, \salt, (h^{(e)}_0)_{e \in \oneto{\tau}})$ \\[\baselineskip]
  \textbf{Step 2: First Challenge} \\
  5. Sample $\big( (\gamma^{(e)}_j)_{j \in \oneto{n}}, \epsilon^{(e)} \big)_{e \in \oneto{\tau}} \sampler PRG(h_1)$ where $ \big( (\gamma^{(e)}_j)_{j \in \oneto{n}}, \epsilon^{(e)} \big)_{e \in \oneto{\tau}} \in (\Fqme^n \times \Fqme)^{\tau}$ \\[\baselineskip]
  \textbf{Step 3: First Response} \\
  6. For each iteration $e \in \oneto{\tau}$: \\
  \pcind - For each party $i \in S$ with $S$ a public subset of parties such that $|S| = \ell + 1$: \\
  \pcind \pcind $\diamond$ Compute $\share{\bm{E}^{(e)}}_i = \bm{M_0} + \sum_{j=1}^k\bm{M_j}\share{x_j}_i$ and set $\share{e_j}_i$ as the element in $\Fqm$ associated to the j-th column of $\share{\bm{E}}$\\
  \pcind \pcind $\diamond$ Compute $\share{z^{(e)}}_i = - \sum\nolimits_{j = 1}^{n} \gamma_j \cdot \share{e_j^{(e)}}_i^{q^r}$ and $\forall k \in [0, r - 1], \share{w_k^{(e)}}_i = \sum\nolimits_{j = 1}^{n} \gamma_j \cdot \share{e_j^{(e)}}_i^{q^k}$ \\
  \pcind \pcind $\diamond$ Compute $\share{\boldsymbol{\alpha}^{(e)}}_i = \epsilon^{(e)} \cdot \share{\bm{w}^{(e)}}_i + \share{\bm{a}^{(e)}}_i$ and $\share{v^{(e)}}_i = \epsilon^{(e)} \cdot \share{z^{(e)}}_i - \dotp{\boldsymbol{\alpha}^{(e)}, \share{\bm{\beta}^{(e)}}_i} - \share{c^{(e)}}_i$ \\
  7. Compute $h_2 = \hash_2(m, \pk, \salt, h_1, (\share{\boldsymbol{\alpha}^{(e)}}_i, \share{v^{(e)}}_i)_{i \in S, e \in \oneto{\tau}})$ \\[\baselineskip]
  \textbf{Step 4: Second Challenge} \\
  8. Sample $(I^{(e)})_{e \in \oneto{\tau}} \sampler PRG(h_2) $ where $(I^{(e)})_{e \in \oneto{\tau}} \in  (\{ I \subset N ~|~ |I| = \ell \})^{\tau}$ \\[\baselineskip]
  \textbf{Step 5: Second Response} \\
  9. For each iteration $e \in \oneto{\tau}$: \\
  \pcind - Choose deterministically a party $i^{*(e)} \in S \, \backslash \, I^{(e)}$ and compute $\share{\boldsymbol{\alpha}^{(e)}}_{i^{*(e)}}$ \\
  \pcind - Compute the authentication path $\mathsf{auth}^{(e)}$ associated to root $h^{(e)}_0$ and $(\cmt^{(e)}_i)_{i \in I^{(e)}}$  \\
  \pcind - Compute $\rsp^{(e)} = \big( (\share{\bm{x}^{(e)}}_i, \share{\bm{\beta}^{(e)}}_i, \share{\bm{a}^{(e)}}_i, \share{c^{(e)}}_i)_{i \in I^{(e)}}, \mathsf{auth}^{(e)}, \share{\boldsymbol{\alpha}^{(e)}}_{i^{*(e)}} \big)$ \\
  10. Compute $\sigma = (\salt, h_1, h_2, (\rsp^{(e)})_{e \in \oneto{\tau}})$
}
\vspace{-\baselineskip}
\captionof{figure}{\footnotesize{MIRA Signature Scheme based on Threshold MPCitH - Signature Algorithm}}
\label{th_sig}
\end{figure}

\begin{figure}[ht]
\pcb[codesize=\scriptsize, minlineheight=0.75\baselineskip, mode=text, width=0.98\textwidth] { 
  \textbf{Inputs} \\
  \pcind - Public data $\bm{M}_0,\dots,\bm{M}_k \in \Fq^{m\times n}$ such that there is $\bm{x} \in \Fq^k$ such that $\bm{E} = \bm{M}_0 + \sum_{i=1}^k(\bm{M}_ix_i) $ and $\operatorname{W}_R(\bm{E}) \leq r$\\
  \pcind - Message $m \in \{0, 1\}^*$ \\
  \pcind - Signature $\sigma = (\salt, h_1, h_2, (\bar{\rsp}^{(e)})_{e \in \oneto{\tau}})$ \\[\baselineskip]
  \textbf{Step 1: Parse signature} \\
  1. Sample $\big( (\gamma^{(e)}_j)_{j \in \oneto{n}}, \epsilon^{(e)} \big)_{e \in \oneto{\tau}} \sampler PRG(h_1)$ where $ \big( (\gamma^{(e)}_j)_{j \in \oneto{n}}, \epsilon^{(e)} \big)_{e \in \oneto{\tau}} \in (\Fqme^n \times \Fqme)^{\tau}$ \\
  2. Sample $(I^{(e)})_{e \in \oneto{\tau}} \sampler PRG(h_2)$ where $(I^{(e)})_{e \in \oneto{\tau}} \in  (\{ I \subset N ~|~ |I| = \ell \})^{\tau}$ \\
  3. For each iteration $e \in \oneto{\tau}$: \\
  \pcind - Choose deterministically $i^{*(e)}$ from $S \setminus I^{(e)}$ \\
  \pcind - Parse $\bar{\rsp}^{(e)} := \big( (\share{\bar{\bm{x}}^{(e)}}_i, \share{\bar{\bm{\beta}}^{(e)}}_i, \share{\bar{\bm{a}}^{(e)}}_i, \share{\bar{c}^{(e)}}_i)_{i \in \bar{I}^{(e)}}, \bar{\mathsf{auth}}^{(e)}, \share{\bar{\boldsymbol{\alpha}}^{(e)}}_{i^{*(e)}} \big)$ \\[\baselineskip]
  \textbf{Step 1: Recompute $\bm{h}_1$} \\
  4. For each iteration $e \in \oneto{\tau}$: \\
  \pcind - For each party $i \in I^{(e)}$: \\
  \pcind \pcind $\diamond$ Compute $\bar{\state}^{(e)}_i = (\share{\bar{\bm{x}}^{(e)}}_i, \share{\bar{\bm{\beta}}^{(e)}}_i, \share{\bar{\bm{a}}^{(e)}}_i, \share{\bar{c}^{(e)}}_i)$ and $\bar{\cmt}^{(e)}_i = \hash_0(\salt, e, i, \bar{\state}^{(e)}_i)$ \\
  \pcind - Compute the Merkle tree root $\bar{h}^{(e)}_0$ from $(\bar{\cmt}^{(e)}_i)_{i \in I^{(e)}}$ and $\bar{\mathsf{auth}}^{(e)}$ \\
  5. Compute $\bar{h}_1 = \hash_1(m, \pk, \salt, (\bar{h}^{(e)}_0)_{e \in \oneto{\tau}})$ \\[\baselineskip]
  \textbf{Step 2: Recompute $\bm{h}_2$} \\
  6. For each iteration $e \in \oneto{\tau}$: \\
  \pcind - For each party $i \in I^{(e)}$: \\
  \pcind \pcind $\diamond$ Compute $\share{\bm{E}^{(e)}}_i = \bm{M}_0 + \sum_{j=1}^k\bm{M}_j\share{x_j}_i$ and set $\share{e_j}_i$ as the element in $\Fqm$ associated to the $j$-th column of $\share{\bm{E}}$\\
  \pcind \pcind $\diamond$ Compute $\share{z^{(e)}}_i = - \sum\nolimits_{j = 1}^{n} \gamma_j \cdot \share{e_j^{(e)}}_i^{q^r}$ and $\forall k \in [0, r - 1], \share{w_k^{(e)}}_i = \sum\nolimits_{j = 1}^{n} \gamma_j \cdot \share{e_j^{(e)}}_i^{q^k}$ \\
  \pcind \pcind $\diamond$ Compute $\share{\boldsymbol{\alpha}^{(e)}}_i = \epsilon^{(e)} \cdot \share{\bm{w}^{(e)}}_i + \share{\bm{a}^{(e)}}_i$\\
  \pcind - Reconstruct $\bar{\boldsymbol{\alpha}}^{(e)}$ and $(\share{\bar{\boldsymbol{\alpha}}^{(e)}}_i)_{i \in S}$ from $(\share{\bar{\boldsymbol{\alpha}}^{(e)}}_i)_{i \in I^{(e)}}$ and $\share{\bar{\boldsymbol{\alpha}}^{(e)}}_{i^{*(e)}}$ \\
  7. For each iteration $e \in \oneto{\tau}$: \\
  \pcind - For each party $i \in I^{(e)}$: \\
  \pcind \pcind $\diamond$ Compute $\share{\bar{v}^{(e)}}_i = \epsilon^{(e)} \cdot \share{\bar{z}^{(e)}}_i - \dotp{\bar{\boldsymbol{\alpha}}^{(e)}, \share{\bar{\bm{\beta}}^{(e)}}_i} - \share{\bar{c}^{(e)}}_i$ \\
  \pcind - Reconstruct $(\share{\bar{v}^{(e)}}_i)_{i \in S}$ from $(\share{\bar{v}^{(e)}}_i)_{i \in I^{(e)}}$ and $\bar{v}^{(e)} = 0$ \\
  8. Compute $\bar{h}_2 = \hash_2(m, \pk, \salt, \bar{h}_1, (\share{\bar{\boldsymbol{\alpha}}^{(e)}}_i, \share{\bar{v}^{(e)}}_i)_{i \in S, e \in \oneto{\tau}})$ \\[\baselineskip]
  \textbf{Step 5: Verify signature} \\
  9. Return $(\bar{h}_1 = h_1) \wedge (\bar{h}_2 = h_2)$
}
\vspace{-\baselineskip}
\captionof{figure}{\footnotesize{MIRA Signature Scheme based on Threshold MPCitH - Verification Algorithm}}
\label{th_sig_verify}
\end{figure}

We prove that the scheme is EUF-CMA secure in the proof of theorem \ref{secu_sig_th} in \ref{sec_proofs_sig}.

\clearpage

\section{Parameter Sets} \label{parameters}
\subsection{Parameters Choice}
Our signature scheme uses the following parameters:
\begin{itemize}
\item the power of a prime number, $q \in \mathbb{N}$, to build $\Fq$; 
\item a positive integer, $m \in \mathbb{N}$, the number of rows of our matrices;
\item a positive integer, $n \in \mathbb{N}$, the number of columns of our matrices;
\item a positive integer, $k \in \mathbb{N}$, the length of the secret vector $\bm{x}$, and $k+1$ is the number of matrices in the public key;
\item a positive integer, $r \in \mathbb{N}$, the rank of the matrix $\bm{E}$;
\item a positive integer, $N \in \mathbb{N}$, the number of parties simulated in the MPC protocol;
\item a positive integer, $\eta \in \mathbb{N}$, to build $\Fqme$;
\item a positive integer, $\tau \in \mathbb{N}$, the number of rounds in the signature.
\end{itemize}
In the LSSS-based scheme, there is also a parameter $\ell \in \mathbb{N}$, which is the value of the privacy threshold of the used linear secret sharing scheme.

In order to choose the parameters, we need to consider:
\begin{itemize}
    \item The security of the MinRank instance, \textit{i.e.} the complexity of the attacks on the chosen MinRank parameters;
    \item The security of the signature scheme, \textit{i.e.} the cost of the best forgery attack;
    \item The size of the signature.
\end{itemize}

We propose the following parameters for the MinRank instance, 
for the security level which corresponds to the NIST Security Level 1:
\begin{itemize}
    \item For the additive-based scheme: $(q,m,n,k,r) = (16,16,16,120,5)$;
    \item For the LSSS-based scheme: $(q,m,n,k,r) = (251,12,13,55,5)$.
\end{itemize}

In the first scheme, we take the value of $q=16$, since it leads to almost the shortest signature size and has the advantage to be easily serialized (two elements of $\Fq$ can be stored in a byte). However, in the second scheme, we take $q=251$ since we have the constraint $N \leq q$ with Shamir's secret sharing (since each share is associated to a distinct field element, we can not have more shares than the size of the field). Since the protocol is more efficient when $N$ is large, we need to take a larger value of $q$ in the threshold-based scheme compared to the additive-based scheme. The parameters $(m,n,k,r)$ are then chosen such that the MinRank instances are secure against the existing attacks, that we present shortly in section~\ref{atk_mr}, and such that $k+1 = (m-r)(n-r)$, as it corresponds exactly to the Gilbert-Varshamov bound.

Finally, we need to choose $\tau$ and $\eta$ such that our signature scheme resists to the forgery attack described by \cite{KZ20}. More precisely, in the additive-based scheme, we select them such that  
\begin{equation} \text{cost}_{\text{forge}}=\min_{0\leq\tau'\leq\tau}\left\lbrace\dfrac{1}{\sum_{i=\tau'}^\tau \binom{\tau}{i}p^i(1-p)^{\tau-i}}+N^{\tau-\tau'}\right\rbrace
\label{kz_eq_hyp}\end{equation} is higher than $2^\lambda$, with $p = \frac{2}{q^{m\eta}}-\frac{1}{q^{2m\eta}}$.
In the case of the threshold protocol, this formula becomes:
\begin{equation}
\text{cost}_{\text{forge}}=\min_{0\leq\tau'\leq\tau}\left\lbrace\dfrac{1}{\sum_{i=\tau'}^\tau \binom{\tau}{i}p^i(1-p)^{\tau-i}}+\binom{N}{\ell}^{\tau-\tau'}\right\rbrace \label{kz_eq_th}
\end{equation}
with $p=( \frac{2}{q^{m\eta}}-\frac{1}{q^{2m\eta}})\cdot \binom{N}{\ell+1}$. The factor $\binom{N}{\ell+1}$ in the value of $p$ comes from the fact that an attacker can commit an invalid sharing of the MPC inputs in the first step of the scheme (see \cite{FR22} for details), and the addition of $\binom{N}{\ell}^{\tau-\tau'}$ comes from the fact that there are $\binom{N}{\ell}$ possible second challenges.

In any case, as long as the forgery cost is high enough, we can just take the parameters which lead to the shortest signatures. To proceed, we need to compute the theoretical size of the signature, which we will do in the following section.

\subsection{Key and Signature Sizes}
\subsubsection{Additive Protocol Signature Size}\hfill \\

For the additive-sharing protocol (Fig.\ref{add_sig}), we have to send $\tau$ times the following elements: 
\begin{itemize}
    \item $com_{i^*}\in\lbrace 0,1\rbrace^{2\lambda}$;
    \item $(state_{i'}^{(e)})_{i'\neq i^*}$;
    \item $[\![\bm{\alpha}]\!]\in\mathbb{F}_{q^{m\eta}}^{r}$;
    \item $[\![\bm{x}]\!]\in\mathbb{F}_q^k$;
    \item $[\![\bm{\beta}]\!]\in\Fqm^{r}$;
    \item $[\![c]\!]\in\mathbb{F}_{q^{m\eta}}$.
\end{itemize}
Note that for each $e\in \oneto{\tau}$, we do not send all the $N-1$ states $(state_{i'}^{(e)})_{i'\neq i^*}$. To reveal the state of all the parties except one, the prover only need to send the sibling path of the hidden leaf party in the seed tree of TreePRG. Thus, the number of revealed seeds in $\lbrace 0,1\rbrace^\lambda$ is equal to the depth of the tree, which is $D= \log_2 N$. We also add  $h_1,h_2, \text{ and } \salt \in \{0,1\}^{2\lambda}$, which add up to $6\lambda$. This gives us the following signature size (in bits):
\begin{align*}
\text{$|\sigma|$} &= \underbrace{6\lambda}_\text{$\salt,h1,h2$}+ \tau \cdot \left( \Big(\underbrace{k}_\text{$\bm{x}$}+\underbrace{r\times m}_\text{$\bm{\beta}$} + \underbrace{(r+1)\times m \times \eta}_\text{$\bm{\alpha}$~\text{and}~$c$}\Big)\cdot \log_2 q + \underbrace{2\lambda + D\lambda}_\text{additive MPCitH}\right).
\end{align*}

We then obtain the following sizes:
\begin{center}
{\scriptsize
    \begin{tabular}{|c|c|c|c|c|c|c|c|c|c|c|c|}
        \hline
        NIST Security Level & $q$ & $m$ & $n$ & $k$ & $r$& $N$ &$\tau$&$\eta$&\shortstack{Public Key size \\(Bytes)}& \shortstack{Secret Key size \\(Bytes)} & \shortstack{Signature size \\(Bytes)} \\ \hline
        1 & 16 & 16 & 16 & 120 & 5 & 256 & 18 & 1 & 84 & 16 & 5.640 \\ \hline
        3 & 16 & 19 & 19 & 168 & 6 & 256 & 26 & 1 & 121 & 24 & 11.779 \\ \hline
        5 & 16 & 23 & 22 & 271 & 6 & 256 & 34 & 1 & 150 & 32 & 20.762\\ \hline
    \end{tabular}
    \vspace{0.5\baselineskip}
}
\end{center}

\subsubsection{Threshold Protocol Signature Size}\hfill \\

For the threshold sharing protocol (Fig.\ref{th_sig}), we have to send $\tau$ times the following elements: \begin{itemize}
    \item $\ell$ times $\share{\bm{x}}\in \Fqk$;
    \item $\ell$ times $\share{\bm{\beta}} \in \Fqm^{r}$;
    \item $\ell$ times $\share{\bm{a}} \in \Fqme^{r}$;
    \item $\ell$ times $\share{c} \in \Fqm$;
    \item $\share{\bm{\alpha}} \in \Fqme^{r}$;
    \item the authentification path $\mathsf{auth}$.
\end{itemize}
Let us remark that we cannot use a TreePRG for the threshold scheme, since the shares of a low-threshold sharing are correlated.


For the authentification path, we do not need to send all the states. Indeed, to allow the verification of $\ell$ states, the prover sends their sibling paths which at most contain $\ell\cdot \log_2(\frac{N}{\ell})$ labels in total (each of $2\lambda$ bits). In the sizes we will give, we are going to use the average cost (with low deviation) of sending the sibling paths. We also send  $h_1,h_2, \text{ and } \salt \in \{0,1\}^{2\lambda}$, which add up to $6\lambda$. We find the following signature size (in bits): 
$$|\sigma| \leq \underbrace{6 \lambda}_{\salt,h1,h2} + \tau \cdot \Big( \big(\ell \cdot (\underbrace{k}_{\bm{x}} + \underbrace{r\times m}_\text{${\bm{\beta}}$} + \underbrace{(r+1)\times m\times\eta}_\text{$a$ and $c$}) + \underbrace{r\times m\times\eta}_\text{$\bm{\alpha}$}\big)\cdot \log_2(q) + \underbrace{2\lambda\cdot\ell \log_2(\frac{N}{\ell})}_\text{Threshold MPCitH} \Big).$$

We then obtain the following sizes:
\begin{center}
{\scriptsize
    \begin{tabular}{|c|c|c|c|c|c|c|c|c|c|c|c|c|}
        \hline
        NIST Security Level & $q$ & $m$ & $n$ & $k$ & $r$& $N$ &$\ell$&$\tau$&$\eta$&\shortstack{Public Key size \\(Bytes)}& \shortstack{Secret Key size \\(Bytes)} & \shortstack{Signature size \\(Bytes)} \\ \hline
        \multirow{1}*{1} & \multirow{1}*{251} & \multirow{1}*{12}& \multirow{1}*{13} & \multirow{1}*{55} & \multirow{1}*{5} & \multirow{1}*{251} &3 & 7 & 1 & \multirow{1}*{117} & \multirow{1}*{16} & 8.318\\ 
         \hline
         \multirow{1}*{3} & \multirow{1}*{251} & \multirow{1}*{16}& \multirow{1}*{15} & \multirow{1}*{109} & \multirow{1}*{5} & \multirow{1}*{251} &3 &  10& 1 & \multirow{1}*{155} & \multirow{1}*{24} & 17.797\\ 
         \hline
         \multirow{1}*{5} & \multirow{1}*{251} & \multirow{1}*{16}& \multirow{1}*{17} & \multirow{1}*{109} & \multirow{1}*{6} & \multirow{1}*{251} &3 & 14 & 1 & \multirow{1}*{195} & \multirow{1}*{32} & 30.381\\ 
         \hline
    \end{tabular}
    \vspace{0.5\baselineskip}
}    
\end{center}
\section{Security Analysis} \label{sec_proofs}
\subsection{Security Proofs for the Proofs of Knowledge} \label{sec_proofs_pok}

\subsubsection{Security Proofs for the Additive Protocol} \hfill \\

We first need to prove that the proof of knowledge is sound and zero-knowledge:
\begin{theorem}
\label{Theoremhypercube}
The MinRank Proof of Knowledge protocol based on additive secret sharing described in Fig.\ref{hypercubempcith} has the following properties: \begin{itemize}
\item\textbf{Completeness:} A prover $\mathcal{P}$ who has the knowledge of a solution of a MinRank instance will always be accepted by the verifier. 
\item\textbf{Soundness:} Suppose that there is an efficient prover $\Tilde{\mathcal{P}}$ that convinces the verifier to accept with probability $$\Tilde{\epsilon} > \epsilon$$ with $$\epsilon = \frac{1}{N}+ p\cdot (1-\frac{1}{N})$$ where $\epsilon$ is the soundness of the protocol in Fig.\ref{hypercubempcith}, and $p$ is the false positive rate of the MPC protocol used, i.e, $\frac{2}{q^{m\eta}}-\frac{1}{q^{2m\eta}}$.\\
Then, there is an efficient probabilistic extraction algorithm, $\mathcal{E}$ that, given a rewindable black-box access to $\Tilde{\mathcal{P}}$, outputs either a solution of the MinRank instance, or a commitment collision by making a number of calls to $\Tilde{\mathcal{P}}$ which is bounded by $$\frac{4}{\Tilde{\epsilon}-\epsilon}\cdot \Big(1+\Tilde{\epsilon}\cdot\frac{2\operatorname{ln}(2)}{\Tilde{\epsilon}-\epsilon}\Big)$$
\item\textbf{Honest-Verifier Zero-Knowledge} If the pseudo-random generator algorithm PRG and the commitment $\mathsf{Com}$ are indistinguishable from the uniform random distribution, then the algorithm \ref{hypercubempcith} is Honest-Verifier Zero Knowledge.
\end{itemize}
\end{theorem}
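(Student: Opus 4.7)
The plan for completeness is essentially a one-paragraph observation: if the prover knows $\bm x$ with $\mathrm{W}_R(\bm E)\le r$, then the q-polynomial $L$ of q-degree $r$ annihilating the column support exists by the Ore proposition recalled earlier, so the inputs fed to $\Pi^\eta$ are honest. Since additive sharing is linear and the hypercube main shares are just partial sums of leaf shares, each dimension runs an honest instance of $\Pi^\eta$ on a valid 2-party additive sharing of the same secret; by the proposition on $\Pi^\eta$, each dimension outputs $v=0$ and the same $\bm\alpha$, hence $h_1$ is consistent and the verifier accepts with probability $1$.

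\textbf{Soundness.} My plan is the now-standard MPCitH 5-round extraction, adapted to the hypercube. First I would fix notation: a transcript is accepting for first challenge $\mathrm{ch}_1=((\gamma_j)_j,\epsilon)$ and second challenge $i^\star$. Applying the Splitting Lemma to the set of accepting $(\mathrm{ch}_1,i^\star)$ pairs, I obtain a ``good'' first challenge $\mathrm{ch}_1^\star$ for which the conditional success over $i^\star$ exceeds $\epsilon$ by a non-negligible margin. Rewinding $\tilde{\mathcal P}$ on this $\mathrm{ch}_1^\star$ with fresh $i^\star$, I expect, in $O((\tilde\epsilon-\epsilon)^{-1})$ queries, to collect two accepting transcripts with distinct $i^\star_1\neq i^\star_2$ but the same commitments $(\mathsf{cmt}_i)_{i\in[N]}$ and the same hashed broadcast $h_1$. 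Two cases arise. Either the two openings disagree on some $\mathsf{cmt}_i$, giving a commitment collision; or they agree, so we have the full state $\state_i$ and hence the full leaf shares $(\share{\bm x}_i,\share{\bm\beta}_i,\share{\bm a}_i,\share{c}_i)_{i\in[N]}$. Reconstructing $\bm x^\star,\bm\beta^\star,\bm a^\star,c^\star$ by additive summation, and re-running $\Pi^\eta$ honestly, consistency of $h_1$ with the verifier's view for both $i^\star_1,i^\star_2$ forces the honest execution to output $v=0$ for $\mathrm{ch}_1^\star$. Now I split on whether $c^\star=-\langle\bm a^\star,\bm\beta^\star\rangle$: if it fails, I repeat the whole extraction for a second good $\mathrm{ch}_1$ to obtain two equations which together contradict the false-positive bound $p=2/q^{m\eta}-1/q^{2m\eta}$ of $\Pi^\eta$ with overwhelming probability, hence with probability exceeding the ``$+p(1-1/N)$'' slack we are in the good case where $c^\star=-\langle\bm a^\star,\bm\beta^\star\rangle$ and $\bm\beta^\star$ are the coefficients of the annihilator of the column support of $\bm E^\star=\bm M_0+\sum x^\star_i\bm M_i$, which forces $\mathrm{W}_R(\bm E^\star)\le r$ and thus produces a MinRank solution. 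The query count $\tfrac{4}{\tilde\epsilon-\epsilon}\bigl(1+\tilde\epsilon\cdot\tfrac{2\ln 2}{\tilde\epsilon-\epsilon}\bigr)$ follows by the same bookkeeping as in \cite{FJR22,F22}.

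\textbf{Honest-Verifier Zero-Knowledge.} I construct a simulator $\mathsf{Sim}$ that, given $x$ and the random challenges $(\mathrm{ch}_1,i^\star)$, proceeds as follows. It samples a random root seed, derives all leaf seeds $(\mathsf{seed}_i,\rho_i)_{i\neq i^\star}$ via TreePRG and uses them to produce honest-looking states $\state_i$ for $i\neq i^\star$; for $i=N$ (if $N\neq i^\star$) it samples $\share{\bm x}_N,\share{\bm\beta}_N,\share{c}_N$ uniformly rather than from the additive-completion formula. It commits $\mathsf{cmt}_{i^\star}$ to a uniformly random value, using the hiding of $\mathsf{Com}$. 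It then runs the MPC protocol honestly on all $N-1$ opened parties to get their shares of $\bm\alpha$ and $v$, and sets $\share{\bm\alpha}_{i^\star}$ so that the aggregated $\bm\alpha$ along any dimension has any chosen (random) value and $\share{v}_{i^\star}$ so that $v=0$; by linearity of the additive reconstruction this is possible and uniquely determined. Hashing yields $H_k$ and $h_1$. A hybrid argument replaces (i) the PRG outputs for the opened parties by true randomness, (ii) the hidden commitment by a commitment of the real $\state_{i^\star}$, and (iii) the resulting shares by the honest ones; the first hybrid is indistinguishable by PRG security, the second by commitment hiding, and the third is identically distributed because the $(N-1)$-privacy of additive sharing makes the joint distribution of $(\state_i)_{i\neq i^\star}$ together with the broadcasted $(\bm\alpha,v)$ independent of the exact choice of the hidden share.

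\textbf{Main obstacle.} The routine pieces are completeness and HVZK; the delicate part is the soundness argument, and within it the combinatorial bookkeeping that shows the extra $p(1-1/N)$ slack in $\epsilon$ really absorbs the false-positive probability of $\Pi^\eta$ so that, when we extract two transcripts with matching commitments, we land in the ``good'' case with probability bounded away from zero. Keeping the extractor's expected running time within the claimed $\tfrac{4}{\tilde\epsilon-\epsilon}(1+\tilde\epsilon\cdot\tfrac{2\ln 2}{\tilde\epsilon-\epsilon})$ bound requires a careful two-level application of the Splitting Lemma exactly as in \cite{FJR22,F22}; this is where I expect the bulk of the technical care to go.
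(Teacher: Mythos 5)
Your completeness argument and your HVZK simulator match the paper's proof in substance: the simulator samples $\share{\bm{\alpha}}_{i^*}$ uniformly, fixes $\share{v}_{i^*}$ so that $v=0$, replaces the auxiliary shares of party $N$ by uniform values, and the hybrid chain (PRG security, commitment hiding, privacy of the additive sharing) is the same one the paper uses. The soundness part, however, has two issues worth flagging.

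First, a genuine gap: you never actually prove that the soundness error is $\epsilon=\tfrac{1}{N}+p\,(1-\tfrac{1}{N})$; you only announce in your ``main obstacle'' paragraph that the bookkeeping must show the $p(1-\tfrac1N)$ slack absorbs the false-positive rate. But this bound is the load-bearing step of the whole soundness proof, and it is where the hypercube structure actually enters. The paper's argument is: a prover with a bad witness either hits a false positive of $\Pi^\eta$ (probability $p$) or must tamper with the broadcast shares; tampering with more than one leaf share is always detected, tampering with zero shares gives $v\neq 0$, and tampering with exactly one leaf share is equivalent (via the bijection between a leaf and the $D$ main parties containing it) to tampering with exactly one main share per dimension, which goes undetected only when the hidden leaf $i^*$ is the corrupted one, i.e.\ with probability $\tfrac1N$. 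Without this case analysis the value of $\epsilon$ is unjustified, and the implication ``success probability above $\epsilon$ $\Rightarrow$ the committed witness is good'' -- which your extractor relies on -- is not available.

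Second, your extraction route diverges from the paper's in a way that creates avoidable complications. You apply the Splitting Lemma to the first challenge $\mathrm{ch}_1$ and rewind with $\mathrm{ch}_1^\star$ fixed. Then, as you correctly notice, the extracted witness may be bad if $\mathrm{ch}_1^\star$ happens to be a false positive for it, and you are forced into a second extraction round and an ad hoc ``two equations contradict the false-positive bound'' argument; making that rigorous, and still landing on the stated bound $\tfrac{4}{\tilde\epsilon-\epsilon}\bigl(1+\tilde\epsilon\cdot\tfrac{2\ln 2}{\tilde\epsilon-\epsilon}\bigr)$, is not routine (a second full extraction naively doubles the query count). The paper instead applies the Splitting Lemma to the commitment randomness $r_h$: a good $r_h$ satisfies $\prb[\mathrm{Succ}\mid r_h]>(1-\delta)\tilde\epsilon>\epsilon$, and since a bad committed witness cannot exceed $\epsilon$ over the joint choice of both challenges, the witness under a good $r_h$ is necessarily good. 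The extractor then only needs two accepting transcripts with the same commitments and distinct $i^*$ (the first challenges need not coincide) to open all $N$ leaf states and read off $\bm{x}$; no false-positive case analysis is needed at extraction time, and the expected-cost recursion with $\delta$ chosen so that $(1-\delta)\tilde\epsilon=\tfrac12(\tilde\epsilon+\epsilon)$ gives exactly the claimed bound. I recommend restructuring your soundness proof along these lines.
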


\begin{proof} \hfill \\
\textbf{Completeness:}\\
By construction, if the prover has knowledge of a solution of the MinRank instance, he will always be able to execute the protocol correctly, i.e, he will always obtain $\bm{\alpha}$ such that $v=0$ when executing the MPC protocol $\Pi^\eta$, this is obvious. \\

\textbf{Soundness:}\\
We first need to establish that the probability for the malicious prover (who has no knowledge of the solution of the MinRank instance used and thus uses a bad witness) to cheat is at most $\epsilon = \frac{1}{N}+(1-\frac{1}{N})\cdot p$. \\
There are two situations where a malicious prover can be accepted by the verifier if he commits a bad witness: \begin{itemize}
    \item He obtains the value $v=0$ when executing the MPC protocol;
    \item The verifier believes that the value obtained $v$ is $0$.
\end{itemize}
We suppose here that the malicious prover commits a bad witness. Then, the first case occurs with probability $p = \frac{2}{q^{m\eta}}-\frac{1}{q^{2m\eta}}$ since it is the false positive rate of the protocol $\Pi^\eta$.

In the second case, the malicious prover needs to alter the communications in order to pass the verification. More precisely, he needs to alter the value of some share(s) of $\bm{\alpha}$ in every MPC protocol. Among the $N$ leaf shares, only one share, $i^*$, will not be revealed by the prover. This means that, if he cheats on more than one share, the verifier will notice the cheating, and thus rejects the proof. However, if he cheats on zero share, he will be rejected as well since the value $v$ will not be $0$ since the malicious prover doesn't have a good witness $\bm{x}$. This means the malicious prover can only cheat on one share exactly. However, cheating on one share means cheating on one main share on all the $D$ dimensions, as the main shares are the sum of leaves that have the same index $i_j$ along the current dimension. This means that the cheating is not detected if and only if the share the prover cheated on is $i^*$, since there exists a bijection between leaves and the set of their associated main party. This happens with probability $\frac{1}{N}$ (as the prover doesn't know the value of $i^*$ before cheating). \\
This is the only pattern to cheat and avoid detection (as shown just now, we cannot cheat on more than one leaf), since cheating on $1$ main party for each dimension is exactly equivalent to cheating on $1$ leaf party. \\
Thus, since there is no other cheating pattern possible, the probability for the malicious prover to be authenticated is at most $p+(1-p)\cdot \frac{1}{N} = \frac{1}{N}+(1-\frac{1}{N})\cdot p$.\\ [0.3cm]
We then need to show the soundness property in the theorem: \\
let $\mathcal{T}_1$ and $\mathcal{T}_2$ two transcripts with the same commitments, i.e, the same $h_0 = \hash({\cmt_1, \dots, \cmt_{N}})$ , but the second challenges $i^*_1$ (for $\mathcal{T}_1$) and $i^*_2$ (for $\mathcal{T}_2$) differ.

Then, we have two possibilities: \begin{itemize}
    \item $\share{\bm{x}}, \share{\bm{\beta}}, \share{\bm{a}}$ and $\share{c}$ differ in the two transcripts, and the malicious prover found a collision in the commitment hash;
    \item the openings of the commitments are equal, and thus the shares $\share{\bm{x}}, \share{\bm{\beta}}, \share{\bm{a}}$ and $\share{c}$ are equal in the transcripts.
\end{itemize}
We will only consider the second case, as we suppose that we use secure hash functions and secure commitment schemes.

Then, since $i^*_1$ and $i^*_2$ are different challenges and the commitments are the same, it is possible to recover the witness. We will then show we can build an extraction algorithm $\mathcal{E}$ that obtains a good witness.

$\bm{x}$ is called a good witness if it is a solution to the MinRank instance defined by public data, i.e. $\bm{E}= \bm{M}_0 + \sum_{i=1}^{k}\bm{M}_ix_i$ and $\operatorname{W}_R(\bm{E}) \le r$. Let $R_h$ the random variable associated to the randomness in initial commitment, and $r_h$ is the value it takes. For that, we will use the Splitting Lemma as is done in \cite{FJR22} and \cite{AGHHJY22}.

To get these two transcripts, the extraction algorithm $\mathcal{E}$ does the following: 
\begin{itemize}
    \item Run the protocol with randomness $r_h$ with the verifier until $\mathcal{T}_1$ is found, i.e, $\mathcal{T}_1$ is the first accepted transcript found by $\pt$. We note $i^*_1$ the leaf challenge obtained;
    \item Then, using the same randomness $r_h$ that was used, i.e, building the same commitments, $\mathcal{E}$ repeats the process $L$ times (where the value of $L$ is made explicit hereafter) until finding another accepted transcript, $\mathcal{T}_2$, for which the leaf challenge, $i^*_2$, is different than $i^*_1$; 
    \item If such a transcript $\mathcal{T}_2$ is found, then $\mathcal{E}$ recovers the witness, otherwise, if no such transcript is found after $L$ attempts, $\mathcal{E}$ returns to the first step and tries with another $r_h$.
\end{itemize}

In order to establish the soundness of the protocol, we need to estimate the number of times a malicious prover needs to repeat the authentication protocol in order to get the good witness $\bm{x}$. \\
Let $\delta \in ]0,1[$, and $\Tilde{\epsilon}$ such that $(1-\delta)\cdot\Tilde{\epsilon} > \epsilon$. We will define the randomness $r_h$ to be a good randomness if $\prb[\operatorname{Succ}_{\pt}\vert r_h ] > (1-\delta)\cdot\Tilde{\epsilon}$.

By the Splitting Lemma (Lemma \ref{splitting_lemma}), we have that $\prb [r_h\: good | \operatorname{Succ}_{\Tilde{\mathcal{P}}}] \ge \delta$. This means that after $\frac{1}{\delta}$ accepted transcripts, we have good odds to have a good randomness. Furthermore, we know that if the malicious prover uses a bad witness, his probability to cheat is bounded from above by $\epsilon$. Since the probability of success is greater than $\epsilon$, this means that a good witness has been used (when $r_h$ is good). 

To continue this proof, we will look at the probability to have, given an accepted transcript $\mathcal{T}_1$, a second accepted transcript, $\mathcal{T}_2$, with a challenge different than the one in $\mathcal{T}_1$. This means we are looking to bound from below the probability: 
$$ \prb[\operatorname{Succ}_{\Tilde{\mathcal{P}}} \cap (i^*_1 \ne i^*_2) | r_h \: good ]$$
Trivially, we know that this probability is equal to the probability of success knowing that $r_h$ is good, minus the probability of success with $i^*_1 = i^*_2$ knowing $r_h$ is good. This means: \small\begin{align*}   
\prb[\operatorname{Succ}_{\Tilde{\mathcal{P}}} \cap (i^*_1 \ne i^*_2) | r_h \text{ }good ] &=\prb[\operatorname{Succ}_{\Tilde{\mathcal{P}}} | r_h \text{ }good ] - \prb[\operatorname{Succ}_{\Tilde{\mathcal{P}}} \cap (i^*_1 = i^*_2) | r_h \text{ }good ]\\ & \ge \prb[\operatorname{Succ}_{\Tilde{\mathcal{P}}}|r_h \: good] - \frac{1}{N}\\ & \ge (1-\delta)\Tilde{\epsilon}-\frac{1}{N} \\ & \ge (1-\delta)\Tilde{\epsilon}-\epsilon\qquad \text{ (since $\epsilon \ge \frac{1}{N}$ trivially)}\end{align*}\normalsize

Now that we have this lower bound, we want to estimate the number of times one has to repeat the protocol to find $\mathcal{T}_2$. For that, we take the opposite probability, i.e, $1-\prb[\operatorname{Succ}_{\Tilde{\mathcal{P}}} \cap (i^*_1 \ne i^*_2) | r_h \text{ }good ]$, which is lower bounded by $1-((1-\delta)\Tilde{\epsilon}-\epsilon)$. We now want a probability of $\frac{1}{2}$ at least of success after $L$ tries of the authentication protocol. This means then that we want: \begin{align*}
\Big(1-\prb[\operatorname{Succ}_{\Tilde{\mathcal{P}}} \cap (i^*_1 \ne i^*_2) | r_h \text{ }good ]\Big)^L &< {\frac{1}{2}} \\
L \cdot \ln(1-((1-\delta)\Tilde{\epsilon}-\epsilon)) & < -\ln(2)\\
L &> -\frac{\ln(2)}{\ln(1-((1-\delta)\Tilde{\epsilon}-\epsilon))}\end{align*}
One obtains the following majoration for the number of calls to $\pt$:
$$L>\frac{\ln(2)}{\ln(\frac{1}{1-((1-\delta)\Tilde{\epsilon}-\epsilon)})} \approx \frac{\ln(2)}{(1-\delta)\Tilde{\epsilon}-\epsilon} $$
This means that, when repeating the protocol $L$ times, the probability to get the second transcript is higher than $\frac{1}{2}$.

Finally, we can look at the number of protocol repetitions that has to be done. 
To quickly remind the steps of the extraction: \begin{itemize}
    \item $\Tilde{\mathcal{P}}$ repeats the authentication protocol until he finds an accepted transcript $\mathcal{T}_1$, where the commitments are generated by $r_h$, and with second challenge $i^*_1$;
    \item When $\mathcal{T}_1$ is found, repeat the protocol with the same value $r_h$, $L$ times. After that, $\Tilde{\mathcal{P}}$ has more than $\frac{1}{2}$ chance of being successful. If he is not, he repeats from the first step of the procedure.
\end{itemize}

We will note $\mathbb{E}(\Tilde{\mathcal{P}})$ the number of calls the extractor $\mathcal{E}$ has to make to $\Tilde{\mathcal{P}}$. After $L$ calls (to find $\mathcal{T}_2$), if $r_h$ is good (which happens with probability $\delta$), we have $\frac{1}{2}$ chance of not finding $\mathcal{T}_2$. However, if $r_h$ is not good (with probability $1-\delta$), then we can consider that $\mathcal{T}_2$ is never found.

Thus, $\prb[\text{no }\mathcal{T}_2 | \operatorname{Succ}_{\Tilde{\mathcal{P}}}] = \frac{\delta}{2}+ (1-\delta) = 1-\frac{\delta}{2}$. If that happens, then, $\Tilde{\mathcal{P}}$ has to return to the first step, i.e, find $\mathcal{T}_1$ again. This means:$$ \mathbb{E}(\Tilde{\mathcal{P}}) \le 1+{\Big((1-\prb[\operatorname{Succ}_{\Tilde{\mathcal{P}}}])\mathbb{E}(\Tilde{\mathcal{P}}) \Big)} + \prb[\operatorname{Succ}_{\Tilde{\mathcal{P}}}]\Big( L+(1-\frac{\delta}{2})\mathbb{E}(\Tilde{\mathcal{P}})\Big)$$

Obviously, $\Tilde{\mathcal{P}}$ needs to run at least once. Then, we need to add to that the number of times expected before finding $\mathcal{T}_1$, and then, the number of times expected before finding $\mathcal{T}_2$. \\
Since $\prb[\operatorname{Succ}_{\Tilde{\mathcal{P}}}] = \Tilde{\epsilon}$ (by assumption), we can replace, and simplify the expression. We find then: \begin{align*}
    \mathbb{E}(\Tilde{\mathcal{P}}) & \le 1+{\Big((1-\Tilde{\epsilon}\cdot\mathbb{E}(\Tilde{\mathcal{P}}) \Big)} + \Tilde{\epsilon}\cdot\Big( L+(1-\frac{\delta}{2})\cdot\mathbb{E}(\Tilde{\mathcal{P}})\Big) \\ \mathbb{E}(\Tilde{\mathcal{P}}) & \le 1+\mathbb{E}(\Tilde{\mathcal{P}})-\Tilde{\epsilon}\cdot \mathbb{E}(\Tilde{\mathcal{P}}) + \Tilde{\epsilon} \cdot L + \Tilde{\epsilon} \cdot\mathbb{E}(\Tilde{\mathcal{P}}) - \Tilde{\epsilon} \cdot \frac{\delta}{2}\cdot \mathbb{E}(\Tilde{\mathcal{P}})\\
    \Tilde{\epsilon}\cdot\frac{\delta}{2}\cdot \mathbb{E}(\Tilde{\mathcal{P}}) &\le 1+ \Tilde{\epsilon}\cdot L \\
    \mathbb{E}(\Tilde{\mathcal{P}}) &\le \frac{2}{\Tilde{\epsilon}\cdot \delta}\Big(1+\Tilde{\epsilon}\cdot L\Big) = \frac{2}{\Tilde{\epsilon}\cdot \delta}\Big(1+\Tilde{\epsilon}\cdot \frac{\ln(2)}{(1-\delta)\Tilde{\epsilon}-\epsilon}\Big)
\end{align*}

Since this equality holds for any $\delta \in ]0,1[$, we can take $\delta$ such that $(1-\delta)\Tilde{\epsilon} = \frac{1}{2}(\Tilde{\epsilon}+\epsilon)$, and thus, we obtain the result: \begin{align*}
    \mathbb{E}(\Tilde{\mathcal{P}}) \le \frac{4}{\Tilde{\epsilon}-\epsilon}\cdot \Big(1+2\Tilde{\epsilon}\cdot \frac{\ln(2)}{\Tilde{\epsilon}-\epsilon}\Big)
\end{align*}
This means we found an upper bound on the number of calls the extractor $\mathcal{E}$ has to make to $\Tilde{\mathcal{P}}$ before retrieving a good witness, in the case where the probability to cheat was higher than $\epsilon$.\\[0.4cm]

\textbf{Honest-Verifier Zero-Knowledge:}\\
 Consider a simulator, described in Fig. \ref{hvzk_hyp}, which produces the transcript responses ($h_0$, $\ch_1$, $h_1$, $\ch_2$, $\rsp$).
We demonstrate that this simulator produces indistinguishable transcripts from the real distribution (the
one that we would obtain if it were generated by an honest prover who knows $\bm{x}$) by considering a succession of simulators: we begin by a simulator which produces true transcripts, and and change it gradually until arriving the following simulator. We explains why the distribution of transcripts is always the same at each step.

\begin{figure}[H]
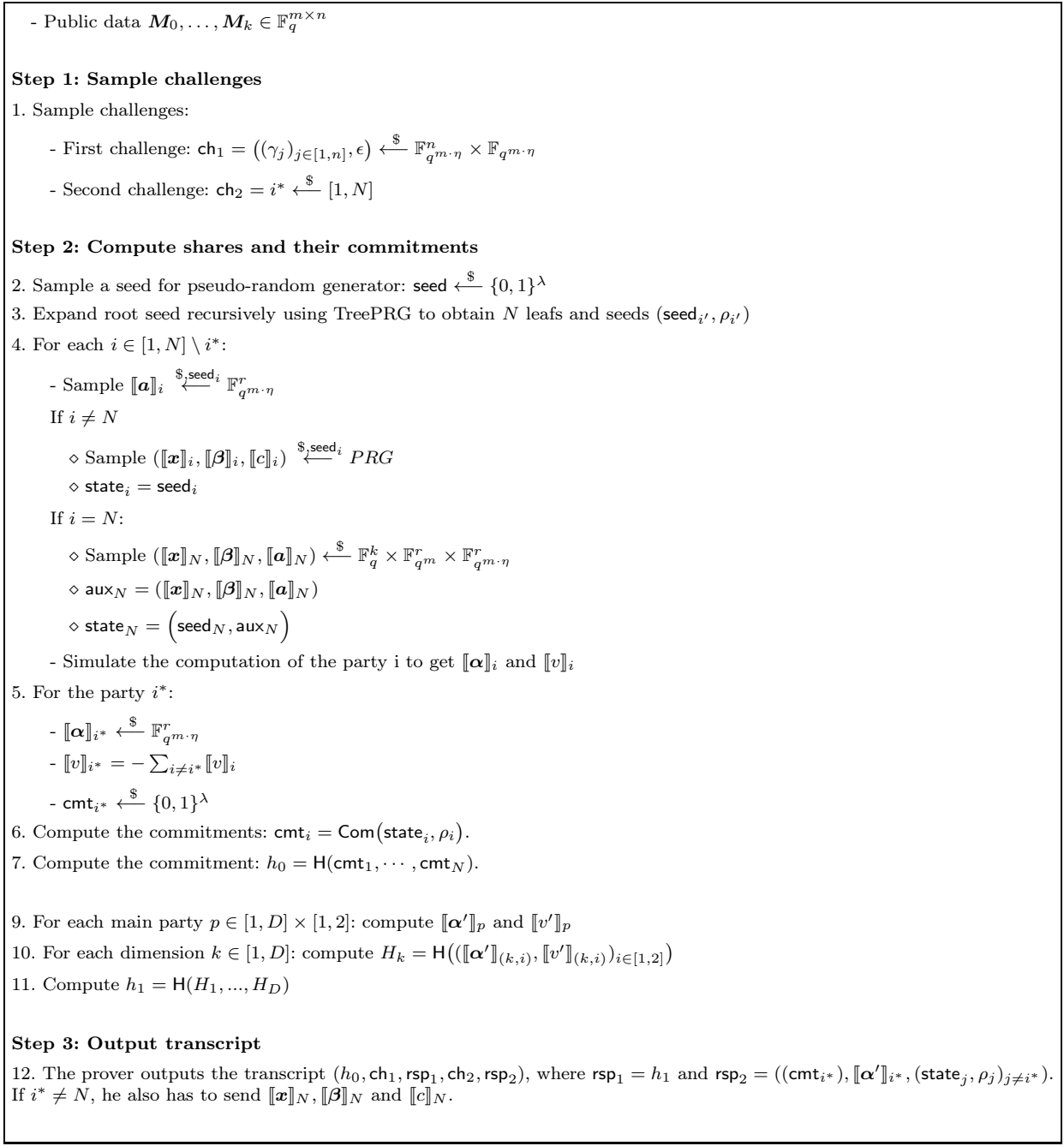

\pcb[codesize=\scriptsize, minlineheight=0.75\baselineskip, mode=text, width=0.98\textwidth] { 
    \pcind - Public data $\bm{M}_0,\dots,\bm{M}_k \in \Fq^{m\times n}$\\ 
   [\baselineskip]
  \textbf{Step 1: Sample challenges} \\
  1. Sample challenges:\\
  \pcind \pcind - First challenge: $\ch_1=\big( (\gamma_j)_{j \in \oneto{n}}, \epsilon \big)\sampler \Fqme^n \times \Fqme$ \\
  \pcind \pcind - Second challenge: $\ch_2= i^* \sampler\oneto{N}$ \\
  [\baselineskip]
  \textbf{Step 2: Compute shares and their commitments} \\
  2. Sample a seed for pseudo-random generator: $\seed\sampler\{0, 1\}^\lambda$\\
  3. Expand root seed recursively using TreePRG to obtain $N$ leafs and seeds $(\seed_{i'},\rho_{i'})$ \\
  4. For each $i\in\oneto{N}\setminus i^*$: \\
  \pcind \pcind - Sample $\share{\bm{a}}_i \samples{\seed_i} \Fqme^r$\\
  \pcind \pcind If $i \ne N$ \\
  \pcind \pcind \pcind $\diamond$ Sample $(\share{\bm{x}}_i,\share{\bm{\beta}}_i,\share{c}_i)\samples{\seed_i} PRG$\\
  \pcind \pcind \pcind $\diamond$ $\state_i=\seed_i$\\
  \pcind \pcind If $i = N$: \\
  \pcind \pcind \pcind  $\diamond$ Sample $\left(\share{\bm{x}}_{N},\share{\bm{\beta}}_{N},\share{\bm{a}}_{N}\right)\sampler\Fqk\times\Fqm^{r}\times\Fqme^{r}$\\
   \pcind \pcind \pcind $\diamond$ $\mathsf{aux}_N = \left(\share{\bm{x}}_{N},\share{\bm{\beta}}_{N},\share{\bm{a}}_{N}\right)$ \\
  \pcind \pcind \pcind $\diamond$ $\state_{N}=\Big (\seed_N,\mathsf{aux}_N \Big)$\\
  \pcind \pcind - Simulate the computation of the party i to get $\share{\bm{\alpha}}_i$ and $\share{v}_i$\\
  5. For the party $i^*$:\\
  \pcind \pcind - $\share{\bm{\alpha}}_{i^*} \sampler \Fqme^r$\\
  \pcind \pcind - $\share{v}_{i^*} = -\sum_{i \ne i^*} \share{v}_i$\\
  \pcind \pcind - $\cmt_{i^*} \sampler \{ 0,1\}^\lambda$\\
  6. Compute the commitments: $\cmt_i=\commit{\state_i,\rho_i}$.\\
  7. Compute the commitment: $h_0 = \hash(\cmt_1, \cdots, \cmt_{N})$.\\
  [\baselineskip]
  9. For each main party $p \in \oneto{D}\times \oneto{2}$: compute $\share{\bm{\alpha}'}_p$ and $\share{v'}_p$\\
  10. For each dimension $k\in\oneto{D}$: compute $H_k=\hash \big((\share{\bm{\alpha}'}_{(k,i)},\share{v'}_{(k,i)})_{i \in \oneto{2}}\big)$\\
  11. Compute $h_1=\hash (H_1,...,H_D)$\\
  [\baselineskip]
  \textbf{Step 3: Output transcript} \\
  12. The prover outputs the transcript $(h_0,\ch_1,\rsp_1,\ch_2,\rsp_2)$, where $\rsp_1=h_1$ and $\rsp_2=((\cmt_{i^*}),\share{\bm{\alpha}'}_{i^*},(\state_j,\rho_j)_{j \ne i^*})$. If $i^* \ne N$, he also has to send $\share{\bm{x}}_{N},\share{\bm{\beta}}_{N}$ and $\share{c}_{N}$.\\
}
\vspace{-\baselineskip}
\captionof{figure}{\footnotesize{HVZK simulator of the PoK with additive sharing and hypercube optimization}}
   \label{hvzk_hyp}
\end{figure}
\begin{itemize}
    \item \textbf{Simulator 0 (real world):} it takes in input the witness $\bm{x}$ and the challenges $\ch_1 = \big( (\gamma_j)_{j \in \oneto{n}},\epsilon \big)$ and $\ch_2 = i^*$. It worrectly executes the algorithm 3, hence its output is the correct distribution.
    \item \textbf{Simulator 1:} Same as the \textbf{Simulator 0}, but uses true randomness instead of seed-derived randomness for leaf $i^*$.\\ If $i^* = N$, the leafs $\share{\bm{x}}_N,\share{\bm{\beta}}_N,\share{c}_N$ are computed as in the MPC protocol. \\
    The pseudo-random generator is supposed to be ($t$,$\epsilon_{PRG}$)-secured, its outputs are indistinguishable from the uniform distribution. Since the principal parts correspond to the sum of a certain number of leaves whose distributions are indistinguishable from that of the real worlds, their distribution is also indistinguishable.
    \item \textbf{Simulator 2:} Replace the leafs $\share{\bm{x}}_N,\share{\bm{\beta}}_N,\share{c}_N$ in \textbf{Simulator 1} by uniformly sampled values. Compute $\share{v}_{i^*} = - \sum_{i \ne i^*} \share{v}_i$. Note that this simulator becomes independent from the secret witness $\bm{x}$.\\
    If $i^* = N$, it only impacts the shares $\share{\bm{\alpha}}_{i^*}$ and $\share{v}_{i^*}$. Note that this change doesn't alter the uniform distribution of these values. It doesn't alter the distribution of any other leaf. \\
    If $i^* \ne N$, it only impacts $(\share{\bm{x}}_N,\share{\bm{\beta}}_N,\share{c}_N)$ in the simulated response and the values computed from them in the MPC protocol. It doesn't alter the distribution of other leaves. We observe that the shares in $(\share{\bm{x}}_N,\share{\bm{\beta}}_N,\share{c}_N)$ are calculated by adding a randomness value from each seed of party $i \ne i^*$, which correspond to adding a uniform random value from $\seed_{i^*}$. Since this distribution was uniform in \textbf{Simulator 1}, the output distributions are the same. Remember that this does not change the distributions of the main parts for the same reasons as before.
    \item \textbf{Simulator 3:} Rather than computing the value of $\share{\bm{\alpha}}_{i^*}$ as in the MPC protocol, sample it uniformly from $\Fqme^r$. As in the previous simulator, it doesn't change their output distribution.
\end{itemize}
As such, the output of the simulator is indistinguishable from the real distribution.

\end{proof}


\subsubsection{Security Proofs for the Threshold Sharing Protocol} \hfill \\

\begin{theorem}
\label{Theoremth }
The MinRank Proof of Knowledge protocol based on threshold secret sharing described in Fig.\ref{thmpcith} has the following properties: \begin{itemize}
\item\textbf{Completeness:} A prover $\mathcal{P}$ who has the knowledge of a solution of a MinRank instance will always be accepted by the verifier. 
\item\textbf{Soundness:} Suppose that there is an efficient prover $\Tilde{\mathcal{P}}$ that convinces the verifier to accept with probability $$\Tilde{\epsilon} > \epsilon$$ with $$\epsilon = \frac{1}{\binom{N}{\ell}}+ p\cdot \frac{\ell\cdot(N-\ell)}{\ell+1}$$ where $\epsilon$ is the soundness of the protocol in Fig.\ref{thmpcith}, and $p$ is the soundness of the MPC protocol used, i.e, $\frac{2}{q^{m\eta}}-\frac{1}{q^{2m\eta}}$.\\
Then, there is an efficient probabilistic extraction algorithm $\mathcal{E}$ that, given a rewindable black-box access to $\Tilde{\mathcal{P}}$, outputs either a solution of the MinRank instance, or a commitment collision by making a number of calls to $\Tilde{\mathcal{P}}$ which is bounded by $$\frac{4}{\Tilde{\epsilon}-\epsilon}\cdot \Big(1+\Tilde{\epsilon}\cdot\frac{8\cdot(N-\ell)}{\Tilde{\epsilon}-\epsilon}\Big)$$
\item\textbf{Honest-Verifier Zero-Knowledge:} If the pseudo-random generator algorithm PRG and the commitment $\mathsf{Com}$ are indistinguishable from the uniform random distribution, then the algorithm \ref{thmpcith} is Honest-Verifier Zero Knowledge.
\end{itemize}
\end{theorem}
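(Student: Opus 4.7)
The overall plan is to mirror the structure of the proof of Theorem~\ref{Theoremhypercube}, adapting each part to the Shamir setting.

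\textbf{Completeness.} As in the additive case, this is immediate: if the prover knows a genuine solution~$\bm{x}$, then the shares are a valid $(\ell+1,N)$-sharing of $(\bm{x},\bm{\beta},\bm{a},c)$ with $c=-\langle\bm{\beta},\bm{a}\rangle$, the reconstructed $\bm{E}$ has rank at most $r$, so $\Pi^\eta$ always outputs $v=0$ on the set $S$, and linearity of $\reconstruct$ ensures that $\bar v^{(e)}=0$ is consistent with any $\share{v}_i$ on $I$. The verifier's two recomputed hashes then match.

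\textbf{Soundness.} I first bound the cheating probability, then describe the extractor. Assuming that no commitment/Merkle collision is found (which is negligible under the collision-resistance of $\mathsf{Com}$ and $\mathsf{Merkle}$), any accepting transcript fixes, via the opened shares on $I$ and the revealed $\share{\bm{\alpha}}_{i^*}$, a unique candidate sharing on $S$. A cheating prover has essentially two strategies, following the analysis of~\cite{FR22}. Either (i) he commits a valid $(\ell+1,N)$-sharing of some $(\bm{x}',\bm{\beta}',\bm{a}',c')$ and hopes that $\Pi^\eta$ produces a false positive, which happens with probability at most $p=\frac{2}{q^{m\eta}}-\frac{1}{q^{2m\eta}}$; or (ii) he commits an \emph{inconsistent} sharing, i.e.\ a tuple of shares that is consistent with a degree-$\ell$ polynomial on a strict subset $T\subset\oneto{N}$ of size at most $\ell+1$ for some of the secrets. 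For strategy (ii), the verification succeeds only when the verifier's challenge $I$ is contained in that consistent subset, otherwise the Merkle check still passes but the reconstructions $\bar{\bm\alpha},\bar v$ produced by the verifier on $I\cup\{i^*\}\subset S$ disagree with the committed $h_2$. A careful counting (as in~\cite{FR22}) gives that the probability of an adversarial success is at most
\[
\epsilon=\frac{1}{\binom{N}{\ell}}+p\cdot\frac{\ell\,(N-\ell)}{\ell+1},
\]
where the second term captures the false-positive rate amplified by the ratio of ``consistent $I$'' to ``all $I$'' when the prover commits an almost-consistent sharing. I would present this counting as the crux lemma.

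\textbf{Extractor.} The extraction follows the usual two-phase Splitting-Lemma template used in the proof of Theorem~\ref{Theoremhypercube}. Fix the random tape $r_h$ used for the first-round commitment; call $r_h$ \emph{good} if $\prb[\operatorname{Succ}_{\Tilde{\mathcal{P}}}\mid r_h]\ge(1-\delta)\Tilde\epsilon$. After finding a first accepted transcript $\mathcal{T}_1$ with challenges $(\ch_1,I_1)$, the extractor rewinds and resamples $\ch_2=I$ until obtaining a second transcript $\mathcal{T}_2$ with $I_2\neq I_1$. With the two consistent openings on $I_1\cup I_2\cup\{i^*_1,i^*_2\}$, provided this set has size at least $\ell+1$ (which is ensured as soon as $I_1\neq I_2$ because the common $S$ is fixed and $|I_j|=\ell$), the Lagrange interpolation reconstructs a full candidate sharing $\share{\bm{x}}$ on $S$, hence a candidate witness $\bm{x}$. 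Either the check $v=0$ on $S$ together with the MPC soundness proves that this $\bm{x}$ solves the MinRank instance, or a commitment collision is found. Exactly as in the additive case, the lower bound $\prb[\operatorname{Succ}_{\Tilde{\mathcal{P}}}\cap(I_1\neq I_2)\mid r_h\text{ good}]\ge(1-\delta)\Tilde\epsilon-\frac{1}{\binom{N}{\ell}}\ge(1-\delta)\Tilde\epsilon-\epsilon$ holds, from which a geometric-series argument and the choice $(1-\delta)\Tilde\epsilon=(\Tilde\epsilon+\epsilon)/2$ yields the claimed bound on the expected number of calls; the factor $8(N-\ell)$ replacing the $2\ln(2)$ of the additive case comes from replacing $\ln(2)/(1-\epsilon')\approx\ln(2)$ by a looser estimate adapted to the larger second challenge space.

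\textbf{Honest-Verifier Zero-Knowledge.} I would exhibit a simulator that, given $(\ch_1,\ch_2)=((\gamma_j)_j,\epsilon,I)$, picks the public party $i^*\in S\setminus I$, samples uniformly the $\ell$ shares $(\share{\bm{x}}_i,\share{\bm{\beta}}_i,\share{\bm{a}}_i,\share{c}_i)_{i\in I}$, samples a random $\share{\bm{\alpha}}_{i^*}\sampler\Fqme^r$, commits these revealed states honestly and commits the $N-\ell$ remaining leaves to random strings, then uses $\expand_{I\cup\{i^*\}}$ to derive $(\share{\bm{\alpha}}_i)_{i\in S}$ and finally chooses $\share{v}_{i^*}$ so that $\reconstruct_S(\share{v})=0$ and completes $(\share{v}_i)_{i\in S}$ by $\expand$. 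The Merkle root $h_0$ and the hash $h_1$ are then computed from these values. Indistinguishability is established through a short sequence of hybrids exactly as in Theorem~\ref{Theoremhypercube}: the hiding of the commitments on the $N-\ell$ unopened leaves allows replacing their true values by garbage without detection; the perfect $\ell$-privacy of Shamir's scheme (Definition) guarantees that the $\ell$ shares on $I$ are distributed uniformly as in an honest execution; and the uniform $\share{\bm{\alpha}}_{i^*}$ has the right marginal since $\bm{a}$ is uniform and independent of the other values, so that $\bm{\alpha}=\epsilon\bm w+\bm a$ is uniform given the rest. The forced value of $\share{v}_{i^*}$ matches an honest execution because $v=0$ holds in the real world.

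The main obstacle I anticipate is the soundness bound: correctly enumerating the inconsistent-sharing cheating strategies and proving that $p\cdot\ell(N-\ell)/(\ell+1)$ is indeed the right additive term (rather than the naive $p(1-1/\binom{N}{\ell})$) requires a careful case split depending on which of the inputs $\bm{x},\bm{\beta},\bm{a},c$ are shared inconsistently and how this interacts with the single opening $\share{\bm{\alpha}}_{i^*}$; this is where the proof genuinely diverges from the additive case and where I would lean on the counting argument of~\cite{FR22}.
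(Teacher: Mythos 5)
Your proposal is correct and follows essentially the same route as the paper: the paper's own proof treats completeness as immediate and delegates both the soundness analysis (the counting argument behind $\epsilon = \frac{1}{\binom{N}{\ell}}+p\cdot\frac{\ell(N-\ell)}{\ell+1}$ and the extractor bound with the $8(N-\ell)$ factor) and the HVZK argument to the general threshold-LSSS MPCitH framework of \cite{FR22} (Appendices C and D), which are exactly the two points you also defer to that reference. Your additional scaffolding (the two cheating strategies, the two-transcript interpolation on $I_1\cup I_2$, and the hybrid simulator) is consistent with that framework and in fact supplies more detail than the paper itself does.
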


\begin{proof} \hfill \\
\textbf{Completeness:}\\
By construction, if the prover has knowledge of a solution of the MinRank instance, he will always be able to execute the protocol correctly, i.e, he will always obtain $\bm{\alpha}$ such that $v=0$ when executing the MPC protocol $\Pi^\eta$, this is obvious. \\

\textbf{Soundness:}\\

The proof is rather long and complex. For this proof, we refer to \cite{FR22}, who proved this theorem for any MPC protocol and MPCitH protocol, as long as they lie in their model. Our threshold protocol is an exact application of their model. Hence, the proof of the above theorem is the same as the proof in appendix D of \cite{FR22}.

\textbf{Honest-Verifier Zero-Knowledge:}\\

The proof is similar to what is done in the case of the additive sharing, and holds by the $t$-privacy of the MPC protocol, as well as the hiding property of the commitments. Moreover, a proof in the general case is done in \cite[Appendix C]{FR22}. Since we are in their model of MPCitH, the proof applies here as well.
\end{proof}
\subsection{Security proofs for the Signature Schemes}\label{sec_proofs_sig}
The proofs follow in large parts the proofs in \cite{AGHHJY22}, \cite{FJR22} and \cite{FR22}.

\subsubsection{Security Analysis of MIRA-Additive}
We need to prove that the signature scheme is EUF-CMA secure:
\begin{theorem}
\label{secu_sig_hypercube}
    Let the PRG used be $(t,\epsilon_{PRG})$-secure, and $\epsilon_{MR}$ the advantage an adversary has over the MinRank problem. Consider $\hash_0,\hash_1,\hash_2,\hash_3,\hash_4$ behave as random oracles, with an output of $2\lambda$ bits. Then, if an adversary makes $q_i$ queries to $\hash_i$, $q_S$ queries to the signing oracle, the probability for him to produce a forgery (EUF-CMA) for the MIRA Additive Signature Scheme (Fig.\ref{add_sig}) is: 
    \begin{align*}
        \prb[\mathsf{Forge}] \le \frac{3\cdot (q + \tau \cdot N\cdot q_S)^2}{2\cdot 2^{2\lambda}}+\frac{q_S \cdot (q_S+5q)}{2^{2\lambda}}+ \epsilon_{PRG} + \prb[X+Y=\tau] + \epsilon_{MR}
    \end{align*}
    where $\tau$ is the number of rounds of the signature, $p=\frac{1}{q^{m\eta}}+\left(1-\frac{1}{q^{m\eta}}\right)\frac{1}{q^{m\eta}}$, $X = \operatorname{max}_{i\in [0,q_2]}\{X_i\}$ with $X_i \sim \mathcal{B}(\tau,p)$, and $Y = \operatorname{max}_{i\in [0,q_4]}\{Y_i\}$ with $Y_i \sim \mathcal{B}(\tau-X,\frac{1}{N})$.
\end{theorem}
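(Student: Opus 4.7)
The plan is to follow the game-hopping blueprint for Fiat-Shamir transformed MPCitH signatures used in \cite{AGHHJY22} and \cite{FJR22}, adapted to the $5$-round, hypercube-accelerated setting of Figure~\ref{add_sig}. I would proceed through a sequence of games starting from the real EUF-CMA game and bound each transition until the advantage of any remaining adversary is reducible to MinRank.

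First I would handle the collision events. In a first hop, I would abort whenever any of the random oracles $\hash_0,\ldots,\hash_4$ produces a collision over the set of inputs queried by the adversary together with those implicitly generated during the $q_S$ signing queries. Since at most $q + \tau\cdot N\cdot q_S$ such inputs are generated (the factor $\tau N$ coming from the per-party commitments $\cmt_i^{(e)}$), a union of birthday bounds gives the first term $\frac{3(q+\tau N q_S)^2}{2\cdot 2^{2\lambda}}$. In a second hop, I would abort whenever a salt $\salt$ sampled by the signing oracle collides with a salt already occurring in an adversary query or in a previous signature, which contributes the term $\frac{q_S(q_S+5q)}{2^{2\lambda}}$ (five random-oracle families to which the salt may collide). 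In a third hop, I would replace the honest signing oracle by the HVZK simulator from the proof of Theorem~\ref{Theoremhypercube}: the simulator samples the auxiliary data of the hidden leaf uniformly at random rather than deriving it from the PRG applied to $\seed_{i^*}^{(e)}$. The indistinguishability between the two oracles reduces to the security of the PRG, contributing the additive term $\epsilon_{PRG}$.

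In the final game the simulator no longer requires the secret key, so I need to analyse the probability of forgery directly. Any forgery $\sigma^* = (\salt, h_1, h_2, (\rsp^{(e)})_{e})$ on a fresh message $m^*$ is anchored to one query to $\hash_2$ (determining the first challenges $\{(\gamma_j^{(e)},\epsilon^{(e)})\}_e$ through $h_1$) and one query to $\hash_4$ (determining the hidden leaves $\{i^{*(e)}\}_e$ through $h_2$). For each such query pair, the Kales--Zaverucha analysis \cite{KZ20} applies to the $5$-round structure: in each of the $\tau$ parallel executions, the adversary wins either by a false positive of $\Pi^\eta$ on the committed (bad) witness, which happens independently with probability $p$, or by guessing the hidden leaf consistent with its cheating, which happens independently with probability $\frac{1}{N}$. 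Writing $X$ for the maximum, over the $q_2$ queries to $\hash_2$, of the number of executions in which the first challenge passes by false positive, and $Y$ for the maximum, over the $q_4$ queries to $\hash_4$, of the number of remaining $\tau-X$ executions in which the adversary guesses the correct leaf, this strategy succeeds with probability at most $\prb[X+Y=\tau]$, matching the bound in the statement.

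Finally, if a forgery cannot be explained by this attack, then there exists at least one execution for which the committed shares and the revealed transcript are consistent with an honest run of $\Pi^\eta$ on some witness; invoking the extractor implicit in the soundness part of Theorem~\ref{Theoremhypercube} then produces a solution of the MinRank instance encoded in $\pk$, and the probability of this event is bounded by $\epsilon_{MR}$. Summing the bounds from all game hops yields the claimed inequality. The main obstacle in the proof is precisely the final step: I must carefully account for the adversary's adaptivity across the two challenges, making sure that (i) salt collisions are ruled out so that each query to $\hash_4$ effectively targets a single first challenge transcript, and (ii) the maxima of independent binomial variables correctly upper-bound the best attack against the interleaved $\hash_2/\hash_4$ oracle calls, as in the analysis of \cite{KZ20}.
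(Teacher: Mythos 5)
Your proposal follows essentially the same game-hopping route as the paper: birthday bounds for oracle collisions, a salt-freshness hop, replacement of the signing oracle by the HVZK simulator (costing $\epsilon_{PRG}$), and a final split between the Kales--Zaverucha cheating event bounded by $\prb[X+Y=\tau]$ and the witness-extraction event bounded by $\epsilon_{MR}$. The only cosmetic deviation is that you state the first collision hop over all five oracles while quoting the factor-$3$ bound, which in the paper arises from restricting that hop to $\hash_0,\hash_1,\hash_3$ only; otherwise the argument matches.
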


\begin{proof}
    In this proof, we will adopt a game hopping strategy in order to find the upper bound. \\
    The first game will be the access to the standard signing oracle by the adversary $\mathcal{A}$. We will then game hop in order to eliminate the cases where collisions happen, and, through some other games, we will manage to find an upper bound. \\
    We note $\prb_i[\mathsf{Forge}]$ the probability of forgery when considering game $i$. 
    The aim of the proof is to find an upper bound on $\prb_1[\mathsf{Forge}]$.\\
    \begin{itemize}
    \item \textbf{Game 1} \\
    This is the interaction between $\mathcal{A}$ and the real signature scheme. \\
    $\mathsf{KeyGen}$ generates $(\bm{M}_0, \dots , \bm{M}_k,\bm{x})$ and $\mathcal{A}$ receives $\bm{M}_0 ,\dots,\bm{M}_k$. $\mathcal{A}$ can make queries to each $\hash_i$ independently, and can make signing queries. At the end of the attack, $\mathcal{A}$ outputs a message/signature pair, $(m,\sigma)$. The event $\mathsf{Forge}$ happens when $\sigma$ is a valid signature of $m$ and no signature of $m$ has been queried to the signing oracle. \\
    \item \textbf{Game 2} \\
    In this game, we add a condition to the success of the attacker. The condition we add is that if there is a collision between outputs of $\hash_0$, or $\hash_1$, or $\hash_3$, then, the forgery isn't valid. \\
    The first step is to look at the number of times every $\hash_i$ is called when calling the signing oracle. For $\hash_0$, we make $\tau \cdot N$ queries. The signing oracle contains also $\tau $ calls to $\hash_1$, one to $\hash_2$, $\tau\cdot D$ to $\hash_3$, and finally, a single one to $\hash_4$. \\
    The number of queries to $\hash_0$ or $\hash_1$ or $\hash_3$ is then bounded from above by $q+\tau \cdot N \cdot q_S$, where $q_i$ is the number of queries made by $\mathcal{A}$ to $\hash_i$, $q = \operatorname{max}\{q_0,q_1,q_2,q_3,q_4\}$ (we take $q_2$ and $q_4$ as well for $q$ since we are giving an upper bound), and $q_S$ is the number of queries to the signing oracle. \\
    We can then have the following result (it comes simply from the probability to have at least one collision with $q+\tau\cdot N\cdot q_S$ values):  
    $$\abs{\prb_1[\mathsf{Forge}]-\prb_2[\mathsf{Forge}]} \le \frac{3 \cdot (q+\tau\cdot N \cdot q_S)^2}{2\cdot 2^{2\lambda}}$$
   \item  \textbf{Game 3} \\
    The attacker now fails if the inputs to any of the $\hash_i$ has already appeared in a previous query. Wlog. the adversary does not make such a query itself and, if it happens in a signing query, this means that (at least) the salt randomly sampled by the signing oracle appears in a previous hash or signing query. We can bound this event with:  \begin{align*}\abs{\prb_2[\mathsf{Forge}]-\prb_3[\mathsf{Forge}]} &\le \frac{q_S \cdot (q_S + q_0 + q_1 + q_2 +q_3+q_4)}{2^{2\lambda}} &\le \frac{q_S \cdot (q_S + 5\cdot q)}{2^{2\lambda}}\end{align*}

  \item \textbf{Game 4}\\
  When beginning the signature of the message $m$, $h_1$ and $h_2$ are sampled uniformly and expanded into $(\gamma_1^{(e)}, \dots, \gamma_n^{(e)}, \epsilon^{(e)})_{e \in \oneto{\tau}}$, and $({i^{*}}^{(e)})_{e \in \oneto{\tau}}$. The game proceeds as before, but now we replace the queries to $\hash_{2}$ and $\hash_{4}$ by $h_1$ and $h_2$. If a query to $\hash_2$ or $\hash_4$ was already made, the attacker fails. However, this situation doesn't happen as \textbf{Game 3} would fail before. Hence, $$\prb_4[\mathsf{Forge}] = \prb_3[\mathsf{Forge}]$$
  
   \item  \textbf{Game 5} \\
    To answer the signing queries, we now use the \textbf{HVZK} simulator built in the previous proof, in order to generate the views of the open parties. By security of the PRG, the difference with the previous game is: 
    \begin{align*}\abs{\prb_5[\mathsf{Forge}]-\prb_4[\mathsf{Forge}]} &\le \epsilon_{PRG}\end{align*}
  \item   \textbf{Game 6} \\
    Finally, we say that an execution $e^*$ of a query $ h_2 = \hash_4(m,\salt,h_1,\share{H_1^{(e)}} \dots \share{H_D^{(e)}}_{e\in \oneto{\tau}})$ defines a good witness $\bm{x}$ if: \\
    - Each of the $H_k^{(e)}$ are the output of a query to $H_3$ \\
    - $h_1$ is the output of a query to $\hash_2$, i.e, $$h_1 = \hash_2(\salt,m,h_0^{(1)},\dots , h_0^{(\tau)})$$
    - Each $h_0^{(e)}$ is the output of a query to $\hash_1$, i.e, $$h_0^{(e)} = \hash_1(salt,e,\cmt_1^{(e)} ,\dots, \cmt_{N}^{(e)})$$
    -Each $\cmt_i^{(e)}$ is the output of a query to $\hash_0$, i.e, $$\cmt_i^{(e)} = \hash_0(\salt,e,i,\state_i^{(e)})$$
    - The vector $\bm{x} \in \Fqk$ defined by states $\{\state_i\}_{i \in \oneto{N}}$ is a correct witness, i.e, $\bm{E} = \bm{M}_0 + \sum_{i=1}^{k}\bm{M}_ix_i$ such that $\operatorname{W}_R(\bm{E}) \le r$. \\
    In case where such an execution happens, one can retrieve the correct witness from the states $\{\state_i\}_{i \in \oneto{N}}$ and, as a consequence, one can solve the MinRank instance. This means that $\prb_6[\mathsf{Solve}] \le \epsilon_{MR}$. \\[0.3cm]
    Finally, we only need to look at the upper bound of $\abs{\prb_6[\mathsf{Forge} \cap \mathsf{\overline{Solve}}]}$.
    This probability is upper bounded by the value $$\prb[X+Y=\tau]$$ where $X = \operatorname{max}_{i\in [0,q_2]}\{X_i\}$ with $X_i \sim \mathcal{B}(\tau,p)$, $Y = \operatorname{max}_{i\in [0,q_4]}\{Y_i\}$ with $Y_i \sim \mathcal{B}(\tau-X,\frac{1}{N})$. \\
    We explain this bound below: \\
    
    $\mathsf{Solve}$ doesn't happen here, meaning that, to have a forgery after a query to $\hash_4$, $\mathcal{A}$ has no choice but to cheat either on the first round or on the second one. \\
    
    \textit{Cheating at the first round.} For any query $Q_2$ to $\hash_2$, we call the output of this query $h_1$. For any query $Q_2$, if a false positive appears in a round $e$ with this value of $h_1$, then we add this round $e$ to the set we call $G_2(Q_2,h_1)$. This means that $\prb[e \in G_2(Q_2,h_1) \text{ }| \text{ } \mathsf{\overline{Solve}}] \le p=\frac{1}{q^{m\eta}}+\left(1-\frac{1}{q^{m\eta}}\right)\frac{1}{q^{m\eta}}$. Since the response $h_1$ is uniformly sampled, each round $e$ has the same probability to be in the set $G_2(Q_2,h_1)$. This means that $\#G_2(Q_2,h_1)$ follows the binomial distribution $X_{Q_2} = \mathcal{B}(\tau,p)$. We can then define $(Q_{2\text{best}},h_{1\text{best}})$ such that $\#G_2(Q_2,h_1)$ is maximized, i.e, $$\#G_2(Q_{2\text{best}},h_{1\text{best}}) \sim X = {\operatorname{max}\{X_{Q_2}\}}_{(Q_2 \in \mathcal{Q}_2)}$$

    \textit{Cheating at the second round.} Now, we need to look at the cheating in the second round, i.e, the queries to $\hash_4$. We will note this query $Q_4$, with the output of this query $h_2$.
    For the signature to be accepted, we know that, if in a round, the prover sends a wrong value of $h_1$, then he needs to cheat on exactly one leaf (it is already established that is isn't possible to cheat on less, or on more, than one leaf). He only needs to cheat when the value of $h_1^{(e)}$ is wrong, i.e, he needs to cheat for every round $e \notin G_2(Q_{2\text{best}},h_{1\text{best}})$. Since every time he cheats, the probability to be detected is $\frac{1}{N}$, it is easy to see the probability that the verification outputs ACCEPT is upper bounded by $\Big( \frac{1}{N}\Big)^{\tau-\#G_2(Q_{2\text{best}},h_{1\text{best}})} $ \\
    The probability that the prover is accepted on one of the $q_4$ queries is then upper bounded by $1-\Bigg(1- \Big( \frac{1}{N}\Big)^{\tau-\tau_1} \Bigg)$ where $\tau_1 =\#G_2(Q_{2\text{best}},h_{1\text{best}})$.
    By summing over all values of $\tau_1$ possible, we have then the upper bound: $$\prb_6[\mathsf{Forge} \cap \overline{\mathsf{Solve}}] \le \prb[X+Y = \tau]$$ where $X$ is as before, and $Y = {\operatorname{max}\{Y_{Q_2}\}}_{(Q_2 \in \mathcal{Q}_2)}$ where the $Y_{Q_2}$ are distributed following $\mathcal{B}(\tau-X,\frac{1}{N})$. 
    \end{itemize}
    All that is left to do is then to compute the sum of all the upper bounds we retrieved: this gives us the wanted result. \\
    
\end{proof}

\subsubsection{Security Analysis of MIRA-Threshold}

\begin{theorem}
\label{secu_sig_th}
    Let the PRG used be $(t,\epsilon_{PRG})$-secure, and $\epsilon_{MR}$ the advantage an adversary has over the MinRank problem. Consider $\hash_0,\hash_1,\hash_2$ and $\hash_M$ behave as random oracles, with an output of $2\lambda$ bits ($\hash_M$ is the function used for the Merkle Tree). Then, if an adversary makes $q_i$ queries to $\hash_i$, $q_S$ queries to the signing oracle, $q_M$ queries to $\hash_M$, the probability for him to produce a forgery (EUF-CMA) for the MIRA Threshold Signature Scheme (Fig.\ref{th_sig}) is: 
    \begin{align*}
        \prb[\mathsf{Forge}] \le \frac{(q + \tau \cdot (2 \cdot N-1) \cdot q_S)^2}{2^{2\lambda}}+\frac{q_S \cdot (q_S+3q)}{2^{2\lambda}}+ \epsilon_{PRG} + \prb[X+Y=\tau] + \epsilon_{MR}
    \end{align*}
    where $\tau$ is the number of rounds of the signature, $p=\frac{1}{q^{m\eta}}+\left(1-\frac{1}{q^{m\eta}}\right)\frac{1}{q^{m\eta}}$, $X = \operatorname{max}_{i\in [0,q_1]}\{X_i\}$ with $X_i \sim \mathcal{B}(\tau,\binom{N}{\ell+1}\cdot p)$, and $Y = \operatorname{max}_{i\in [0,q_2]}\{Y_i\}$ with $Y_i \sim \mathcal{B}(\tau-X,\frac{1}{\binom{N}{\ell}})$.
\end{theorem}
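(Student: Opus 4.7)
The plan is to reproduce the game-hopping strategy used for Theorem~\ref{secu_sig_hypercube}, adapting it to the threshold setting where the commitment stage uses a Merkle tree and where cheating at the first challenge is governed by the amplified false-positive rate $\binom{N}{\ell+1}\cdot p_{\Pi,\eta}$ identified in~\cite{FR22}. I begin with Game~1, the real EUF-CMA experiment. In Game~2 I abort whenever a collision occurs between any two outputs of the hash functions $\hash_0$, $\hash_1$, $\hash_2$ or $\hash_M$. Each signing query makes at most $\tau\cdot N$ calls to $\hash_0$ (one per share) and at most $\tau\cdot(N-1)$ calls to $\hash_M$ (internal Merkle nodes), hence at most $\tau\cdot(2N-1)$ queries that can trigger a collision with previous outputs, which together with $q$ direct queries gives the birthday bound $(q+\tau(2N-1)q_S)^2/2^{2\lambda}$. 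Game~3 aborts when a signing-query input to any $\hash_i$ coincides with a prior one: since the $2\lambda$-bit salt is freshly sampled, this costs at most $q_S(q_S+3q)/2^{2\lambda}$ (three random oracles contribute, while $\hash_M$ is already absorbed in the collision accounting).

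In Game~4 I exploit the absence of prior collisions/coincidences to program $h_1$ and $h_2$ uniformly, unchanged in probability. In Game~5 I replace signing by the HVZK simulator from Theorem~\ref{Theoremth }, which produces transcripts only from the opened shares; the gap is at most $\epsilon_{\mathrm{PRG}}$ because the simulator only differs from the real protocol by replacing PRG-derived randomness with uniform randomness for the $\ell$ opened parties. In Game~6, I extract the witness whenever the forgery's committed states are consistent and determine a $\bm{x}$ with $\operatorname{W}_R(\bm{M}_0+\sum_i x_i\bm{M}_i)\le r$: the probability of such an extraction-solve event is bounded by $\epsilon_{MR}$.

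It remains to bound the probability that a forgery is output in Game~6 without extraction succeeding; equivalently, the adversary must cheat on the underlying MPCitH. For each query $Q_1$ to $\hash_1$ producing response $h_1$, I say a round $e$ is \emph{first-round lucky} if the corresponding run of $\Pi^\eta$ accepts despite an invalid witness or an invalid (non-$(\ell+1,N)$) sharing of the MPC inputs; by the argument of~\cite{FR22}, this happens with probability at most $\binom{N}{\ell+1}\cdot p_{\Pi,\eta}$ (the combinatorial factor counts the inconsistent sharings that could be committed). Maximising over the $q_1$ queries yields a random variable $X=\max_i X_i$ with $X_i\sim\mathcal{B}(\tau,\binom{N}{\ell+1}\cdot p)$. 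For rounds that are \emph{not} first-round lucky, the forger must be second-round lucky: the verifier's challenge $I\in\binom{[N]}{\ell}$ must coincide with the exact set of honest parties, which happens with probability $1/\binom{N}{\ell}$ per round. Maximising over the $q_2$ queries yields $Y=\max_i Y_i$ with $Y_i\sim\mathcal{B}(\tau-X,1/\binom{N}{\ell})$, and a forgery necessitates $X+Y=\tau$. Summing the losses across Games~1--6 produces the claimed bound.

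The main obstacle is the first-round analysis: one must justify rigorously that the only way for the adversary to avoid extraction is either to commit an \emph{inconsistent} secret sharing (accepted only if the verifier does not query the offending parties, contributing the $\binom{N}{\ell+1}$ factor) or to rely on the intrinsic false positives of $\Pi^\eta$. This reduction of all cheating strategies to these two channels is the content of the general threshold MPCitH analysis of~\cite{FR22}, which I would invoke rather than re-prove; the rest of the argument is standard bookkeeping of random-oracle queries.
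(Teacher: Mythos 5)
Your proposal is correct and follows essentially the same route as the paper's proof: the identical six-game sequence (collision exclusion for the commitment and Merkle hashes, input-freshness via the salt, programming of $h_1,h_2$, replacement by the HVZK simulator, and witness extraction), with the final $\prb[X+Y=\tau]$ term justified by deferring the threshold cheating analysis to \cite{FR22}, exactly as the paper does. The only difference is cosmetic bookkeeping of how the $\tau\cdot(2N-1)$ hash calls per signing query are split between $\hash_0$ and $\hash_M$, which does not affect the stated bound.
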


\begin{proof}
    In this proof, we will adopt a game hopping strategy in order to find the upper bound. \\
    The first game will be the access to the standard signing oracle by the adversary $\mathcal{A}$. We will then game hop in order to eliminate the cases where collisions happen, and, through some other games, we will manage to find an upper bound. \\
    We note $\prb_i[\mathsf{Forge}]$ the probability of forgery when considering game $i$. 
    The aim of the proof is to find an upper bound on $\prb_1[\mathsf{Forge}]$.\\
    \begin{itemize}
    \item \textbf{Game 1} \\
    This is the interaction between $\mathcal{A}$ and the real signature scheme. \\
    $\mathsf{KeyGen}$ generates $(\bm{M}_0, \dots , \bm{M}_k,\bm{x})$ and $\mathcal{A}$ receives $\bm{M}_0 ,\dots,\bm{M}_k$. $\mathcal{A}$ can make queries to each $\hash_i$ independently, and can make signing queries. At the end of the attack, $\mathcal{A}$ outputs a message/signature pair, $(m,\sigma)$. The event $\mathsf{Forge}$ happens when the message output by $\mathcal{A}$ was not previously used in a query to the signing oracle. \\
   \item  \textbf{Game 2} \\
    We add a condition to the success of the attacker now. If there is a collision in the outputs of $\hash_0$ or on $\hash_M$, then the forgery isn't valid. Here, $\hash_0$ is called $q_0$ times by $\mathcal{A}$, $\hash_M$ $q_M$ times. When $\mathcal{A}$ calls the signing oracle, there are in total: $\tau \cdot N$ calls to $\hash_0$, $\tau \cdot (2\cdot N-1)$ calls to $\hash_M$, and one to $\hash_1$ and $\hash_2$. In this game, only $\hash_0$ and $\hash_M$ are of interest, but for a simpler notation, we will take $q = \operatorname{max}\{ q_0,q_1,q_2,q_M\}$ (as it is an upper bound we are looking for, this is fine). We can then give an upper bound to the queries made by $\mathcal{A}$ to the hash functions, which is then: $q + \tau \cdot (2\cdot N-1)\cdot q_S$ where $q_S$ is the number of queries to the signing oracle. \\
    When making this many queries, we can now bound from above the probability of having a collision, with  $$\abs{\prb_1[\mathsf{Forge}]-\prb_2[\mathsf{Forge}]} \le \frac{(q + \tau \cdot (2\cdot N-1)\cdot q_S)^2}{2^{2\lambda}}$$
    \item \textbf{Game 3} \\
    The attacker now fails if the inputs to any of the $\hash_i$ has already appeared in a previous query. Wlog. the adversary does not make such a query itself and, if it happens in a signing query, this means that (at least) the salt randomly sampled by the signing oracle appears in a previous hash or signing query. Since we don't use $\salt$ in the Merkle Tree, this will only concern $\hash_0,\hash_1,\hash_2$. We sample $\salt$ $q_S$ time (once by signing oracle query), and $3\cdot q$ times as well (each time we call $\hash_0,\hash_1$ or $\hash_2$). If there is an input which already appears for $\hash_M$, this must be because a collision has been found either on $\hash_M$ or on $\hash_0$. However, we already excluded this in \textbf{Game 2}. This means we can give the following bound: \\
    \begin{align*}\abs{\prb_2[\mathsf{Forge}]-\prb_3[\mathsf{Forge}]} &\le \frac{q_S \cdot (q_S + q_0 + q_1 + q_2)}{2^{2\lambda}} &\le \frac{q_S \cdot (q_S + 3\cdot q)}{2^{2\lambda}}\end{align*}

  \item \textbf{Game 4}\\
  When beginning the signature of the message $m$, $h_1$ and $h_2$ are sampled uniformly and expanded into $\gamma_1^{(e)}, \dots, \gamma_n^{(e)}, \epsilon^{(e)}$, and ${i^{*}}^{(e)}$. The game proceeds as before, but now we replace the queries to $\hash_{2}$ and $\hash_{4}$ by $h_1$ and $h_2$. If a query to $\hash_2$ or $\hash_4$ was already made, the attacker fails. However, this situation doesn't happen as \textbf{Game 3} would fail before. Hence, $$\prb_4[\mathsf{Forge}] = \prb_3[\mathsf{Forge}]$$
 
  \item   \textbf{Game 5} \\
    To answer the signing queries, we now use the \textbf{HVZK} simulator built in the previous proof, in order to generate the views of the open parties. By security of the PRG, the difference with the previous game is: 
    \begin{align*}\abs{\prb_5[\mathsf{Forge}]-\prb_4[\mathsf{Forge}]} &\le \epsilon_{PRG}\end{align*}
 \item    \textbf{Game 6} \\
    Finally, we say that an execution $e^*$ of a query $ h_2 = \hash_2(m,\mathsf{pk},\salt,h_1, (\share{\boldsymbol{\alpha}^{(e)}}_i, \share{v^{(e)}}_i)_{i \in S, e \in \oneto{\tau}})$ defines a good witness $\bm{x}$ if: \\
    - $h_1$ is the output of a query to $H_1$, i.e, $$h_1 = \hash_1(\salt,m,h_0^{(1)},\dots , h_0^{(\tau)})$$
    - Each $h_0^{(e)}$ is the output of a query to the MerkleTree oracle, i.e, $$h_0^{(e)} = \mathsf{Merkle}(\cmt_1^{(e)} ,\dots, \cmt_{N}^{(e)})$$
    -Each $\cmt_i^{(e)}$ is the output of a query to $\hash_0$, i.e, $$\cmt_i^{(e)} = \hash_0(\salt,e,i,\state_i^{(e)})$$
    - The vector $\bm{x} \in \Fqk$ defined by states $\{\state_i\}_{i \in \oneto{N}}$ is a correct witness, i.e, $\bm{E} = \bm{M}_0 + \sum_{i=1}^{k}\bm{M}_ix_i$ such that $\operatorname{W}_R(\bm{E}) \le r$. \\
    In case where such an execution happens, one can retrieve the correct witness from the states $\{\state_i\}_{i \in \oneto{N}}$ and, as a consequence, one can solve the MinRank instance. This means that $\prb_6[\mathsf{Solve}] \le \epsilon_{MR}$. \\[0.3cm]
    Finally, we only need to look at the upper bound of $\abs{\prb_6[\mathsf{Forge} \cap \mathsf{\overline{Solve}}]}$.
    This probability is upper bounded by the value $$\prb[X+Y=\tau]$$ with $X = \operatorname{max}_{i\in [0,q_1]}\{X_i\}$ with $X_i \sim \mathcal{B}(\tau,\binom{N}{\ell+1}\cdot p)$, $Y = \operatorname{max}_{i\in [0,q_2]}\{Y_i\}$ with $Y_i \sim \mathcal{B}(\tau-X,\frac{1}{\binom{N}{\ell}})$ and where $p=\frac{1}{q^{m\eta}}+\left(1-\frac{1}{q^{m\eta}}\right)\frac{1}{q^{m\eta}}$. \\
    This result comes directly from \cite[Lemma 6 and Theorem 4, Appendix F]{FR22}.
    \end{itemize}
    All that is left to do is then to compute the sum of all the upper bounds we retrieved: this gives us the wanted result. \\

\end{proof}
\section{Known Attacks} \label{atk}
\subsection{Attacks against Fiat-Shamir Signatures} \label{atk_fs}
There are several attacks against signatures from zero-knowledge proofs obtained thanks to the Fiat-Shamir heuristic. \cite{AAB} proposes an attack more efficient than  brute force for protocols with more than one challenge, i.e. for protocols of a minimum of 5 rounds.

Kales and Zaverucha proposed in \cite{KZ20} a forgery attack which consists in guessing separately the two challenges of the protocol. It results an additive cost rather than the expected multiplicative cost. The cost to forge a valid transcript for a 5-round proof of knowledge corresponds to the cost of the optimal trade-off between the work needed to pass the first step and the work needed to pass the second step. To run the attack, one can find the optimal number of repetitions for the brute-force work of the first step with the formula: $$\tau'=\arg\min_{0\leq \tau' \leq \tau}\left\{\frac{1}{\sum_{i=\tau'}^\tau \binom{\tau}{i}P_1^i(1-P_1)^{\tau-i}}+\Big(\frac{1}{P_2}\Big)^{\tau-\tau'}\right\}$$ where $P_1$ and $P_2$ are the probabilities to pass respectively the first and the second challenges for one repetition.

\subsubsection{Cost of forgery of MIRA-Additive} \hfill

In the additive case, one obtains:
$$\text{cost}_{\text{forge}}=\min_{0\leq\tau'\leq\tau}\left\lbrace\dfrac{1}{\sum_{i=\tau'}^\tau \binom{\tau}{i}p^i(1-p)^{\tau-i}}+(N)^{\tau-\tau'}\right\rbrace$$ where $p = \frac{2}{q^{m\eta}}-\frac{1}{2q^m\eta}$. 

\subsubsection{Cost of forgery of MIRA-Threshold} \hfill

In the threshold case, one obtains:
$$\text{cost}_{\text{forge}}=\min_{0\leq\tau'\leq\tau}\left\lbrace\dfrac{1}{\sum_{i=\tau'}^\tau \binom{\tau}{i}p'^i(1-p')^{\tau-i}}+\binom{N}{\ell}^{\tau-\tau'}\right\rbrace \label{kz_eq_th_1}$$ where $p' = \Big(\frac{2}{q^{m\eta}}-\frac{1}{2q^m\eta} \Big) \binom{N}{\ell+1}$.

\subsection{Attacks against MinRank} \label{atk_mr}

In this section, we briefly describe the most effective attacks on MinRank. A reader can refer to \cite{BB22}, \cite{GNS23},\cite{BBBGT22}, \cite{BBCGPTTV20}, and \cite{GB00} for more details on the attacks.

To begin with, the following hybrid approach can benefit to all the other attacks. \\

\textbf{Hybrid approach}

\cite{BBBGT22} introduced a generic approach to improve all the attacks on MinRank. The idea of the attack is to solve smaller instances of MinRank instead.
The complexity is given by
\begin{align}
  \min_{a}\left(q^{ar}{\mathbb C}_{\mathcal A}(q,m,n-a,K-am,r)\right)
\end{align}
where $\mathbb C_{\mathcal A}(q,m,n,K,r)$ is the cost of an algorithm $\mathcal A$ to solve a MinRank instance. \\

\subsubsection{The Kernel Attack} \hfill \\

The kernel attack was described by Goubin and Courtois in \cite{GB00}. The idea of the attack is to take random vectors, and hoping that they are in the kernel of $\bm{E}$. Since $\bm{E}$ is of size $m \times n$, and is of rank at most $r$, $\operatorname{Ker}(\bm{E})$ will be a matrix of dimensions $n \times (n-r)$ at least. Thus, if $v \sampler \mathbb{F}_q^n$, $\prb[v \in \operatorname{Ker}(\bm{E})] = \frac{q^{n-r}}{q^n} = \frac{1}{q^r}$. Then, if we get $l$ independant vectors in $\operatorname{Ker}(\bm{E})$, and set the matrix $\bm{X}$ whose columns are the $l$ vectors, we can compute $(\bm{M}_0 + \sum_{i=1}^{k}x_i\bm{M}_i)X$, which gives us a linear system in $x_1 \dots x_k$, and $m\cdot l$ equations (since we have $(\bm{M}_0 + \sum_{i=1}^{k}x_i\bm{M}_i)X = 0$). Thus with $l = \lceil \frac{k}{m} \rceil$, we have a unique solution to the system, which we can find with linear algebra.\\
As to the complexity of the attack, it is quite obvious that it is in $O(q^{r\lceil \frac{k}{m} \rceil})$ to find the vectors in the kernel, and in $O(k^\omega)$ to solve the linear system. Hence, the total complexity is $$O(q^{r\lceil \frac{k}{m} \rceil}k^\omega)$$\\

\subsubsection{Algebraic Attacks} \hfill \\

\newcommand{\eqdef}{\stackrel{\text{def}}{=}}
\newcommand{\any}{*} 
\newcommand{\minor}[2]{\left\vert #1 \right\vert_{#2}}

\newcommand{\trsp}[1]{{#1}^{\top}}
\newcommand{\mb}[1]{\textcolor{blue}{\bf Magali: #1}}

\textbf{Minors Modeling}

The modeling was introduced and studied in \cite{FSS10} and \cite{FSS13}. This modeling uses the minors of the matrix $\bm{E}$, where the $x_i$ are still unknowns. It was also improved in \cite{GNS23}. We refer to these papers for the complexity of the attack.

\textbf{Support Minors Modeling}

The Support Minors modeling was introduced in~\cite{BBCGPTTV20}. This idea also uses minors of a matrix, giving us an other system of equations. With this approach, the complexity is of \begin{align*}
    \mathcal{O}\left( N_b {M_b }^{\omega-1}\right),
  \end{align*}
  where \begin{align}
  N_b &=  \sum_{i = 1}^{b}(-1)^{i+1} \binom{n}{ r+i}\binom{k+b-1-i}{b-i}
    \binom{m+i-1}{i}.\\
  M_b &= \binom{k+b-1}{b}\binom{n}{r}.
\end{align} and $b$ is the degree to which we augment the Macaulay matrix of the system (\cite{BBBGT22}). \\


\newpage
\bibliographystyle{alpha}
\bibliography{ref}

\newcommand{\etalchar}[1]{$^{#1}$}
\begin{thebibliography}{AMGH{\etalchar{+}}22}

\bibitem[AABN02]{AAB}
Michel Abdalla, Jee~Hea An, Mihir Bellare, and Chanathip Namprempre.
\newblock {From Identification to Signatures via the Fiat-Shamir Transform:
  Minimizing Assumptions for Security and Forward-Security}.
\newblock Cryptology ePrint Archive, Paper 2002/022, 2002.
\newblock \url{https://eprint.iacr.org/2002/022}.

\bibitem[AHIV17]{ligero}
Scott Ames, Carmit Hazay, Yuval Ishai, and Muthuramakrishnan
  Venkitasubramaniam.
\newblock Ligero: Lightweight sublinear arguments without a trusted setup.
\newblock In {\em Proceedings of the 2017 ACM SIGSAC Conference on Computer and
  Communications Security}, CCS '17, pages 2087--2104, New York, NY, USA, 2017.
  Association for Computing Machinery.

\bibitem[AMGH{\etalchar{+}}22]{AGHHJY22}
Carlos Aguilar-Melchor, Nicolas Gama, James Howe, Andreas Hülsing, David
  Joseph, and Dongze Yue.
\newblock {The Return of the SDitH}.
\newblock Cryptology ePrint Archive, Paper 2022/1645, 2022.
\newblock \url{https://eprint.iacr.org/2022/1645}.

\bibitem[ARZV22]{ARV22}
Gora Adj, Luis Rivera-Zamarripa, and Javier Verbel.
\newblock {MinRank in the Head: Short Signatures from Zero-Knowledge Proofs}.
\newblock Cryptology ePrint Archive, Paper 2022/1501, 2022.
\newblock \url{https://eprint.iacr.org/2022/1501}.

\bibitem[BB22]{BB22}
Magali Bardet and Manon Bertin.
\newblock {Improvement of Algebraic Attacks for Solving Superdetermined MinRank
  Instances}.
\newblock In Jung~Hee Cheon and Thomas Johansson, editors, {\em pqcrypto
  {~2022}}, volume 13512 of {\em lncs}, pages 107--123, Cham, September 2022.
  Springer International Publishing.

\bibitem[BBB{\etalchar{+}}22]{BBBGT22}
Magali Bardet, Pierre Briaud, Maxime Bros, Philippe Gaborit, and Jean-Pierre
  Tillich.
\newblock {Revisiting Algebraic Attacks on MinRank and on the Rank Decoding
  Problem}.
\newblock Cryptology ePrint Archive, Paper 2022/1031, 2022.
\newblock \url{https://eprint.iacr.org/2022/1031}.

\bibitem[BBC{\etalchar{+}}20]{BBCGPTTV20}
Magali Bardet, Maxime Bros, Daniel Cabarcas, Philippe Gaborit, Ray Perlner,
  Daniel Smith-Tone, Jean-Pierre Tillich, and Javier Verbel.
\newblock {Improvements of Algebraic Attacks for Solving the Rank Decoding and
  {MinRank} Problems}.
\newblock In {\em Advances in Cryptology {\textendash} {ASIACRYPT} 2020}, pages
  507--536. Springer International Publishing, 2020.

\bibitem[BESV22]{BESV22}
Emanuele Bellini, Andre Esser, Carlo Sanna, and Javier Verbel.
\newblock {MR-DSS – Smaller MinRank-based (Ring-)Signatures}.
\newblock Cryptology ePrint Archive, Paper 2022/973, 2022.
\newblock \url{https://eprint.iacr.org/2022/973}.

\bibitem[BFS99]{BFS99}
Jonathan~F Buss, Gudmund~S Frandsen, and Jeffrey~O Shallit.
\newblock {The Computational Complexity of Some Problems of Linear Algebra}.
\newblock {\em Journal of Computer and System Sciences}, 58(3):572--596, 1999.

\bibitem[BN20]{BN20}
Carsten Baum and Ariel Nof.
\newblock Concretely-efficient zero-knowledge arguments for arithmetic circuits
  and their application to lattice-based cryptography.
\newblock In Aggelos Kiayias, Markulf Kohlweiss, Petros Wallden, and Vassilis
  Zikas, editors, {\em Public-Key Cryptography -- PKC 2020}, pages 495--526,
  Cham, 2020. Springer International Publishing.

\bibitem[Cou01]{courtois}
Nicolas Courtois.
\newblock {La sécurité des primitives cryptographiques basées sur des
  problèmes algébriques multivariables MQ, IP, MinRank, HFE}, 2001.

\bibitem[Fen22]{F22}
Thibauld Feneuil.
\newblock {Building MPCitH-based Signatures from MQ, MinRank, Rank SD and PKP}.
\newblock Cryptology ePrint Archive, Paper 2022/1512, 2022.
\newblock \url{https://eprint.iacr.org/2022/1512}.

\bibitem[FJR22]{FJR22}
Thibauld Feneuil, Antoine Joux, and Matthieu Rivain.
\newblock {Syndrome Decoding in the Head: Shorter Signatures from
  Zero-Knowledge Proofs}.
\newblock Cryptology ePrint Archive, Paper 2022/188, 2022.
\newblock \url{https://eprint.iacr.org/2022/188}.

\bibitem[FR22]{FR22}
Thibauld Feneuil and Matthieu Rivain.
\newblock {Threshold Linear Secret Sharing to the Rescue of MPC-in-the-Head}.
\newblock Cryptology ePrint Archive, Paper 2022/1407, 2022.
\newblock \url{https://eprint.iacr.org/2022/1407}.

\bibitem[FSS10]{FSS10}
Jean{-}Charles Faug{\`{e}}re, Mohab {Safey El Din}, and Pierre{-}Jean
  Spaenlehauer.
\newblock Computing loci of rank defects of linear matrices using
  {G}r{\"{o}}bner bases and applications to cryptology.
\newblock In {\em International Symposium on Symbolic and Algebraic
  Computation, {ISSAC} 2010, Munich, Germany, July 25-28, 2010}, pages
  257--264, 2010.

\bibitem[FSS13]{FSS13}
Jean-Charles {Faug{\`e}re}, Mohab {Safey El Din}, and Pierre-Jean
  {Spaenlehauer}.
\newblock On the complexity of the generalized minrank problem.
\newblock {\em JSC}, 55:30--58, 2013.

\bibitem[GC00]{GB00}
Louis Goubin and Nicolas~T. Courtois.
\newblock {Cryptanalysis of the TTM Cryptosystem}.
\newblock In {\em International Conference on the Theory and Application of
  Cryptology and Information Security}, 2000.

\bibitem[GND23]{GNS23}
Sriram Gopalakrishnan, Vincent Neiger, and Mohab Safey~El Din.
\newblock Refined $f_5$ algorithms for ideals of minors of square matrices,
  2023.

\bibitem[IKOS07]{IKO}
Yuval Ishai, Eyal Kushilevitz, Rafail Ostrovsky, and Amit Sahai.
\newblock {Zero-knowledge from secure multiparty computation}.
\newblock In David S. Johnson and Uriel Feige, editors, 39th ACM STOC, pages
  21–30. ACM Press, 2007.

\bibitem[KZ20]{KZ20}
Daniel Kales and Greg Zaverucha.
\newblock {An Attack on Some Signature Schemes Constructed From Five-Pass
  Identification Schemes}.
\newblock Cryptology ePrint Archive, Paper 2020/837, 2020.
\newblock \url{https://eprint.iacr.org/2020/837}.

\bibitem[Ore33]{ore}
Oystein Ore.
\newblock {On a special class of polynomials}.
\newblock {\em Transactions of the American Mathematical Society},
  35(3):559--584, 1933.

\bibitem[PS00]{PS00}
David Pointcheval and Jacques Stern.
\newblock Security arguments for digital signatures and blind signatures.
\newblock {\em J. Cryptology}, 13:361--396, 2000.

\end{thebibliography}
\end{document}